
\RequirePackage{amsthm}

\documentclass[sn-mathphys-num]{sn-jnl}


\usepackage{anyfontsize}
\usepackage{graphicx}%
\usepackage{multirow}%
\usepackage{amsmath,amssymb,amsfonts}%
\usepackage{amsthm}%
\usepackage{mathrsfs}%
\usepackage[title]{appendix}%
\usepackage{xcolor}%
\usepackage{textcomp}%
\usepackage{manyfoot}%
\usepackage{booktabs}%
\usepackage{algorithm}%
\usepackage{algorithmicx}%
\usepackage{algpseudocode}%
\usepackage{listings}%
\usepackage{lmodern}
\usepackage{amsmath}
\usepackage{amsthm}
\usepackage{amssymb}
\usepackage{amsfonts}
\usepackage{graphicx}
\usepackage[inline]{enumitem}
\usepackage{multirow}
\usepackage{mathtools}
\usepackage{etextools}
\usepackage{xcolor}
\usepackage{subcaption}

\def \erre{\mathbb{R}}

\def \PMp{\mathcal{PM}_{\vec v}}
\def \PMp{\mathcal{PM}_{\vec v}}

\def \EE{\mathbb{E}}

\def \erre{\mathbb{R}}

\DeclarePairedDelimiter\ceil{\lceil}{\rceil}
\DeclarePairedDelimiter\floor{\lfloor}{\rfloor}

\newtheorem{mechanism}{Mechanism}


\theoremstyle{thmstyleone}%
\newtheorem{theorem}{Theorem}
%

\usepackage[symbol]{footmisc}
\theoremstyle{thmstyletwo}%
\newtheorem{example}{Example}%
\newtheorem{remark}{Remark}%

\theoremstyle{thmstylethree}%
\newtheorem{definition}{Definition}%

\raggedbottom

\begin{document}

\title[Article Title]{Mechanism Design for Locating Facilities with Capacities with Insufficient Resources\footnote[2]{A previous version of this paper appeared in the proceedings of UAI24, \cite{auricchio2024on}. In this improved version, we study the problem in higher dimension and enhance our numerical results. }}

\author*[1]{\fnm{Gennaro} \sur{Auricchio}}\email{gennaro.auricchio@unipd.it}

\author[2]{\fnm{Harry J.} \sur{Clough}}\email{hc2405@bath.ac.uk}

\author[2]{\fnm{Jie} \sur{Zhang}}\email{jz2558@bath.ac.uk}

\affil[1]{\orgdiv{Department of Mathematics}, \orgname{Università di Padova}, \city{Padua}, \state{Italy}}

\affil[2]{\orgdiv{Department of Computer Science}, \orgname{University of Bath},  \city{Bath}, \country{UK}}


\abstract{This paper explores the Mechanism Design aspects of the $m$-Capacitated Facility Location Problem where the total facility capacity is less than the number of agents. 
Following the framework outlined in \cite{aziz2020capacity}, the Social Welfare of the facility location is determined through a First-Come-First-Served (FCFS) game, in which agents compete once the facility positions are established. 
When the number of facilities is $m > 1$, the Nash Equilibrium (NE) of the FCFS game is not unique, making the utility of the agents and the concept of truthfulness unclear. 
To tackle these issues, we consider absolutely truthful mechanisms—mechanisms that prevent agents from misreporting regardless of the strategies used during the FCFS game. 
We combine this stricter truthfulness requirement with the notion of Equilibrium Stable (ES) mechanisms, which are mechanisms whose Social Welfare does not depend on the NE of the FCFS game. 
We demonstrate that the class of percentile mechanisms is absolutely truthful and identify the conditions under which they are ES. 
We also show that the approximation ratio of each ES percentile mechanism is bounded and determine its value. 
Notably, when all the facilities have the same capacity and the number of agents is sufficiently large, it is possible to achieve an approximation ratio smaller than $1+\frac{1}{2m-1}$.
Finally, we extend our study to encompass higher-dimensional problems.
Within this framework, we demonstrate that the class of ES percentile mechanisms is even more restricted and characterize the mechanisms that are both ES and absolutely truthful.
We further support our findings by empirically evaluating the performance of the mechanisms when the agents are the samples of a distribution.}

\keywords{Facility Location, Mechanism Design, Game Theory, Nash Equilibrium}



\maketitle

\section{Introduction}

The $m$-Capacitated Facility Location Problem ($m$-CFLP) extends the $m$-Facility Location Problem ($m$-FLP) by introducing a constraint that bounds the number of clients the facility can accommodate \cite{brimberg2001capacitated,pal2001facility,aardal2015approximation}.
Both the $m$-FLP and the $m$-CFLP are essential subproblems in various applications within social choice theory.
These include areas such as disaster relief \cite{doi:10.1080/13675560701561789}, where efficient allocation of resources is of critical importance; supply chain management \cite{MELO2009401}, which focuses on optimizing logistics and distribution networks; healthcare systems \cite{ahmadi2017survey}, aiming to improve service delivery and accessibility; clustering techniques \cite{hastie2009elements,auricchio2019computing}, used in data analysis and machine learning; and public facility accessibility \cite{barda1990multicriteria}, which enhances community services by effectively placing facilities to serve the population.
In its fundamental guise, the $m$-CFLP consists of determining the location of $m$ facilities starting from the positions of $n$ agents.
Each facility has a capacity limit, which describes the maximum amount of agents it can serve.
While the algorithmic aspects of this problem have been extensively studied in the literature (see \cite{brimberg2001capacitated}), the mechanism design aspects of the problem have only recently started to garner the attention of the scientific community.
In mechanism design, the Facility Location Problem and the $m$-CFLP are studied under the assumption that each agent incurs a cost to access the facilities. 
This cost is usually equal to the distance between the agent and the nearest facility.
Consequently, every agent prefers to have a facility located as close to them as possible.
In this case, if each agent is in charge of reporting its own position on the line, optimizing a communal goal is subject to manipulation driven by the agents' self-interested behaviour. 
For this reason, a key property that a mechanism should possess is \textit{truthfulness}, which ensures that no agent can gain an advantage by misrepresenting their private information. 
This stringent property, however, forces the mechanism to produce suboptimal locations, leading to an efficiency loss which is quantified by the \textit{approximation ratio} -- that is the worst-case ratio between the objective achieved by the mechanism and the optimal objective attainable \cite{nisan1999algorithmic}.
Defining efficient routines that forbid agents from manipulating is the defining issue of mechanism design.
In this paper, we study the mechanism design aspects of the $m$-CFLP.
In particular, we focus on the framework presented in \cite{aziz2020capacity}, and extend it to the case in which we have $m$ facilities whose total capacity is lower than the number of agents needing accommodation.
In this framework, the mechanism designer cannot force agents to use a specific facility, therefore the agents compete in a First-Come-First-Served (FCFS) game to determine who is accommodated by the facilities.
The overall process therefore consists of two parts.
First, the agents report their position to a mechanism, which locates the facilities.
Second, the agents compete in the FCFS game to determine their utilities.
Notice that the Social Welfare achieved by the mechanism and the utilities of the agents depend on the Nash Equilibria (NE) of the FCFS game induced by the facilities' placements.
When we need to place a single facility, that is $m=1$, the FCFS game has always a unique NE.
When $m\ge 2$, however, the NE of the FCFS game is no longer unique, posing a series of challenges: if the NE is not unique, the agents' utilities and thus the Social Welfare achieved by a facility placement have different values depending on the equilibrium of the FCFS game.
As a consequence, the approximation ratio also depends on the specific NE.
Furthermore, the classic definition of the truthful mechanism is no longer suitable for this problem as it does not consider the different strategies that the agents may adopt during the FCFS game.
Addressing these issues are a major challenge for this problem, and thus they were left as an open problem in \cite{aziz2020capacity}.
A previous version of this paper appeared in the proceedings of UAI 2024.
In this improved version, we fully characterize the set of feasible percentile mechanisms by studying the All-In-One and Side-By-Side mechanism and extend our study to higher dimensional problems.
Furthermore, we enhance our numerical results by adding new metrics to study how well the best percentile mechanisms perform when the agents are assumed to be sampled from a probability distribution.

\subsection*{Our Contribution}

In this paper, we study the mechanism design aspects of the $m$-CFLP when the total capacity of the facilities is less than the number of agents.
In particular, we extend the framework presented in \cite{aziz2020capacity} to encompass problems in which there is more than one capacitated facility to locate.
First, we show that, regardless of how we locate the facilities, the FCFS game induced by the location has at least one pure NE.
We then present a notion of truthfulness that accounts for the different strategies the agents can adopt during the FCFS game, which we name \textit{absolute truthfulness}.
Finally, we introduce the class of Equilibrium Stable (ES) mechanisms, i.e. mechanisms whose output induces a FCFS game in which every NE achieves the same Social Welfare. 
Within this framework, we study the percentile mechanisms \cite{sui2013analysis}.
We show that every percentile mechanism is  absolutely truthful and then fully characterize the set of conditions under which a percentile mechanism is ES and compute its approximation ratio.
First, we consider the case $m=2$ and show that an absolutely truthful and ES percentile mechanism exists.
In particular, we show that there are three categories of ES percentile mechanisms: 
\begin{enumerate*}[label=(\roman*)]
\item the All-In-One (AIO) mechanisms, which place all the facilities at the same position, i.e. $y_1=y_2=x_i$; 
\item the Side-By-Side (SBS) mechanisms, which places all the facilities at two consecutive agents' positions, i.e. $y_1=x_{i}$ and $y_{2}=x_{i+1}$; and 
\item the Wide-Gap (WG) mechanisms, which places the facilities at two well-separated positions, i.e. $y_1=x_i$ and $y_2=x_j$ where $i+1<j$.
\end{enumerate*}
We then characterize the approximation ratio of a percentile mechanism as a function of the facilities' capacities and the vector inducing the percentile mechanism.
Consequentially, we determine the best percentile mechanism tailored to the number of agents $n$ and the capacities of the facilities $k_1$ and $k_2$.
We show that the best approximation ratio that an ES percentile mechanism placing two facilities can achieve is $\frac{4}{3}$, which occurs when $k_1=k_2=k$ and $n\ge 3k$.
We then consider the case in which $m>2$.
In this framework, a percentile mechanism that is ES and places the facilities at more than two different locations might not exist.
However, when all the facilities have the same capacity and $n\ge (2k-1)m$ holds, there exists a  percentile mechanism whose approximation ratio is less than $1+\frac{1}{m-1}$.
This result has two interesting consequences: \begin{enumerate*}[label=(\roman*)]
    \item it shows that, under suitable assumptions, the percentile mechanisms are asymptotically optimal with respect to $m$ and
    \item it highlights the differences between the $m$-CFLP and the $m$-FLP. Indeed, in the classic framework, any percentile mechanism has unbounded approximation ratio whenever $m>2$, \cite{walsh2020strategy,fotakis2014power}.
\end{enumerate*} 
We then extend our study to higher-dimensional problems. 
We demonstrate that when agents are supported in a space that has two or more dimensions, the set ES percentile mechanisms becomes even more restricted.
Specifically, a percentile mechanism is ES if and only if it places all the facilities at the same location or distributes all the facilities between two different points. 
Despite their differences, the worst-case analysis of these two types of percentile mechanisms yield similar worst-case guarantees.
Lastly, we empirically study the behaviour of the best percentile mechanisms under the assumption that the agents are distributed according to a distribution $\mu$.
We focus on the case in which we have two facilities, since it is the case in which the gap between $1$ and the approximation ratio of the best possible percentile mechanism is largest.
We conduct our analysis using two different metrics: the \textit{Bayesian approximation ratio}, which compares the expected cost of the mechanism and the expected optimal cost \cite{hartline2009simple} and the \textit{average-case ratio}, which measures the average ratio between the mechanism cost and the optimal cost \cite{DBLP:journals/iandc/DengGZ22}.
From our analysis, we observe that, regardless of the metric, when the agents follow a distribution, the performances of the ES mechanism are close to optimal, regardless of the distribution.

\subsection*{Related Work}

The Mechanism Design aspects of the $m$-FLP were firstly studied in \cite{procaccia2013approximate}, where the authors studied the problem of placing a set of facilities amongst $n$ self-interested agents who want to a facility located as close as possible to their real position.
In this pioneering work, the authors studied the problem of placing a facility that minimizes the sum of all the agents costs.
Following this seminal work, various mechanisms with constant approximation ratios for placing one or two facilities on lines \cite{DBLP:journals/aamas/Filos-RatsikasL17},  trees \cite{DBLP:conf/sigecom/FeldmanW13,DBLP:conf/atal/FilimonovM21}, circles \cite{DBLP:conf/sigecom/LuSWZ10,DBLP:conf/wine/LuWZ09}, general graphs \cite{10.2307/40800845,DBLP:conf/sigecom/DokowFMN12}, and metric spaces \cite{DBLP:conf/sagt/Meir19,DBLP:conf/sigecom/TangYZ20} were introduced.
Crucially, all these positive results are limited to scenarios where the number of agents is restricted or the number of facilities is either $1$ or $2$ and the objective is to minimize the total agents costs.
For a comprehensive survey of the mechanism design aspects of the FLP, we refer to \cite{chan2021mechanism}.
The $m$-Capacitated Facility Location Problem ($m$-CFLP) is a variation of the $m$-FLP in which each facility has a capacity limit.
The first game theoretical framework for the $m$-CFLP was presented in \cite{aziz2020facility}.
In this paper, the authors studied the case in which there are at least two facilities whose total capacity is enough to accommodate all the agents and the mechanism designer has to decide where to place the facilities and which agent can use them, so that the mechanism must elicit the positions and the agent-to-facility matching.
Following this initial study, in \cite{ijcai2022p75} the authors proposed a more theoretical analysis of the problem, while in \cite{auricchio2024facility} it was shown that it is possible to define deterministic mechanisms with bounded approximation ratio when all the facilities have the same capacities and the number of agents is equal to the total capacities of the facilities.
Lastly, the $m$-CFLP has been studied from a Bayesian mechanism design perspective in \cite{auricchio2023extended}.
In this paper, we consider an alternative game theoretical framework for the $m$-CFLP, firstly introduced in \cite{aziz2020capacity}.
This framework differs from the one proposed in \cite{aziz2020facility} for two main reasons:
\begin{enumerate*}[label=(\roman*)]
    \item the total capacity of the facilities is lower than the total number of agents, thus part of the agents cannot be accommodated and
    \item the mechanism designer does not enforce an agent-to-facility assignment. Thus, after the positions of the facilities are elicited, the agents compete in a First-Come-First-Served (FCFS) game to access the facilities.
\end{enumerate*}
When $m=1$, the FCFS game is trivial as the agents accommodated by the facility are the ones that are closer to the facility.
When $m>1$, designing mechanisms becomes much more complicated as, for example, the Nash Equilibrium (NE) of the FCFS game is no longer unique.
As a consequence the utility of every agent depends on the specific NE of the game, making the classic notion of trustfulness unfit for this framework.

\section{Setting Statement}
\label{sec:settingstatement}

Let $\vec x\in [0,1]^n$ be the position of $n$ agents in the interval $[0,1]$.
We denote with $\vec k=(k_1,\dots,k_m)$ the $m$-dimensional vector containing all the capacities of the $m$ facilities, so that $k_j$ is the maximum number of agents that the $j$-th facility can accommodate.
We assume that the total capacity of the facilities is less than the number of agents, hence $\sum_{j\in[m]}k_j<n$.
In this case, a mechanism is a function $M:[0,1]^n\to \erre^m$ that maps a vector containing the agents' reports to a facility location $\vec y=(y_1,\dots,y_m)$, where $y_j$ is the position of the facility with capacity $k_j$.
After the mechanism places the facilities, agents compete in a First-Come-First-Served (FCFS) game to access the facilities.

\subsection{First-Come-First-Served Game}
Let $\vec y=(y_1,\dots,y_m)$ be a vector containing the position of the facilities to locate, that is the facility with capacity $k_j$ is located at $y_j$.
In what follows, we implicitly assume that there is an internal ordering of the agents to break ties.
Then, given $\vec x\in [0,1]^n$ the vector containing the positions of the $n$ agents on the interval $[0,1]$, the FCFS game induced by the facility location $\vec y$ is as follows:
\begin{enumerate}[label=(\roman*)]
    \item Each agent selects one of the facilities, so that the set of strategies of each agent is the set $[m]:=\{1,2,\dots,m\}$.
    We denote with $\vec s\in[m]^n$ the vector containing a set of pure strategies. For every $\vec s$, we denote with $\mathcal{S}_j\subset [n]$ the set of agents that selected strategy $j$. 
    \item  Denoted with $d_{i,j}=|x_i-y_j|$ the distance of the agent $i$ from the location of the facility they selected, we define $T_j\subset \mathcal{S}_j$ as the set containing the agents in $\mathcal{S}_j$ whose value $d_{i,j}$ is in among the $k_j$ lowest values.
    Break ties according to the prefixed priority rule.
    \item Finally, the utility of agent $i$ is defined as follows
    \[
    u_i(\vec x,\vec y;\vec s)=
    \begin{cases}
        1-|x_i-y_j| \quad\quad \text{if} \quad  i\in T_j \\
        0 \quad\quad\quad\quad\quad\quad\quad \text{otherwise}
    \end{cases}.
    \]
\end{enumerate}

First, we show that every FCFS game has at least one pure Nash Equilibrium (NE).

\begin{theorem}
\label{thm:NEexistance}
    For every $\vec x\in[0,1]^n$, every $\vec y\in[0,1]^m$, and every capacity vector $\vec k$, the FCFS game associated with $\vec x$, $\vec y$, and $\vec k$ admits at least one pure Nash Equilibrium.    
\end{theorem}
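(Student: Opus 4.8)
The plan is to recast the FCFS game as a many-to-one two-sided matching market and to exhibit a pure NE through a stable matching. I regard each agent as a ``resident'' and each facility $j$ as a ``hospital'' with quota $k_j$. Since an accommodated agent at facility $j$ receives utility $1-d_{i,j}$, every agent prefers closer facilities and strictly prefers being accommodated somewhere to being left out; symmetrically, facility $j$ ranks agents by increasing $d_{i,j}$, i.e. it prefers the agents closest to it. The prefixed priority rule breaks the finitely many distance ties, so both sides have strict preferences, and this is exactly a hospitals/residents instance. Hence a stable matching $\mu$ exists, either by the Gale--Shapley deferred-acceptance algorithm or by the greedy procedure that scans the pairs $(i,j)$ in order of increasing $d_{i,j}$ and assigns $i$ to $j$ whenever $i$ is still free and $j$ has a free slot.

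First I would record two structural facts about $\mu$ that use the hypothesis $\sum_{j\in[m]}k_j<n$. Because the total capacity is strictly below $n$, at least one agent is unmatched, and stability then forces every facility to be saturated: if some facility $j$ had a free slot, any unmatched agent together with $j$ would form a blocking pair. Consequently $|\mu^{-1}(j)|=k_j$ for all $j$, and for every unmatched agent $i$ and every facility $j$ the absence of a blocking pair means that $i$ is strictly farther from $j$ than each of the $k_j$ agents currently assigned to $j$.

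Next I would turn $\mu$ into a strategy profile $\vec s$: each matched agent plays $\mu(i)$, and each unmatched agent plays an arbitrary facility. I would then check that the FCFS outcome under $\vec s$ reproduces $\mu$. Indeed $\mathcal{S}_j$ consists of $\mu^{-1}(j)$ possibly augmented by some unmatched agents, and since those unmatched agents are farther from $j$ than every member of $\mu^{-1}(j)$, the $k_j$ closest agents in $\mathcal{S}_j$ are exactly $\mu^{-1}(j)$; thus $T_j=\mu^{-1}(j)$, the matched agents are accommodated, and the unmatched ones obtain utility $0$.

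Finally I would verify that $\vec s$ is a NE by ruling out profitable unilateral deviations, which is the crux of the argument. A matched agent $i$ assigned to $j$ would gain only by moving to some $j'$ with $d_{i,j'}<d_{i,j}$ and being accommodated there; after the move $\mathcal{S}_{j'}$ still contains all of $\mu^{-1}(j')$, so $i$ is accommodated only if $d_{i,j'}<d_{i',j'}$ for some $i'\in\mu^{-1}(j')$, and then $(i,j')$ would be a blocking pair of $\mu$, a contradiction. An unmatched agent $i$ would gain by moving to any $j'$ at which it is accommodated, which again requires $i$ to be closer to $j'$ than some member of $\mu^{-1}(j')$, contradicting the structural fact above. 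Hence no agent can improve, and $\vec s$ is a pure Nash Equilibrium. I expect the only delicate point to be the bookkeeping forced by the rule that every agent must select some facility: the unmatched agents cannot be dropped from the game, so it is essential that saturation of the facilities together with stability prevents these agents from either displacing an accommodated agent or profiting from a deviation.
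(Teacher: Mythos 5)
Your proposal is correct and is, at its core, the same argument as the paper's: the paper's iterative routine is exactly the greedy procedure you describe (scan agent--facility pairs in increasing order of $d_{i,j}$, assign while slots remain), and its verification that no agent gains by deviating is precisely your no-blocking-pair check, so the stable-matching vocabulary is a reframing rather than a different proof. One small caveat: your claim that every agent \emph{strictly} prefers accommodation to exclusion fails when $d_{i,j}=1$ (utility $1-d_{i,j}=0$), so the saturation lemma can fail in that degenerate case; this is harmless for the theorem, since any deviation exploiting such a slot yields zero gain, but you should either adopt the convention that all facilities are acceptable or note explicitly that zero-utility accommodations cannot create profitable deviations.
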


\begin{proof}
Let $\vec x$ be the vector containing the position of the agents and let $\vec y$ be the position of the facilities.
We denote with $k_j$ the capacity of the facility located at $y_j$ for every $j\in[m]$.
In what follows, we assume that the set of agents has an inner ordering that decides how to break ties.
Let us define $\mathcal{D}$ the set containing all the distances agents to facility, that is $\mathcal{D}=\{|x_i-y_j|\}_{i\in[n],j\in[m]}$.
Let $\vec c,\vec s\in\erre^{m}$ be two null vectors, that is $c_j=s_j=0$ for every $j\in[m]$.
We now construct a Nash Equilibrium through the following iterative routine.
\begin{enumerate}
    \item Let $d$ be the minimum of the elements in $\mathcal{D}$.
    Up to a tie, there exist a couple $(i_1,j_1)\in[n]\times[m]$ such that $d=|x_{i_1}-y_{j_1}|$. 
    We set $c_{j_1}=c_{j_1}+1$, $s_{i_1}=j_{1}$, and remove all the elements of the form $|x_{i_1}-y_j|$ from $\mathcal{D}$.
    Then, if $c_{j_1}=k_{j_1}$, we remove from $\mathcal{D}$ all the elements of the form $|x_i-y_{j_1}|$.

    \item We repeat the routine of point $(1)$ until $\mathcal{D}$ becomes empty.

    \item If $s_i=0$ for some $i\in[n]$, we set them to be equal to $1$.
\end{enumerate}

Since $\mathcal{D}$ is discrete, the routine terminates in finite number of iterations and the output is a vector containing a set of agents' pure strategies.
We now show that the output of the routine $\vec s$ is a Nash Equilibrium by proving that no agent $i$ can increase its utility by deviating from playing $s_i$.
Toward a contradiction, assume that an agent $i$ can increase its utility by playing $s_i'$ rather than $s_i$.
By definition of $s_i$, we have that if $|x_i-y_{s_i'}|<|x_i-y_{s_i}|$, then there are at least $k_{s_i'}$ agents that are closer to $y_{s_i'}$ or that have a higher priority order than agent $i$ and play strategy $s_i'$.
Thus the agent cannot gain a benefit from deviating from $s_i$, which proves that $\vec s$ is a pure Nash Equilibrium.
\end{proof}

When the vector $\vec k$ is clear from the context, we denote the set of all the pure Nash Equilibria with $NE(\vec x,\vec y)$.
The Social Welfare (SW) of the facility location $\vec y$ according to $\gamma\in NE(\vec x,\vec y)$ is defined as the sum of all the agents' utilities, that is $SW_{\gamma}(\vec x,\vec y)=\sum_{i\in [n]}u_i(\vec x,\vec y;\gamma)$.
Notice that when $m=1$, the Nash Equilibrium of the FCFS game is unique, since every agent can play only one strategy, hence the SW of the game is well defined.
This is no longer true when we need to place more than one facility.
In this case, the SW of the game changes depending on the specific Nash Equilibrium.

\begin{example}
\label{ex:1}
    Let us consider the case in which we have 5 agents and need to place two facilities with $k_1=k_2=2$.
    Let $\vec x:=(0,0.3,0.4,0.5,0.9)\in [0,1]^5$ be the vector containing the agents' positions.
    If $\vec y=(0.3,0.5)$, both $\gamma_1=(1,1,2,2,2)$ and $\gamma_2=(1,1,1,2,2)$ are pure NE of the FCFS game.
    However, the SW of the FCFS game depends on the specific NE, indeed $SW_{\gamma_1}(\vec x,\vec y)=3.6> 3.5=SW_{\gamma_2}(\vec x,\vec y)$.
    Moreover, the utility the agent located at $0.9$ is zero or $0.6$ depending on the equilibrium.
\end{example}

\begin{figure}[t]
    \centering
    \begin{subfigure}{\textwidth}
        \centering
        \includegraphics[width=\textwidth]{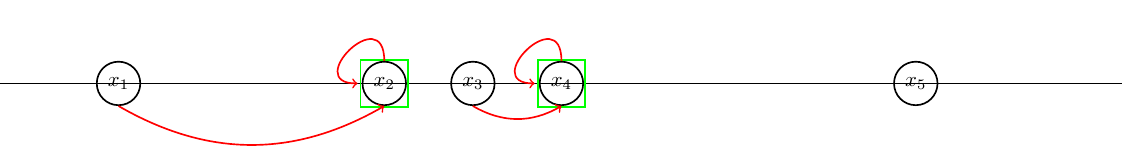}
        \caption{The Nash Equilibrium $\gamma_1$ from Example \ref{ex:1}.}
        \label{fig:first}
    \end{subfigure}
    
    \vspace{0.2cm} 

    \begin{subfigure}{\textwidth}
        \centering
        \includegraphics[width=\textwidth]{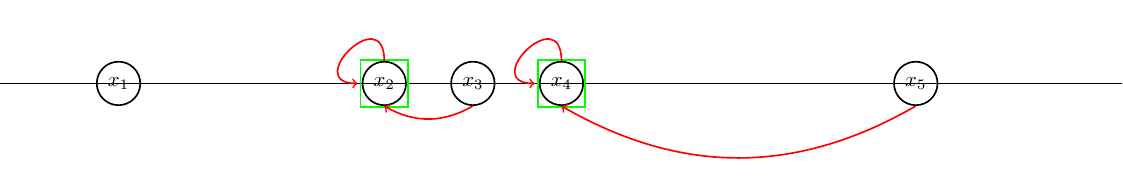}
        \caption{The Nash Equilibrium $\gamma_2$ from Example \ref{ex:1}.}
        \label{fig:second}
    \end{subfigure}
    
    \caption{The two Nash Equilibria described in Example \ref{ex:1}. The circles represents the agents, the green squares the facilities and the red arrows the strategies played by the agents getting a non-null utility.}
    \label{fig:overall}
\end{figure}



%
Lastly, we define the optimal SW achievable on the instance $\vec x$ as 
\begin{equation}
    \label{eq:optSW}
    SW_{opt}(\vec x)=\sup_{\vec y\in[0,1]^m}\sup_{\gamma\in NE(\vec x,\vec y)}SW_{\gamma}(\vec x,\vec y).
\end{equation}

\subsection{Mechanism Design Aspects of the problem}
A key property that a mechanism $M$ must possess is truthfulness, which ensures that no agent can increase their utility by misreporting their position.
As shown in Example \ref{ex:1}, the utility of an agent depends on the other agents' strategies.
For this reason, we employ a stronger notion of truthfulness that keeps track of the different strategies that agents can play in the FCFS game.

\begin{definition}
\label{def:truthfulness}
    A mechanism $M$ is \textit{absolutely truthful} if no agent increases its utility by misreporting, regardless of the strategies played by other agents.
    More formally, for every agent $i\in [n]$, $\vec x\in[0,1]^n$, and $\vec s_{-i}\in [m]^{n-1}$, we have 
    \[
    \max_{s_i\in[m]}u_i(\vec x,M(\vec x); s_i,\vec s_{-i})\ge \max_{s_i'\in[m]}u_i(\vec x ,M(\vec x');s_i',\vec s_{-i}),
    \]
    where \begin{enumerate*}[label=(\roman*)]
       \item $\vec x'=(x_i',\vec x_{-i})$ for every $x_i'\in[0,1]$,
       \item $\vec x_{-i}$ and $\vec s_{-i}$, are the vectors containing the positions and strategies of the other $n-1$ agents, respectively.
    \end{enumerate*}
\end{definition}
In general, the FCFS game induced by the output of a mechanism has multiple NE, hence the SW of the mechanism is not unique.
For this reason, we introduce the notion of Equilibrium Stable mechanism.

\begin{definition}
    An absolutely truthful mechanism $M$ is said Equilibrium Stable (ES) if, for every $\vec x$, we have that
    \[
        SW(\vec x,M(\vec x);\gamma)=SW(\vec x,M(\vec x);\gamma')
    \]
    for every Nash Equilibria $\gamma,\gamma'\in NE(\vec x,M(\vec x))$.
\end{definition}

An absolutely truthful mechanism is not necessarily ES.
For example, consider the mechanism that places the facilities at $(0.3,0.5)$ regardless of the agents' reports.
No agent can manipulate the outcome of the mechanism by changing their reports, however, going back to Example \ref{ex:1}, we have that the mechanism induces two NE with different SW on the instance $(0,0.3,0.4,0.5,0.9)$.
Finally, since we are considering the utilities of the agents, we define the approximation ratio of an ES mechanism as the worst-case ratio of the optimal SW and the SW achieved by the mechanism.

\begin{definition}
    Let $M$ be an absolutely truthful and ES mechanism.
    We define the approximation ratio of $M$ as
    \[
    ar(M)=\sup_{\vec x \in [0,1]^n}\frac{SW_{opt}(\vec x)}{SW_{M}(\vec x)},
    \]
    where $SW_{M}$ is the SW value achieved by $M$ on any of the NE in $NE(\vec x, M(\vec x))$ and $SW_{opt}$ is defined in \eqref{eq:optSW}.
\end{definition}

\section{Absolutely Truthful and Equilibrium Stable Mechanisms to place more than one facility}

In this section, we study the class of percentile mechanisms and characterize under which conditions a percentile mechanism is absolutely truthful and ES.

\begin{definition}[Percentile Mechanism, \cite{sui2013analysis}]
    Given a percentile vector $\vec v\in[0,1]^m$, the routine of the percentile vector associated with $\vec v$, namely $\PMp$, is as follows:
\begin{enumerate*}[label=(\roman*)]
    \item The mechanism designer collects all the reports of the agents $\{x_1,x_2,\dots,x_n\}$ and sorts them in non-decreasing order, so that $x_i\le x_{i+1}$.
    \item The mechanism then places the $m$ facilities at the positions $y_j=x_{i_j}$, where $i_j=\floor{(n-1)v_j}+1$ for every $j\in[m]$.
\end{enumerate*}
\end{definition}

All percentile mechanisms are absolutely truthful.

\begin{theorem}
\label{thm:perctruth}
   $\PMp$ is absolutely truthful for every $\vec v$.
\end{theorem}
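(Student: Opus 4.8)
The plan is to fix an agent $i$, an arbitrary profile of opponents' strategies $\vec s_{-i}$, and an arbitrary misreport $x_i'$, and to compare agent $i$'s best-response utility when it reports truthfully against its best-response utility when it reports $x_i'$. By the symmetry of the construction we may assume $x_i' > x_i$; the case $x_i' < x_i$ is handled by a mirror argument. The starting observation is that $\PMp$ always places facility $j$ at the order statistic of a \emph{fixed} rank $i_j = \floor{(n-1)v_j}+1$, which does not depend on the reports. Hence comparing the truthful location $\vec y = \PMp(\vec x)$ with the post-deviation location $\vec y' = \PMp(\vec x')$ amounts to comparing the same order statistics before and after increasing a single coordinate.

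First I would establish the two monotonicity facts I need. Since increasing one coordinate weakly increases every order statistic, $y_j \le y_j'$ for every $j\in[m]$. Moreover, raising agent $i$'s report leaves unchanged every order statistic of rank strictly below agent $i$'s own rank, because the smallest reports of the profile are unaffected when agent $i$ moves to the right. Combining these two facts yields the key geometric inequality $|x_i - y_j| \le |x_i - y_j'|$ for every facility $j$: facilities sitting to the right of $x_i$ can only move farther away, facilities to the left of $x_i$ do not move at all, and the facility possibly sitting exactly at $x_i$ is already at distance $0$.

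Next I would argue that the deviation cannot create a slot that agent $i$ could not already secure truthfully. Let $j'$ be the facility realising agent $i$'s best response after the deviation, so that agent $i$ secures a place at $j'$ in the FCFS game induced by $\vec y'$ and collects $1-|x_i-y_{j'}'|$. If $y_{j'}$ lies strictly to the left of $x_i$, then by the second monotonicity fact the facility did not move and the competition at $j'$ is literally identical, so agent $i$ secures the same place truthfully. Otherwise $x_i \le y_{j'} \le y_{j'}'$, i.e. agent $i$ lies to the left of \emph{both} positions of facility $j'$; in this configuration the relation ``competitor $i''$ is at least as close to the facility as agent $i$'' is preserved when the facility slides from $y_{j'}$ to $y_{j'}'$, so the set of opponents with $s_{i''}=j'$ that outrank agent $i$ truthfully is contained in the corresponding set after the deviation, with ties resolved by the fixed priority order. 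Since the latter set has fewer than $k_{j'}$ elements, so does the former, and agent $i$ again secures a place at $j'$ when reporting truthfully.

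Putting these together, playing $j'$ after reporting truthfully secures agent $i$ a place and yields utility $1-|x_i-y_{j'}|\ge 1-|x_i-y_{j'}'|$, which is at least its best post-deviation utility; this is exactly the inequality demanded by Definition~\ref{def:truthfulness}. I expect the main obstacle to be precisely this last step: because relocating a facility changes the distances of \emph{all} agents simultaneously, not only agent $i$'s, one cannot naively assert that a slot secured after deviating was already securable truthfully. The resolution is the one-sidedness observation above, namely that agent $i$ sits on a single side of both positions of the relevant facility and that facilities left of $x_i$ are frozen, which together make the competition monotone in the facility's displacement. Handling the boundary cases where several reports coincide is routine once the fixed tie-breaking priority is invoked.
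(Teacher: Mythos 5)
Your proof is correct and follows essentially the same route as the paper's: both exploit the monotonicity of the order statistics under a unilateral change of report (facilities on the misreport's side move away, the others stay put) to obtain $|x_i-y_j|\le |x_i-y_j'|$ for every $j$, and then conclude that any facility granting agent $i$ a slot after the deviation would also grant it one under truthful reporting, at a weakly smaller distance. If anything, your one-sidedness argument for the containment of the sets of outranking competitors spells out rigorously the step that the paper merely asserts (its claim $F_i(\vec y\,')\subset F_i(\vec y)$), so your write-up is slightly more complete than the original.
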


\begin{proof}
Toward a contradiction, let $\PMp$ be a percentile vector such that, on instance $\vec x\in[0,1]^m$, the agent whose real position is $x_i$ can manipulate by reporting $x_i'\in[0,1]$.
Let $\vec y$ be the output of $\PMp$ in the truthful input and let $\vec y\,'$ be the output of $\PMp$ after the agent manipulation.
If $x_i'\le x_i$, we have that the position of the facilities that $\PMp$ places on the left of $x_i$ move further to the left of $x_i$.
Each facility that was placed by $\PMp$ on the right of $x_i$ does not change position.
Thus it holds $|x_i-y_j|\le|x_i-y_j'|$ for every $j\in[m]$.
Finally, let $\vec s_{-i}\in[m]^{n-1}$ be a vector containing the strategies of the other agents and let us define with $F_i(\vec z)\subset [m]$ the set of strategies that give a non-null utility to the agent at $x_i$ when the facilities are located at $\vec z$.
Since $|x_i-y_j|\le|x_i-y_j'|$ for every $j\in[m]$, we have that $F_i(\vec y')\subset F_i(\vec y)$.
To conclude, notice that for every $s_i\in F_i(\vec y')$ the utility of the manipulative agent decreases, as all the distances from the facilities have increased after the manipulation, which concludes the proof.
\end{proof}

Unfortunately, not every percentile mechanism is ES: consider the situation represented in Example \ref{ex:1}: the position of the facilities are the output of the $\PMp$ with $\vec v=(0.25,0.75)$, however, different NE induce different SW values.
In what follows, we characterize the set of percentile mechanisms that are ES and study their approximation ratio.
Owing to this characterization, we identify the ES percentile mechanism with the lowest approximation ratio.
For the sake of simplicity, we start our discussion from the case in which $m=2$.

\subsection{Mechanisms to Place Two Facilities}
\label{sec:twofacilities}

First, we study the case in which we place two facilities.
We denote with $k_1$ and $k_2$ the capacities of the two facilities and assume that $k_1+k_2< n$.
Without loss of generality, let $k_1 \ge k_2$.
We show that each ES percentile mechanism belong to one of the following three categories: 
\begin{enumerate*}[label=(\roman*)]
\item the All-In-One (AIO) mechanisms, that place all the facilities at the same spot, that is $y_1=y_2$;
\item the Side-By-Side (SBS) mechanisms, that places the facilities at two adjacent positions, that is $y_1=x_i$ and $y_2=x_{i+1}$ where $i\in [n-1]$; and
\item the Wide-Gap (WG) mechanism, which places the facilities at two agents positions that are well-separated, i.e. $y_1=x_i$ and $y_2=x_j$ where $i+1<j$.
\end{enumerate*}
Notice that each percentile mechanism belongs to exactly one of these three sub-classes, making the partition complete.
We start from the Wide-Gap mechanisms, as they are the more interesting class of ES percentile mechanisms.

\subsubsection{Wide-Gap mechanisms}

First, we show that for every $k_1$ and $k_2$, there exists at least a WG percentile mechanism that is ES.
Moreover, for every $k_1$ and $k_2$, we compute the approximation ratio of every ES percentile mechanism and characterize the mechanism achieving the lowest approximation ratio.
Notice that by definition of WG mechanisms, and without loss of generality, we assume that $v_1<v_2$.
Thus the facility with the highest capacity is always located on the left of the less capacious facility.
The case in which $v_2<v_1$ is symmetric and thus it does not need to be considered.

\begin{theorem}
\label{thm:characterization2facilities}
    Let $\vec v=(v_1,v_2)\in[0,1]^2$ be a percentile vector and let $\PMp$ be its associated percentile mechanism.
    Let $k_1,k_2\in \mathbb{N}$ and let $n\in\mathbb{N}$ be such that $k_1+k_2 < n$ and $\floor{v_2(n-1)}-\floor{v_1(n-1)}>1$.
    Then, we have that $\PMp$ is ES if and only if 
    \begin{equation}
        \label{eq:v_feasible}
        \floor{v_2(n-1)}-\floor{v_1(n-1)}\ge k_2+k_1-1.
    \end{equation}
\end{theorem}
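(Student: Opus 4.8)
The plan is to reduce the equilibrium-stability question to a statement about which agents get served. First I would record two preliminaries. Since $k_1+k_2<n$, in every Nash equilibrium $\gamma\in NE(\vec x,\vec y)$ both facilities are filled to capacity: if a facility had spare room, any unaccommodated agent (one always exists because $n>k_1+k_2$) could deviate to it and obtain utility that is strictly positive unless it sits at distance $1$, a degenerate case that does not affect the Social Welfare. Hence exactly $k_1+k_2$ agents are served, and writing $d_i^\gamma$ for the distance from a served agent $i$ to its facility we have $SW_\gamma(\vec x,\vec y)=(k_1+k_2)-\sum_i d_i^\gamma$. Thus $\PMp$ is ES if and only if the multiset of served distances is the same in every equilibrium. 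Next I would translate the hypothesis: writing $a=\floor{v_1(n-1)}+1$ and $b=\floor{v_2(n-1)}+1$ for the two facility indices, the facilities sit at $x_a,x_b$ and \eqref{eq:v_feasible} reads $b-a\ge k_1+k_2-1$. Let $L$ be the block of the $k_1$ agents closest to $x_a$ and $R$ the block of the $k_2$ agents closest to $x_b$; since positions are sorted these are index-intervals containing $a$ and $b$, with $\max L\le a+k_1-1$ and $\min R\ge b-k_2+1$. The key observation is that $b-a\ge k_1+k_2-1$ forces $\max L<\min R$ for \emph{every} position profile with these indices, so $L$ and $R$ are disjoint with $L$ entirely to the left of $R$; conversely, when $b-a\le k_1+k_2-2$ one can arrange positions so that $L$ and $R$ overlap.

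For sufficiency I would show that under this separation every equilibrium serves exactly $L$ through facility $1$ and exactly $R$ through facility $2$, so the served distances are the fixed numbers $\{|x_i-x_a|:i\in L\}\cup\{|x_i-x_b|:i\in R\}$ and $SW_\gamma$ is constant. The engine is a blocking argument: because $R$ consists of the $k_2$ agents closest to $x_b$ and is disjoint from $L$, every agent outside $R$ is strictly farther from $x_b$ than every agent of $R$, so no agent of $L$ (indeed no non-$R$ agent) can ever displace an agent of $R$ at facility $2$; symmetrically no non-$L$ agent can displace an agent of $L$ at facility $1$. I would combine this with the equilibrium conditions—an unserved agent that could displace someone at either facility would deviate, while agent $a$ (resp. $b$), being at distance $0$ from its facility, is always served—to conclude, via a separating/exchange argument, that $T_1=L$ and $T_2=R$ in every $\gamma$.

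For necessity I would argue contrapositively. Assuming $1<b-a\le k_1+k_2-2$, the number of agents in the index range $[a,b]$ is $b-a+1\le k_1+k_2-1<k_1+k_2$, so the two facilities cannot be filled using only these agents and must recruit ``spillover'' agents from outside. Generalizing Example \ref{ex:1}, I would place the agents of $[a,b]$ in a tight cluster around the facilities, together with a single contested agent strictly between them and two spillover agents at deliberately different distances, one far to the left and one far to the right. This yields one equilibrium in which facility $1$ absorbs the contested agent and the left spillover agent, and another in which facility $2$ absorbs the contested agent and the right spillover agent; since the two spillover agents lie at different distances, these equilibria have different total served distance, hence different $SW$, so $\PMp$ is not ES.

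The hard part will be the sufficiency step: ruling out \emph{all} alternative equilibria. The delicate point is that an agent of $L$ may actually be closer to $x_b$ than to $x_a$, so one cannot simply split agents by the midpoint, and a naive single-deviation analysis must be promoted to a global exchange argument that controls possible chains of reassignments between the two facilities. The separation $\max L<\min R$ supplied by \eqref{eq:v_feasible} is precisely what makes the blocking argument airtight and forbids such chains, so the crux is to leverage it cleanly rather than through an unwieldy case analysis.
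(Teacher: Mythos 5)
Your proposal is correct and follows essentially the same route as the paper's proof: your disjoint blocks $L,R$ with $\max L<\min R$ are the index version of the paper's minimal radii $r_1,r_2$ satisfying $r_1+r_2\le|y_1-y_2|$, your blocking/exchange argument is the paper's displacement argument (the paper resolves the tie/midpoint subtlety you flag by noting that agents at distance exactly $r_j$ may swap facilities without changing total utility, so the conclusion should be equality of the served-distance multisets rather than $T_1=L$, $T_2=R$ exactly --- which your multiset framing already accommodates), and your necessity construction is precisely the paper's instance with $i_1-1$ agents at $0$, one agent at $0.4$, the middle agents at $0.5$, one at $0.6$, and the rest at $0.9$. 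The only detail to adjust is that spillover agents need not exist on both sides (e.g.\ $v_1=0$ leaves no agents to the left of $y_1$), but the construction still works one-sided because the outer agents' distances to the two facilities differ.
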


\begin{proof}
    First, we show that condition \eqref{eq:v_feasible} is sufficient to make a percentile mechanism ES.
    If $\vec v$ satisfies \eqref{eq:v_feasible}, there are always $k_1+k_2$ agents such that $y_1\le x_i\le y_2$, where $y_1$ and $y_2$ are the two facility locations returned by the mechanism.
    Notice that if $y_1=y_2$, then the two facilities share the position with $k_1+k_2$ agents, hence the SW of any NE is equal to $k_1+k_2$.
    Assume now that $y_1<y_2$.
    Let us then define $r_j$ the minimal values such that $B_{r_j}(y_j)\cap\{x_i\}_{i\in[n]}$ has cardinality larger or equal to $k_j$\footnote{Here $B_r(y)$ denotes the ball centered in $y$ of radius $r$.}.
    Since there are at least $k_1+k_2$ agents in $[y_1,y_2]$, we have that $r_1+r_2\le |y_1-y_2|$.
    According to every NE, agents that do not belong to either $B_{r_1}(y_1)$ or $B_{r_2}(y_2)$ have utility equal to $0$.
    Indeed, if $\vec s\in NE(\vec x, \PMp(\vec c))$ is such that $x_i\notin B_{r_1}(y_1)\cap B_{r_2}(y_2)$ gets accommodated by $y_1$, we would have that at least one agent $x_j\in B_{r_1}(y_1)$ has null utility, hence $s_j=2$.
    However, if agent $x_j$ can increase its utility by changing its strategy to $s_j=1$, hence $\vec s\notin NE(\vec x, \PMp(\vec x))$, which is a contradiction.
    By the same argument, we infer that every agent $x_i\in(y_j-r_j,y_j+r_j)$ attains utility $1-|x_i-y_j|$ according to every NE.
    Finally, we observe that the set of agents such that $|x_i-y_j|=r_j$ that have non null utility may change according to the specific NE, but the total utility of these agents does not change.
    Thus condition \eqref{eq:v_feasible} is sufficient to ensure $\PMp$ is ES.
    Lastly, we show that the condition is necessary.
    Let us assume that $\vec v$ does not satisfy the condition \eqref{eq:v_feasible}.
    For the sake of simplicity, let us denote with $i_1=\floor{v_1(n-1)}+1$ and $i_2=\floor{v_2(n-1)}+1$.
    Let us consider the following instance $x_1=\dots=x_{i_1-1}=0$, $x_{i_1}=0.4$, $x_{i_1+1}=\dots=x_{i_2-1}=0.5$, $x_{i_2}=0.6$ and $x_j=0.9$ for all the other indexes $j>i_2$.
    Notice that, since $\floor{v_2(n-1)}-\floor{v_1(n-1)}>1$, there is at least one agent located at $0.5$.
    Following the same argument used in Example \ref{ex:1}, we can show that, depending on the strategy played by the agents at $0.5$, the Social Welfare of the mechanism changes.
\end{proof}

Next, we characterize the approximation ratio of every WG percentile mechanism.
Given a percentile vector $\vec v\in[0,1]^2$ that satisfies condition \eqref{eq:v_feasible}, we denote with $i_1=\floor{v_1(n-1)}+1$ and $i_2=\floor{v_2(n-1)}+1$.
Therefore that the mechanism places the facility with capacity $k_1$ at $x_{i_1}$ and the facility with capacity $k_2$ at $x_{i_2}$.

\begin{theorem}
\label{thm:i_1+i_2}
    Given $n$, $k_1$, and $k_2$, let $\PMp$ an ES percentile mechanism.
    Then, if $i_1\ge \floor{\frac{k_1+1}{2}}$, we have that
    \begin{equation}
        ar(\PMp)=\frac{k_1+k_2}{\min\{k_1+(n-i_2)+1,\frac{k_1+1}{2}+k_2\}}
    \end{equation}
    If $i_1 < \floor{\frac{k_1+1}{2}}$ and $i_2< n-\floor{\frac{k_2+1}{2}}$, we have
    \[
    ar(\PMp)=\frac{k_1+k_2}{\min\{k_1+(n-i_2)+1,i_1+k_2\}}.
    \]
    Otherwise, we have 
    \[
    ar(\PMp)=\frac{k_1+k_2}{\min\{k_1+\frac{k_2+1}{2},k_2+i_1\}}.
    \]
\end{theorem}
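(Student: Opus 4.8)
The plan is to compute the ratio by pairing a universal upper bound on the optimal Social Welfare with a tight lower bound on the Social Welfare produced by $\PMp$, and then exhibiting instances on which both bounds are simultaneously tight. For the numerator I would first record the trivial but crucial fact that at most $k_1+k_2$ agents can ever receive positive utility and each such utility is at most $1$, so $SW_{opt}(\vec x)\le k_1+k_2$ for every $\vec x$; this is exactly the numerator in all three formulas, and it already signals that the worst-case instances are those on which the optimum can place its two facilities on top of two tight clusters of $k_1$ and $k_2$ agents, attaining $SW_{opt}=k_1+k_2$.

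Next I would exploit the hypothesis that $\PMp$ is ES, equivalently (by Theorem~\ref{thm:characterization2facilities}) that $\floor{v_2(n-1)}-\floor{v_1(n-1)}\ge k_1+k_2-1$, so that at least $k_1+k_2$ agents lie in $[x_{i_1},x_{i_2}]$. As in the proof of Theorem~\ref{thm:characterization2facilities}, this guarantees that in \emph{every} NE the facility at $x_{i_1}$ serves its $k_1$ closest agents and the facility at $x_{i_2}$ serves its $k_2$ closest agents with no interference between the two groups. Hence $SW_M(\vec x)=C_1+C_2$ splits into two independent contributions, $C_1=\sum(1-|x_i-x_{i_1}|)$ over the $k_1$ agents nearest to $x_{i_1}$ and $C_2$ defined analogously at $x_{i_2}$, and each served set is contiguous in the sorted order (betweenness forbids skipping an intermediate agent). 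This reduces the whole computation to minimizing $C_1$ and $C_2$ over admissible placements.

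The heart of the argument is that the adversary cannot damage both facilities at once while keeping $SW_{opt}=k_1+k_2$: damaging a facility means forcing its served block to spread far from its position, yet the optimum needs a tight cluster of that facility's capacity nearby, and keeping both clusters tight forces at least one facility essentially onto its cluster. I would therefore split into the two attacks underlying the $\min\{\cdot,\cdot\}$: attack (A) keeps the facility at $x_{i_1}$ and the optimum intact while spreading the block served at $x_{i_2}$, giving $SW_M=k_1+(\text{min of }C_2)$, and attack (B) symmetrically spreads the block served at $x_{i_1}$, giving $SW_M=(\text{min of }C_1)+k_2$. For each facility the minimal served utility is governed by how far its block can be spread, which is capped by the boundary of $[0,1]$ and by the number of agents on each side: a facility with at least $\floor{(k_j+1)/2}$ agents on both sides can be spread symmetrically down to $\tfrac{k_j+1}{2}$, whereas one pinned against a boundary (i.e.\ $i_1<\floor{(k_1+1)/2}$ on the left, or $n-i_2<\floor{(k_2+1)/2}$ on the right) can only be pushed down to $i_1$, respectively to $(n-i_2)+1$. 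Threading these mutually exclusive boundary situations through (A) and (B) yields the three displayed expressions, with $ar(\PMp)=(k_1+k_2)/\min\{A,B\}$, and I would then certify each one by an explicit instance: two clusters of sizes $k_1$ and $k_2$ realizing $SW_{opt}=k_1+k_2$, plus a spread of the remaining agents that forces $x_{i_1},x_{i_2}$ to the intended ranks and pushes the damaged block toward the extremes of $[0,1]$.

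I expect the main obstacle to be precisely the third step: verifying that the optimum can retain value $k_1+k_2$ while one facility is maximally damaged, and pinning down the exact switch between the symmetric value $\tfrac{k_j+1}{2}$ and the boundary-limited values $i_1$ and $(n-i_2)+1$. In particular, the contrast between the second and third formulas shows that the achievable damage at the right facility depends on the left facility's boundary regime, so the two facilities cannot be treated in isolation; closing this case requires tracking the global budget of $n$ agents across the left, middle, and right zones and checking, for each regime, that the remaining agents genuinely allow the desired spread without either destroying the optimal clusters or violating the rank constraints that fix $i_1$ and $i_2$.
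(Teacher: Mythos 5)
Your overall strategy is the same as the paper's: bound the numerator by $k_1+k_2$, use the ES condition to split $SW_M=C_1+C_2$ into two independent nearest-$k_j$ blocks, and realize the worst case as a minimum over two one-sided attacks (one facility intact, the other damaged), certified by cluster instances. However, the step you explicitly defer is exactly where the paper does its work: it restricts to one-parameter families (left cluster at $0$, middle cluster at $\lambda$, right cluster at $1$, with $x_{i_1}$ or $x_{i_2}$ possibly at midpoints), observes that $SW_M$ is affine in $\lambda$, and concludes the minimum sits at an endpoint, i.e.\ at one of the two pure attacks. Your ``the optimum needs tight clusters'' heuristic points the right way (damaging facility $1$ down to $i_1$ forces the middle block onto $x_{i_2}$, so facility $2$ is intact), but it does not rule out intermediate configurations that partially damage both facilities; for that you need the linearity/averaging argument (any mixed configuration's SW is a convex combination of the two pure attacks' values, hence at least their minimum). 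Without it, the claimed $\min\{A,B\}$ structure is an assertion, not a proof.

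Second, and more seriously for a verbatim proof: your regime-threading does not produce the displayed third formula. With your damage values ($D_1=i_1$ when $i_1<\floor{\frac{k_1+1}{2}}$, $D_2=(n-i_2)+1$ when $n-i_2<\floor{\frac{k_2+1}{2}}$), the doubly-pinned case yields $\min\{k_1+(n-i_2)+1,\,k_2+i_1\}$, whereas the statement's ``otherwise'' formula is $\min\{k_1+\frac{k_2+1}{2},\,k_2+i_1\}$; these genuinely differ whenever $k_1+(n-i_2)+1<k_2+i_1$. Before you treat this as your error, note it is actually the paper's: the conditions of the second and third cases appear swapped relative to their formulas (the paper's own proof of ``Case 3'' assumes $n-i_2\ge\floor{\frac{k_2+1}{2}}$, contradicting ``otherwise''), and the stated third formula is falsified by a concrete instance. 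Take $n=23$, $k_1=k_2=10$, and $\vec v$ with $i_1=4$, $i_2=23$ (this is ES since $i_2-i_1=19=k_1+k_2-1$, and it falls in the ``otherwise'' case); on $x_1=\dots=x_{22}=0$, $x_{23}=1$ every NE gives the mechanism $SW=k_1+1=11$, while placing both facilities at $0$ gives $20$, so $ar(\PMp)\ge 20/11$, exceeding the stated $20/\min\{15.5,14\}=10/7$. Your formula gives exactly $20/11$ here (and agrees numerically with the stated formulas in the first two cases). So once the linearity step is supplied, your scheme proves the corrected statement, with cases re-matched to conditions, rather than the literal one; as written, your claim that the threading ``yields the three displayed expressions'' is false for the third, and the right move is to flag the statement's case swap rather than force your analysis to fit it.
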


\begin{proof}
Our argument is as follows: we show that the worst instance for the mechanism is either \begin{enumerate*}[label=(\roman*)]
    \item \label{thm4case1} $x_i=0$ if $i\in\{1,\dots,i_1-1\}$, $x_{i_1}=\frac{1}{2}$, and $x_i=1$ otherwise, or
    \item \label{thm4case2} $x_i=0$ if $i\in\{1,\dots,i_2-1\}$ and $x_i=1$ otherwise.
\end{enumerate*}
Notice that in both cases, the optimal SW is equal to $k_1+k_2$, which is the maximum SW attainable.
Let us show that the worst-case instance has the form described in \ref{thm4case1} or \ref{thm4case2}.
Owing to Theorem \ref{thm:characterization2facilities}, we have that there are at least $k_1+k_2$ agents in the interval $[y_1,y_2]$, hence the agents that are accommodated by the facility at $y_j$ are, up to ties, the $k_j$ agents that are closer to $y_j$.
Since $i_1\ge \floor{\frac{k_1+1}{2}}$, the total utility of the agent accommodated by the facility at $y_1$ is minimized when all the agents accommodated by $y_1$ are all at the same distance from $y_1$, that is $|x_{i_1-1}-y_1|=|x_{i_1+1}-y_1|$.
Given $\lambda\in[0,\frac{1}{2}]$, let us consider the following instance: $x_1=\dots=x_{i_1-1}=0$, $x_{i_1}=\lambda$, $x_{i_1+1}=\dots=x_{i_2-1}=2\lambda$.
Let us now consider the facility located at $y_2$.
By the same argument, we have that if $n-i_2$ is larger than $\floor{\frac{k_2+1}{2}}$, then, for every $\lambda$, the position $y_2$ that minimizes the utility is $\frac{1}{2}+\lambda$.
In this case, the cost of the mechanism is $2+(1-\lambda)(k_1-1)+(\frac{1}{2}+\lambda)(k_2-1)$.
Since $k_1\ge k_2$, we have that the minimum SW is achieved when $\lambda=\frac{1}{2}$, thus all the agents on the right of $x_{i_1}$ are located at $1$, all the agents on the left are located at $0$ and $x_{i_1}=\frac{1}{2}$.
In this case $SW_{\PMp}(\vec x)=\frac{k_1+1}{2}+k_2$, while $SW_{opt}(\vec x)=k_1+k_2$, which is the maximum utility achievable and is attained by placing both facilities at $1$.
Consider now the case $n-i_2<\floor{\frac{k_2+1}{2}}$.
In this case, for every $\lambda$, the worst position for $y_2$ is $1$, hence the SW of the mechanism is $n-i_2+2+(1-\lambda)(k_1-1)+2\lambda(k_2-n+i_2-1)$, therefore the SW is minimized when $\lambda=0$ or $\lambda=\frac{1}{2}$.
In particular, we have that worst Social Welfare attainable by the mechanism is
\[
    \min\{n-i_2+k_1+1),\frac{1}{2}(k_1+1)+k_2)\}.
\]
Since $SW_{opt}(\vec x)=k_1+k_2$, we conclude the first part of the proof.
We now consider the case in which $i_1<\floor{\frac{k_1+1}{2}}$.
First, we consider the case in which $i_2<n-\floor{\frac{k_2+1}{2}}$.
By the same argument used to prove the case in which $i_1\ge \floor{\frac{k_1+1}{2}}$, we have that the worst-case instance in this case is
\[
x_i=\begin{cases}
    x_i=0\quad\quad\text{if}\quad i=1,\dots,i_1,\\
    x_i=\lambda\quad\quad\text{if}\quad i=i_1+1,\dots,i_2-1,\\
    x_i=1\quad\quad\text{otherwise.}
\end{cases}
\]
for some $\lambda\in[0,1]$, since the SW of the mechanism is minimized when the $i_1$-th and $i_2$-th agents are at the extremes of the interval.
For any value of $\lambda$, the SW of the mechanism is
\[
SW(\vec x)=i_1+(n-i_2)+(1-\lambda) (k_1-i_1) + \lambda (k_2-(n-i_2)).
\]
Since $SW(\vec x)$ is linear in $\lambda$, we have that the minimum is achieved at either $\lambda=0$ or $\lambda=1$.
Thus the minimal SW achievable is
\[
\min\{k_1+(n-i_2),k_2+i_1\}.
\]
Since in both cases we have that the optimal SW is $k_1+k_2$, we conclude the thesis.
Lastly, we consider the case in which $n-i_2\ge \floor{\frac{k_2+1}{2}}$ and $i_1<\floor{\frac{k_1+1}{2}}$.
In this case, the worst-case instance places the first $i_1$ agents at $0$, therefore the instances we need to consider are 
%
\[
x_i=\begin{cases}
    x_i=0\quad\quad\text{if}\quad i=1,\dots,i_1,\\
    x_i=\lambda\quad\quad\text{if}\quad i=i_1+1,\dots,i_2-1,\\
    x_{i_2}=\frac{\lambda+1}{2}\\
    x_i=1\quad\quad\text{otherwise.}
\end{cases}
\]
The SW induced by the mechanism is then
\[
SW(\vec x)=i_1+ 1 +(1-\lambda)(k_1-i_1)+\frac{1+\lambda}{2}(k_2-1).
\]
Again, since the SW is linear in $\lambda$, we have that the minimium is attained at either $\lambda=0$ or $\lambda=1$.
Then the minimum SW achievable by the mechanism is
\[
\min\Big\{k_1+\frac{(k_2+1)}{2},k_2+i_1\Big\}.
\]
To conclude notice that in both cases, the SW attained by the optimal solution is $k_1+k_2$.
\end{proof}

Consequentially, we characterize the best ES percentile mechanisms given any $2$-dimensional vector $\vec k$.
In particular, we show that the approximation of the best percentile mechanism decreases as $\Delta:=n-(k_1+k_2)$ increases.

\begin{theorem}
\label{thm:bestPMPmechanism}
Given $n$ and $\vec k$, let us define $\Delta=n-(k_1+k_2)$, then we have that the ES WG percentile mechanism that achieves the lowest approximation ratio is induced by the percentile vector $\vec v=(\frac{i_1}{n},\frac{i_2}{n})$, where $i_1$ and $i_2$ are as follows
\begin{enumerate}[label=(\roman*)]
    \item \label{thm5case1} $i_1=\ceil{\frac{k_1}{2}}$ and $i_2=n-\floor{\frac{k_2}{2}}$ if $\Delta\ge \ceil{\frac{k_1+k_2}{2}}$, in which case $ar(\PMp)=\frac{k_1+k_2}{\frac{k_1+1}{2}+k_2}$,
    \item \label{thm5case2} $i_1=k_1-k_2+\alpha$ and $i_2=n-\alpha$, where $\alpha=\ceil{\frac{\Delta-(k_1-k_2)}{2}}$, if $k_1-k_2\le\Delta\le \floor{\frac{k_1+k_2}{2}}+1$, in which case $ar(\PMp)=\frac{k_1+k_2}{i_1+k_2}$, and
    \item \label{thm5case3} $i_1=\Delta+1$ and $i_2=n$ otherwise, in which case $ar(\PMp)=\frac{k_1+k_2}{\Delta+k_2+1}$.
\end{enumerate}
\end{theorem}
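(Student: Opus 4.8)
The plan is to read Theorems~\ref{thm:characterization2facilities} and~\ref{thm:i_1+i_2} as describing a discrete optimization problem and to solve it. By Theorem~\ref{thm:i_1+i_2} the approximation ratio of an ES WG mechanism is $\frac{k_1+k_2}{D(i_1,i_2)}$, where $D(i_1,i_2)$ denotes the worst-case Social Welfare appearing in the denominators; by \eqref{eq:v_feasible} the mechanism is ES precisely when $i_2-i_1\ge k_1+k_2-1$. Since the numerator $k_1+k_2$ is fixed, minimizing the ratio is equivalent to \emph{maximizing} $D(i_1,i_2)$ over integers $1\le i_1<i_2\le n$ subject to $i_2-i_1\ge k_1+k_2-1$. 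I would set $a=i_1-1$ and $b=n-i_2$, so that the feasibility constraint collapses to the single inequality $a+b\le \Delta$ with $a,b\ge 0$, making $\Delta=n-(k_1+k_2)$ the amount of slack available to spread the two facilities toward the centre of their respective neighbourhoods.

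The engine of the proof is a monotonicity observation. In every branch of Theorem~\ref{thm:i_1+i_2} the only non-constant denominator terms are $k_1+(n-i_2)+1=k_1+b+1$ and $i_1+k_2=a+1+k_2$, which increase in $b$ and in $a$ respectively, while the remaining terms $\tfrac{k_1+1}{2}+k_2$ and $k_1+\tfrac{k_2+1}{2}$ are constant caps. Hence $D$ is nondecreasing when we raise either $a$ or $b$, and I would check that this persists across branch transitions: raising $a$ above $\floor{\tfrac{k_1+1}{2}}-1$ replaces the facility-$1$ contribution $a+1$ by the larger cap $\tfrac{k_1+1}{2}$, and symmetrically for $b$. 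Consequently an optimizer must lie either at a configuration where both caps are already met, or on the boundary $a+b=\Delta$ where all the slack is spent; in the latter case, substituting $b=\Delta-a$ reduces the problem to maximizing $\min\{k_1+(\Delta-a)+1,\;a+1+k_2\}$ over a single integer $a$.

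The three regimes then emerge from the size of $\Delta$. When $\Delta\ge\ceil{\tfrac{k_1+k_2}{2}}$ there is enough slack to take $a=\ceil{\tfrac{k_1}{2}}-1$ and $b=\floor{\tfrac{k_2}{2}}$, i.e. $i_1=\ceil{\tfrac{k_1}{2}}$ and $i_2=n-\floor{\tfrac{k_2}{2}}$; I would verify this is feasible and lands in the first branch with both denominator terms at least $\tfrac{k_1+1}{2}+k_2$, so $D=\tfrac{k_1+1}{2}+k_2$ attains its global maximum, giving case~\ref{thm5case1}. When the slack is too small for this, the optimum sits on $a+b=\Delta$ and is found by balancing the two linear terms: $k_1+(\Delta-a)+1=a+1+k_2$ gives $a-b=k_1-k_2$, which combined with $a+b=\Delta$ yields $i_1=k_1-k_2+\alpha$, $i_2=n-\alpha$ with $\alpha=\ceil{\tfrac{\Delta-(k_1-k_2)}{2}}$ and $D=i_1+k_2$; this balance point is admissible exactly when $k_1-k_2\le\Delta\le\floor{\tfrac{k_1+k_2}{2}}+1$ (so that $\alpha\ge 0$ and $i_2\le n$), giving case~\ref{thm5case2}. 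Finally, if even the balance point forces $i_2>n$, namely $\Delta<k_1-k_2$, the constrained maximum is attained at the corner $b=0$, $a=\Delta$, that is $i_1=\Delta+1$ and $i_2=n$, whence $D=i_1+k_2=\Delta+k_2+1$, which is case~\ref{thm5case3}.

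The algebra of $D$ within each regime is routine; the delicate part, and the main obstacle, is the bookkeeping with $\floor{\cdot}$, $\ceil{\cdot}$ and with parities. At each candidate optimizer one must confirm which branch of Theorem~\ref{thm:i_1+i_2} actually governs it---for instance that $i_1=\ceil{\tfrac{k_1}{2}}$ really satisfies $i_1\ge\floor{\tfrac{k_1+1}{2}}$, and that the chosen $b$ sits on the correct side of the threshold $\floor{\tfrac{k_2+1}{2}}$---since a miscounted floor silently moves the configuration into a branch with a different, and possibly larger, value of $D$. Because the $+1$ in $k_1+b+1$ and the rounding of $\alpha$ prevent the two linear terms from being exactly equalized, I would not argue optimality by continuity but instead compare the integer-optimal value against its neighbours $a\pm 1$ and against the caps directly, and finally check that the three intervals in $\Delta$ exhaust all admissible slacks so that the piecewise answer is complete.
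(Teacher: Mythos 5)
Your proposal is correct and follows essentially the same route as the paper's proof: it uses Theorem~\ref{thm:characterization2facilities} as the feasibility constraint and Theorem~\ref{thm:i_1+i_2} as the objective, maximizes the worst-case Social Welfare in the denominator, balances the two linear terms $k_1+(n-i_2)$ and $i_1+k_2$ on the boundary when the slack $\Delta$ is moderate, and falls back to the corner $i_2=n$ when $\Delta<k_1-k_2$. The change of variables $a=i_1-1$, $b=n-i_2$ and the explicit monotonicity/branch-transition bookkeeping are a cleaner packaging of the same argument (and arguably tidier than the paper's, which is loose about the same $\pm 1$ and rounding issues you flag), not a different method.
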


\begin{proof}
When $\Delta\ge\ceil{\frac{k_1+k_2}{2}}$, the indexes $i_1=\ceil{\frac{k_1}{2}}$ and $i_2=n-\floor{\frac{k_2}{2}}$ are well defined.
Owing to Theorem \ref{thm:characterization2facilities} and by definition of $\Delta$, we have that $\PMp$ is ES.
Finally, from Theorem \ref{thm:i_1+i_2}, we infer that 
\[
ar(\PMp)=\frac{k_1+k_2}{\frac{k_1+1}{2}+k_2},
\]
which is the smallest approximation ratio achievable by an ES WG percentile mechanism.
%

%

%
Let us consider the case \ref{thm5case2}, that is $k_1-k_2\le\Delta\le \floor{\frac{k_1+k_2}{2}}+1$.
Owing to Theorem \ref{thm:i_1+i_2}, we retrieve the best values $i_1$ and $i_2$ by maximizing the quantity
\[
    \min\{k_1+(n-i_2),i_1+k_2\}.
\]
Thus, we look for $i_1$ and $i_2$ such that
\[
    k_1+(n-i_2)=i_1+k_2,
\]
subject to the constraint 
\[
n-i_2+i_1=\Delta,
\]
since, owing to Theorem \ref{eq:v_feasible}, $k_1+k_2$ agents must lay between $x_{i_1}$ and $x_{i_2}$.
By a simple computation, we have that
\[
n-i_2=\frac{k_2-k_1+\Delta}{2},
\]
thus $i_1=\frac{\Delta-(k_2-k_1)}{2}=k_1-k_2+\frac{\Delta-(k_2-k_1)}{2}$ and $i_2=n-\frac{k_2-k_1+\Delta}{2}$, which concludes the proof of case \ref{thm5case2}.
Lastly, we consider case \ref{thm5case3}.
In this case, we have that $\Delta<k_1-k_2$, thus we have
\[
k_2+i_1-k_1-(n-i_2)=i_2-n+i_1+k_2-k_1\le \Delta+k_2-k_1<0,
\]
since $i_2-n+i_1<n-i_2+i_1\le \Delta$.
Thus the minimum SW attainable by the mechanism is $i_1+k_2$, therefore, to maximize the minimum achievable SW, we need to set $i_1=\Delta$ and $i_2=n$, which concludes the proof.
\end{proof}

Notice that the lowest approximation ratio is achieved when $\Delta\ge \ceil{\frac{k_1+k_2}{2}}$.
Moreover, notice that the smaller the gap between $k_1$ and $k_2$, that is $k_1-k_2$, the lower the approximation ratio of the best percentile mechanism.
In particular, the lowest approximation ratio is attained when $k_1=k_2$ and $n\ge 3k$, in which case there exists a percentile mechanism whose approximation ratio is $\frac{4}{3+\frac{1}{k}}\sim\frac{4}{3}$.

\subsubsection{The All-In-One and Side-By-Side mechanisms}

We conclude our study by considering the AIO and the SBS mechanisms.
Unlike the WG mechanisms, each AIO and SBS is ES.
We consider the AIO mechanisms first.
Since all the AIO mechanisms place the facilities at the same position, every Nash Equilibrium induced by the facility placement achieves the same Social Welfare.
Moreover, it is easy to see that the best AIO mechanism places the both the facilities at the median agent.
Indeed, we have the following.

\begin{theorem}
\label{thm:ar_med}
    The best AIO mechanism places all the facilities at the median agent.
    Moreover, denoted with $M$ the mechanism that places both facility at the median agent,  we have that
    \begin{equation}
    \label{eq:median_AIO_ar}
    ar(M)=\begin{cases}
        \frac{2k_2+2\floor{\frac{n}{2}}+1}{k_1+k_2+1}\quad\quad \text{if}\;\; k_1\ge \ceil{\frac{n}{2}}\\
        \\
        \frac{2(k_2+k_1)}{k_1+k_2+1}\quad\quad\quad\:\; \text{otherwise}.
    \end{cases}
    \end{equation}
\end{theorem}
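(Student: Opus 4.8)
The plan is to reduce the statement to a worst-case analysis and then prove the two assertions -- optimality of the median placement and the closed form for $ar(M)$ -- in turn.

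\medskip

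\noindent\emph{Setup.} An AIO mechanism places both facilities at a single agent position $y=x_{i_0}$ with $i_0=\floor{(n-1)v}+1$. I would first observe that this makes every AIO mechanism ES for free: since the two facilities coincide, in any NE the accommodated agents are exactly the $K:=k_1+k_2$ agents closest to $y$ (ties broken by the priority rule), and permuting tied agents leaves the total utility unchanged, so $SW_M(\vec x)=\sum_i\bigl(1-|x_i-y|\bigr)$ over the $K$ closest agents, independently of the equilibrium. I would also record the trivial ceiling $SW_{opt}(\vec x)\le k_1+k_2$ that will control the numerator.

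\medskip

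\noindent\emph{Median optimality.} To show the median index $\mathrm{med}=\floor{(n-1)/2}+1$ minimizes the approximation ratio over all AIO mechanisms, I would use the reflection $\vec x\mapsto\mathbf{1}-\vec x$, which sends the AIO mechanism at index $i_0$ to the one at index $n+1-i_0$ and preserves both $SW_{opt}$ and $SW_M$; hence $ar$ is symmetric under $i_0\mapsto n+1-i_0$ and I may assume $i_0\le\mathrm{med}$. Then I would argue a monotonicity property: for an off-centre index the adversary can push the chosen agent toward the nearer endpoint while spreading the remaining $K-1$ accommodated agents toward the farther endpoint, which strictly decreases $SW_M$ while not decreasing $SW_{opt}$; moving $i_0$ toward $\mathrm{med}$ therefore cannot increase the worst-case ratio, pinning the optimum at the median.

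\medskip

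\noindent\emph{Lower bound on $ar(M)$.} I would exhibit the two canonical worst instances, which are also what reveals the regime split. When $k_1<\ceil{n/2}$, take an instance in which $k_1$ agents sit at $0$, $k_2$ at $1$, and the median agent sits alone at $\tfrac12$ (with the counts balanced so that the median \emph{value} is $\tfrac12$). Then $M$ accommodates the median at utility $1$ and $K-1$ endpoint agents each at utility $\tfrac12$, so $SW_M=\tfrac{K+1}{2}$, while placing the facilities at $0$ and $1$ gives $SW_{opt}=k_1+k_2$, yielding $\tfrac{2(k_1+k_2)}{k_1+k_2+1}$. When $k_1\ge\ceil{n/2}$ there are too few agents on one side to simultaneously cluster $k_1$ of them at an endpoint \emph{and} keep the median at $\tfrac12$; the balanced instance with $\floor{n/2}$ agents at $0$, the median at $\tfrac12$, and $\floor{n/2}$ agents at $1$ gives $SW_M=\tfrac{K+1}{2}$, whereas now the optimum can only reach $SW_{opt}=\floor{n/2}+k_2+\tfrac12$ (facility $1$ collects the $\floor{n/2}$ left agents at full utility plus the median at utility $\tfrac12$, facility $2$ collects $k_2$ right agents), yielding the first branch.

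\medskip

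\noindent\emph{Upper bound and the main obstacle.} For an arbitrary instance I would reduce to the canonical instances above by an exchange argument: keeping the median agent fixed as the facility location, moving the other accommodated agents to $\{0,1\}$ only lowers $SW_M$ without lowering $SW_{opt}$, and one checks that the worst median position is $p=\tfrac12$, which reduces the problem to a finite optimization in $k_1,k_2,n$. Whether the optimum attains the full $k_1+k_2$ or is capped at $\floor{n/2}+k_2+\tfrac12$ is governed precisely by whether $k_1<\ceil{n/2}$, producing the two branches. I expect the reduction to be the hard part: one must verify that the exchange moves can be carried out while keeping $x_{\mathrm{med}}$ the median and the accommodated set unchanged, and, in the large-capacity regime, carefully track the interaction between the constraint $k_1\ge\ceil{n/2}$ and the number of agents available on each side of the median -- this counting (including its parity behaviour) is exactly what caps $SW_{opt}$ below $k_1+k_2$ and must be handled to make the first branch tight.
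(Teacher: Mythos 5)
Your proposal is correct and follows essentially the same route as the paper's proof: reduce to instances in which every agent except the one hosting the facilities sits at an endpoint of $[0,1]$, compare the mechanism's welfare $\frac{k_1+k_2+1}{2}$ against the two candidate optima (both facilities at one endpoint versus a split placement), and note that the regime threshold $k_1\ge\ceil{\frac{n}{2}}$ is exactly what caps the optimum at $\floor{\frac{n}{2}}+k_2+\frac{1}{2}$ in the first branch. The only cosmetic differences are that the paper keeps the pivot agent's position as a free parameter $\lambda$ and invokes piecewise linearity to place the worst case at $\lambda\in\{0,\frac{1}{2}\}$, and it obtains median optimality by minimizing the general-index formula over $r$ rather than via your reflection-plus-monotonicity argument; the reduction to canonical instances is asserted at a comparable level of detail in both treatments.
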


\begin{proof}
    We prove this theorem as it follows: first, we compute the approximation ratio of a generic AIO mechanism and then show that the approximation ratio is minimized when the AIO mechanism places both facility at a median agent.
    Let then $M$ be an AIO mechanism such that $y_1=y_2=x_{r}$ where $r\in[n]$.
    Without loss of generality, we assume $r\le \floor{\frac{n+1}{2}}$.
    We notice that, given an instance $\vec x$, it is possible to increase the optimal Social Welfare and decrease the Social Welfare of the mechanism by moving all the agents that are not the $r$-th agent to either $0$ or $1$.
    We therefore restrict our attention to instances such that $x_1=\dots=x_{r-1}=0$, $x_r=\lambda\in[0,1]$, and $x_j=1$ otherwise.
    Since we have that $r\le \floor{\frac{n}{2}}$, we have that the worst-case instance happens when $\lambda\in[0,\frac{1}{2}]$, thus the Social Welfare achieved by the mechanism can be computed explicitly
    \[
        SW_M(\vec x)=1+(1-\lambda)\min\{r-1,k_1+k_2-1\}+\lambda(k_1+k_2-\min\{k_1+k_2,r\}).
    \]
    To complete the proof we need to compute the optimal SW achievable on these instances.
    Since $r\le \floor{\frac{n}{2}}$ and $k_1\ge k_2$, we have that the optimal solution either places the two facilities at $1$, in which case the SW is equal to
    \[
        SW_{opt,1}=\min\{n-r+\lambda,k_1+k_2\},
    \]
    or places the facility with capacity $k_2$ at $0$ and the facility with capacity $k_1$ at $1$, in which case the optimal SW is equal to
    \[
        SW_{opt,2}=\min\{r-\lambda,k_2\}+\min\{n-r+\lambda,k_1\}.
    \]
    So that $SW_{opt}=\max\{SW_{opt,1},SW_{opt,2}\}$ and hence
    \[
        ar(M)=\max_{\lambda\in[0,\frac{1}{2}]}\Bigg\{\frac{\max\big\{\min\{n-r+\lambda,k_1+k_2\},\min\{r-\lambda,k_2\}+\min\{n-r+\lambda,k_1\}\big\}}{1+(1-\lambda)\min\{r-1,k_1+k_2-1\}+\lambda(k_1+k_2-\min\{k_1+k_2,r\})}\Bigg\}.
    \]
    We notice that this quantity is minimized when $r=\ceil{\frac{n}{2}}$, which proves that the best AIO mechanism is the one placing all the facilities at the median agent.
    Let us now compute the approximation ratio of the median mechanism, that is $r=\ceil{\frac{n}{2}}$.
    Moreover, since $SW_{opt}\ge SW_M$, $k_1\ge k_1$, and both $SW_M$ and $SW_{opt}$ are piecewise linear functions with respect to $\lambda$, we have that the maximum defining $ar(M)$ is attained when wither $\lambda=0$ or $\frac{1}{2}$.
    We then conclude that
    \[
        ar(M)=\max_{\lambda\in\{0,\frac{1}{2}\}}\Bigg\{\frac{\max\big\{\min\{\floor{\frac{n}{2}}+\lambda,k_1+k_2\},k_2+\min\{\floor{\frac{n}{2}}+\lambda,k_1\}\big\}}{\lambda+(1-\lambda)\min\{\floor{\frac{n}{2}},k_1+k_2\}+\lambda(k_1+k_2-\min\{k_1+k_2,\ceil{\frac{n}{2}}\})}\Bigg\},
    \]
    where we used the fact that $k_2\le \ceil{\frac{n}{2}}$.
    Which allows us to conclude \eqref{eq:median_AIO_ar} and thus the proof.
    %
%
%
\end{proof}

We then consider the SBS mechanisms.
As for the AIO mechanisms, every SBS mechanism is ES.

\begin{theorem}
\label{thm:SBS_ES}
   Each SBS mechanism is ES. 
\end{theorem}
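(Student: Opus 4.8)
The plan is to reduce to the case of two distinct facility positions and then exploit the fact that, unlike the Wide-Gap placements, a Side-By-Side placement leaves no agent strictly between the two facilities, so no agent is indifferent about which facility to target. First I would dispose of the degenerate case: if the two consecutive agents chosen by the mechanism coincide, i.e.\ $x_i=x_{i+1}$, then $y_1=y_2$ and the mechanism acts as an AIO mechanism, which is ES (every NE accommodates the $k_1+k_2$ closest agents, whose distances---hence utilities---are fixed); this is exactly the situation already handled around Theorem~\ref{thm:ar_med}. So I may assume $y_1=x_i<y_2=x_{i+1}$ and set $\delta:=y_2-y_1>0$. Since $x_i$ and $x_{i+1}$ are consecutive in the sorted order, no agent lies in the open interval $(y_1,y_2)$: every agent is a \emph{left} agent (index $\le i$, position $\le y_1$) or a \emph{right} agent (index $\ge i+1$, position $\ge y_2$).

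The key structural observation is that every agent has a \emph{strict} facility preference. For a left agent $j$ we have $d_{j,2}=d_{j,1}+\delta$, so it strictly prefers $y_1$ by the margin $\delta$; symmetrically every right agent strictly prefers $y_2$. In particular, no agent is indifferent between the two facilities. I would then argue that this is precisely what forces the Social Welfare to be the same across all NE. Recall from Example~\ref{ex:1} that the two equilibria there differed only because the middle agent (located at $0.4$, equidistant from both facilities) could swing to either facility, and the two choices displaced two \emph{different} marginal agents whose utilities happened to differ; the resulting cascade changed the SW. A Side-By-Side placement rules out this phenomenon at the root, since there is no agent that can swing between facilities.

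To turn this into a proof I would run the greedy assignment routine used in the proof of Theorem~\ref{thm:NEexistance}, which sorts the agent--facility distances and assigns greedily, and show that any NE yields the same total utility as this reference assignment. The only freedom in the routine comes from ties, and here every tie is between two agents that are \emph{equidistant from the same facility} (an agent-side tie is impossible, as no agent is equidistant from the two facilities); breaking such a tie one way or the other includes one of the two tied agents at the common utility $1-d$ and leaves the other with utility $0$, so the total utility is unaffected. The main obstacle is to rule out the more subtle branching in which a single agent is accommodated by $y_1$ in one equilibrium but pushed to $y_2$ in another (a genuine SW change of size $\delta$), and to control the overflow cases where one side has fewer than its capacity of local agents, so that some agents are forced onto their non-preferred facility. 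I would handle both by identifying NE with the stable matchings of the associated assignment market: because both agents and facilities rank each pair by the same weight $1-d_{j,l}$ (agents prefer closer facilities, facilities admit closer agents, ties broken by the fixed priority), and because the absence of agent-side ties makes the greedy process confluent, this market has an essentially unique stable matching and hence a unique value of the accommodated-distance multiset. This yields $SW_\gamma(\vec x,M(\vec x))=SW_{\gamma'}(\vec x,M(\vec x))$ for all $\gamma,\gamma'\in NE(\vec x,M(\vec x))$, i.e.\ every SBS mechanism is ES.
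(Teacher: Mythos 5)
You prove the right statement, but by a genuinely different route than the paper. Both arguments hinge on the same structural fact: since $y_1=x_i$ and $y_2=x_{i+1}$ are consecutive order statistics, no agent lies strictly between the two facilities, so every agent strictly prefers one of them, and the indifferent ``swing'' agent of Example \ref{ex:1} cannot exist. The paper exploits this through an elementary case analysis: it partitions the agents into $A_1=\{x_j\le y_1\}$ and $A_2=\{x_j\ge y_2\}$, splits according to whether each side's population exceeds its facility's capacity, and asserts that in each case the accommodated set---hence the SW---is, up to ties, the same in every NE. You instead reduce to matching theory: every NE outcome is a stable matching of the two-sided market in which both sides rank pairs by the common score $1-d_{j,l}$, with facility-side ties broken by the fixed priority; since no agent-side ties exist, this common ranking can be made strict, and a market with a common strict ranking of pairs has a unique stable matching (the output of the greedy routine from the proof of Theorem \ref{thm:NEexistance}), so all NE share one outcome and one SW. The trade-off: the paper's argument is short and self-contained, but it is terse exactly where the difficulty sits---ruling out the ``cross-assignment'' you explicitly flag, in which an agent near $y_1$ is served by $y_2$ in one equilibrium but not in another (indeed, as literally written the paper's first case, $|A_1|\le k_1$ and $|A_2|\le k_2$, is vacuous because $n>k_1+k_2$, and the both-overflow case $|A_1|>k_1$, $|A_2|>k_2$ is the one that most needs this argument). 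Your reduction disposes of that point uniformly across the underflow/overflow cases, at the price of having to actually prove the two facts you invoke---that NE outcomes admit no blocking pair, and that a common strict ranking forces a unique stable matching---both of which follow by induction along the sorted list of agent--facility distances, exactly as in Theorem \ref{thm:NEexistance}. One local slip worth fixing: breaking a facility-side tie does not simply leave the losing agent ``with utility $0$'' (the loser may be served by the other facility, at a different utility); this does not damage your proof, both because the game's priority rule fixes such ties and because stability itself forces the correct resolution---e.g.\ a right agent tied with a left agent at $y_1$ must be the one sent to $y_2$.
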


\begin{proof}
    Let $PM$ be an SBS percentile mechanism and let $i$ be the parameter such that $y_1=x_i$ and $y_2=x_{i+1}$.
    Without loss of generality, we assume that the facility at $y_1$ has capacity $k_1$ and that the facility at $y_2$ has capacity $k_2$.
    Let $\vec x$ be the vector containing the reports of $n$ agents, we now show that every NE equilibrium induced by placing the facilities at $y_1=x_i$ and $y_2=x_{i+1}$ achieves the same SW.
    Notice that if $y_1=y_2$, there is nothing to prove, we then consider the case in which $y_1<y_2$.
    Let us now define
    \begin{equation}
        \label{eq:A1_A2}
            A_1:=\{x_i\le y_1\}\quad\quad \text{and} \quad \quad A_1:=\{x_i\ge y_2\}.
    \end{equation}
    Notice that $A_1\cup A_2=\{x_i\}_{i\in[n]}$ and $A_1 \cap A_2=\emptyset$.
    By definition, every agent in $A_1$ is strictly closer to $y_1$ and every agent in $A_2$ is strictly closer in $y_2$.
    We have three cases:
    \begin{enumerate*}[label=(\roman*)]
        \item \label{thm7case1}$|A_1|\le k_1$ and $|A_2|\le k_2$;
        \item \label{thm7case2}$|A_1|> k_1$ and $|A_2|\le k_2$; and
        \item \label{thm7case3}$|A_1|\le k_1$ and $|A_2|> k_2$.
    \end{enumerate*}

    \textbf{Case \ref{thm7case1}:} In this case, we have that in every NE, the agents getting access to the facilities are, up to breaking ties, the $k_1$ rightmost agents in $A_1$ and the $k_2$ leftmost agents in $A_2$. 
    In particular the SW remains constant across all the possible Nash Equilibria.
    
    \textbf{Case \ref{thm7case2}:} In this case, we have that in every NE, the agents getting access to the facilities are, up to breaking ties, the $a_1:=|A_1|$ agents in $A_1$ and the $k_2+k_1-a_1$ leftmost agents in $A_2$.
    Notice that, up to ties, the $k_2$ leftmost agents in $A_2$ are served by the facility at $y_2$, while the other $k_1-a_2$ are served by the facility $y_1$.
    In particular the SW remains constant across all the possible Nash Equilibria.
     
    \textbf{Case \ref{thm7case3}:} This case follows from an argument similar to the one used to handle Case \ref{thm7case2}. Thus we conclude the proof.
\end{proof}

Lastly, we notice that the approximation ratio of any SBS mechanism can be obtained by following the same argument used to prove Theorem \ref{thm:ar_med}.
Indeed, it is easy to see that the worst-case instance of the SBS places the two facilities at the same position, i.e. $y_1=y_2$, as otherwise it would be possible to increase the ratio by overlapping the two facilities.
For this reason, the worst-case guarantees of the SBS mechanisms are similar to the worst-case guarantees of the AIO mechanisms, hence the SBS mechanism that achieves the lowest approximation ratio is the one placing facilities at $x_{\floor{\frac{n}{2}}}$ and $x_{\floor{\frac{n+1}{2}}}$.

\subsection{Beyond two facilities}

We now extend our study to the case in which we want to place $m>2$ facilities.
For the sake of simplicity, we consider $m$ facilities that have the same capacity $k$.
The techniques used in this section can be easily extended to the case in which facilities have different capacities. 
Notice that, in this case, any percentile mechanism needs to be endowed with a permutation to specify how to assign the different capacities to the positions returned by the mechanism \cite{aziz2020facility,auricchio2023extended}.
First, we extend Theorem \ref{thm:characterization2facilities} to this framework.

\begin{theorem}
\label{thm:ESPMmmore2}
    Let $k$ be the capacity of $m$ facilities.
    Moreover, let $\vec v$ be a percentile vector such that $v_1<v_2<\dots<v_m$ so that $\vec v$ does not possess two equal entries and let $\PMp$ be its associated percentile mechanism.
    Assume that $\floor{v_{j+1}(n-1)}-\floor{v_j(n-1)}>1$ for every $j\in [m-1]$.
    Then $\PMp$ is ES if and only if the following system of inequalities is satisfied
    \begin{eqnarray}
    \label{eq:Esconditionmmore2}
        \begin{cases}
            &\floor{v_2(n-1)}-\floor{v_1(n-1)}\ge 2k-1\\
            &\dots\\
            &\floor{v_{m}(n-1)}-\floor{v_{m-1}(n-1)}\ge 2k-1
        \end{cases}.
    \end{eqnarray}
\end{theorem}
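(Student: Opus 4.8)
The plan is to generalize the proof of Theorem~\ref{thm:characterization2facilities}, exploiting the fact that the system~\eqref{eq:Esconditionmmore2} is exactly the two-facility feasibility condition (specialized to $k_1=k_2=k$) imposed on every pair of \emph{consecutive} facilities. The guiding idea for sufficiency is that the gap conditions force the game to \emph{decouple}: under~\eqref{eq:Esconditionmmore2} each facility serves agents from a region that does not interact with the regions of its neighbours, so the single-pair analysis of Theorem~\ref{thm:characterization2facilities} can be applied facility by facility and the contributions summed.

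For sufficiency, write $i_j=\floor{v_j(n-1)}+1$ and $y_j=x_{i_j}$, with $y_1<\dots<y_m$, and for each $j$ let $r_j$ be the minimal radius such that $B_{r_j}(y_j)$ contains at least $k$ agents. The key lemma is that consecutive catchment balls do not overlap, i.e. $r_j+r_{j+1}\le |y_j-y_{j+1}|$ for every $j\in[m-1]$. This follows just as in Theorem~\ref{thm:characterization2facilities}: condition~\eqref{eq:Esconditionmmore2} guarantees at least $2k$ agents in $[y_j,y_{j+1}]$, so the $k$ agents of this interval closest to $y_j$ and the $k$ closest to $y_{j+1}$ form disjoint sets, bounding $r_j$ and $r_{j+1}$ by the distances to the respective $k$-th agents; since a ball may also collect agents on the far side of its centre, $r_j$ can only shrink, so the bound is unaffected by interior facilities having neighbours on both sides. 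Hence the intervals $(y_j-r_j,y_j+r_j)$ are pairwise disjoint and ordered, and every agent lies in the interior of at most one ball. I would then repeat the NE-invariance argument of Theorem~\ref{thm:characterization2facilities} locally: in any NE an agent outside all balls has utility $0$ (otherwise it displaces an agent strictly inside some $B_{r_j}(y_j)$, who could profitably deviate); an agent in the interior of a unique ball $B_{r_j}(y_j)$ attains the determined utility $1-|x_i-y_j|$; and the agents lying exactly at distance $r_j$ from some $y_j$ may be permuted between equilibria but contribute a constant total. Summing these NE-independent contributions shows $SW$ is the same for every NE, so $\PMp$ is ES.

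For necessity, suppose some gap fails, i.e. $\floor{v_{j+1}(n-1)}-\floor{v_j(n-1)}\le 2k-2$ for some $j$ (while still exceeding $1$ by hypothesis). I would localize the construction of Theorem~\ref{thm:characterization2facilities} around the pair $(y_j,y_{j+1})$: place the agents feeding facilities $1,\dots,j-1$ in well-separated, internally saturated clusters far to the left, and those feeding $j+2,\dots,m$ far to the right, so that these facilities behave exactly as in the ES case and contribute a fixed $SW$ irrespective of the NE. Between $y_j$ and $y_{j+1}$ I would reproduce the contested configuration of Example~\ref{ex:1} (two isolated agents at $0.4$ and $0.6$ flanking a cluster of coincident agents at $0.5$), which fits because the gap exceeds $1$ yet is too small to hold $2k$ agents; the two ways of splitting the contested cluster between $y_j$ and $y_{j+1}$ then give two NE with different Social Welfare, so $\PMp$ is not ES.

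The main obstacle is the decoupling step in the sufficiency direction: one must rule out that, in some NE, an agent between $y_j$ and $y_{j+1}$ is served by a \emph{non-adjacent} facility, and must verify that the balls of all interior facilities, which compete on both sides simultaneously, remain pairwise disjoint. Both issues are settled by the pairwise bound $r_j+r_{j+1}\le|y_j-y_{j+1}|$, which places each agent in at most one ball interior and keeps every interior agent strictly closer to its own facility than to any neighbour; the remaining bookkeeping then reduces to $m-1$ independent copies of the two-facility argument.
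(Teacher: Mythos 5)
Your proposal takes essentially the same route as the paper: the paper's proof simply notes that condition \eqref{eq:Esconditionmmore2} guarantees at least $2k$ agents between each pair of consecutive facilities and applies the two-facility argument of Theorem \ref{thm:characterization2facilities} pairwise, which is exactly your sufficiency argument, while your localized Example-\ref{ex:1}-style construction is the necessity argument that the paper inherits implicitly from Theorem \ref{thm:characterization2facilities}. One caveat: your claim that the bound $r_j+r_{j+1}\le|y_j-y_{j+1}|$ keeps every interior agent strictly closer to its own facility than to any neighbour is false in general (take $k=3$, $y_j=0$, $y_{j+1}=1$, agents at $0,\,0.55,\,0.6,\,0.9,\,0.95,\,1$: then $r_j=0.6$, $r_{j+1}=0.1$, yet the interior agent at $0.55$ is strictly closer to $y_{j+1}$), but this does not sink the proof, because the displacement argument you and the paper both invoke is what actually does the work: in every NE the neighbouring facility is saturated by its own $\ge k$ strictly closer ball agents, so interior agents are still forced to be served by their own facility.
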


\begin{proof}
    The proof follows by the same argument used to prove Theorem \ref{thm:characterization2facilities}.
    Indeed, by condition \eqref{eq:Esconditionmmore2} for every $j\in[m]$ we have that at least $2k$ agents are located between $y_j$ and $y_{j+1}$, thus the Social Welfare generated by the facilities at $y_j$ and $y_{j+1}$ does not depend on the specific Nash Equilibrium.
    To conclude the proof, it suffices to apply this argument to each couple of facilities $(y_j,y_{j+1})$.
\end{proof}

The set of inequalities \eqref{eq:Esconditionmmore2} allows us to characterize the vectors $\vec v$ that induces an ES percentile mechanism $\PMp$ depending on the capacity $k$.
Notice that system \eqref{eq:Esconditionmmore2} does not admit any solution when $k>\frac{n+m}{2m}$ or, equivalently, $n<(2k-1)m$.
Indeed, by summing all the inequalities in \eqref{eq:Esconditionmmore2}, we have that
\[
\floor{v_m(n-1)}-\floor{v_1(n-1)}\ge (2k-1)m.
\]
Since $n\ge \floor{v_m(n-1)}-\floor{v_1(n-1)}$, we must have that $n\ge (2k-1)m$.
Although when $n<(2k-1)m$ it is impossible to define an ES percentile mechanism that places $m$ facilities at $m$ different locations, it is possible to define an ES percentile mechanism that places all the facilities at one or two different locations.
To keep the discussion on track, we first study the case in which system \eqref{eq:Esconditionmmore2} admits a solution and defer the pathological case to a dedicated section.

\subsection{Case \texorpdfstring{$n\ge (2k-1)m$}{Lg}.}

In this case, it is possible to select an ES and absolutely truthful percentile mechanism that places the $m$ facilities at $m$ different positions among the agents' reports.

\begin{theorem}
\label{thm:approximationratiom>2}
    If $k<\frac{n+m}{2m}$, then given an ES $\PMp$, we have
    \[
        ar(\PMp)=\begin{cases}
            \frac{mk}{(m-\frac{1}{2})k+\frac{1}{2}} \quad \quad \quad \text{if} \; i_1,n-i_m\ge\floor{\frac{k+1}{2}}\\
            \frac{mk}{(m-1)k+\min\{i_1,n-i_m\}} \quad \text{otherwise}
        \end{cases} 
    \]
    where $i_1=\floor{v_1(n-1)}+1$ and $i_m=\floor{v_m(n-1)}+1$.
\end{theorem}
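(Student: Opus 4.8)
The plan is to compare $SW_{\PMp}(\vec x)$ with $SW_{opt}(\vec x)$, the latter equalling the maximal value $mk$ on all relevant instances (once $n\ge(2k-1)m\ge mk$ one may gather $mk$ agents onto facility positions and serve them at unit utility). Because $\PMp$ is ES, Theorem \ref{thm:ESPMmmore2} puts at least $2k$ agents in each $[y_j,y_{j+1}]$, so every facility serves its $k$ nearest agents and the value splits as $SW_{\PMp}(\vec x)=mk-\sum_{j\in[m]}L_j$, where $L_j=\sum_{i\in S_j}|x_i-y_j|$ is the transport cost of the block $S_j$ of $k$ agents served by $y_j$. Computing the approximation ratio therefore amounts to maximising the total loss $\sum_j L_j$.

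First I would reduce to extremal instances, exactly as in Theorems \ref{thm:i_1+i_2} and \ref{thm:ar_med}: keeping the order of the agents (so the pivots $x_{i_1},\dots,x_{i_m}$ remain the facility positions) and pushing every other agent to $0$ or $1$ cannot lower $SW_{opt}$, while each $L_j$ is piecewise linear in the free coordinates and attains its maximum at extreme values of them. On such an instance each block contains the agent sitting at $y_j=x_{i_j}$, which contributes loss $0$, together with $k-1$ others; and since served agents are always the closest, a block-agent lying between two facilities is within half of their separation from its own facility, whereas the outer agents of the boundary facilities $y_1$ and $y_m$ may reach the endpoints at distances $y_1$ and $1-y_m$. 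This asymmetry --- interior spreading is confined to half-gaps, boundary spreading reaches an endpoint --- is what forces the loss onto the boundary facilities.

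The worst case is then the more damaging of two explicit families, directly mirroring the two regimes of Theorem \ref{thm:i_1+i_2} applied to the single losing facility while the other $m-1$ facilities serve clusters and contribute $k$ each. In the \emph{central-spread} family one places $i_1-1$ agents at $0$, the pivot at $\tfrac12$, and the rest at $1$; the facilities $y_2,\dots,y_m$ collapse onto $1$, and $y_1$ serves $k-1$ agents at distance $\tfrac12$ (drawn from the agents at $0$ and from those crowded out of the full facilities at $1$), for a loss of $\tfrac{k-1}{2}$ and welfare $(m-\tfrac12)k+\tfrac12$. In the \emph{endpoint-waste} family one places $i_1$ agents at $0$ and the rest at $1$; now $y_1=0$ serves its $i_1$ agents usefully but must spend its remaining $k-i_1$ capacity on agents at distance $1$, losing $k-i_1$ for welfare $(m-1)k+i_1$, and the mirror instance treats $n-i_m$. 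Comparing the losses, $k-\min\{i_1,n-i_m\}$ beats $\tfrac{k-1}{2}$ precisely when $\min\{i_1,n-i_m\}<\floor{(k+1)/2}$, which produces the stated case split; in each family $SW_{opt}=mk$ is immediate, fixing the ratio.

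The step I expect to be the main obstacle is the matching upper bound: proving that these two families are genuinely extremal, so that no instance --- in particular none isolating two boundary facilities simultaneously, or spreading several interior facilities a little each --- incurs total loss exceeding $\max\{\tfrac{k-1}{2},\,k-\min\{i_1,n-i_m\}\}$. I would establish this by an exchange argument on the reduced instances: because an interior facility's served agents are trapped in its two half-gaps, while the endpoint reach of a boundary facility is governed only by $y_1$ or $1-y_m$, transferring the loss-bearing configuration from any interior (or second boundary) facility onto one distinguished boundary facility never decreases $\sum_j L_j$, subject to the counting constraint that only $i_1$ (resp. $n-i_m$) agents are available outside $y_1$ (resp. $y_m$). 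Any maximiser therefore lies in one of the two families, and substituting its welfare into $mk/SW_{\PMp}$ yields both stated values.
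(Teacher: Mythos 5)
Your proposal is correct in substance and shares the paper's skeleton---reduce to extremal clustered instances, exploit linearity in the free coordinates, and compare two competing families whose losses are $\tfrac{k-1}{2}$ and $k-\min\{i_1,n-i_m\}$---but it departs from the paper in one instructive way. For the first case the paper's witness spreads the loss over all $m$ facilities, placing each pivot at the midpoint of its gap ($x_{i_j}=\tfrac{2j-1}{2m}$, clusters at $\tfrac{l}{m}$), while yours concentrates the whole loss $\tfrac{k-1}{2}$ on the single facility $y_1$ with pivot at $\tfrac12$. Both give mechanism welfare $(m-\tfrac12)k+\tfrac12$, but your choice is logically cleaner: on your instance the optimum $mk$ is attained by stacking all $m$ facilities at $1$ (one checks $n-i_1\ge mk$ using $i_m-i_1\ge(2k-1)(m-1)$ and $n-i_m\ge\floor{\tfrac{k+1}{2}}$), so the ratio is certified on a single instance, whereas the paper pairs the mechanism welfare of its spread instance with the optimum of a \emph{different} instance $\vec x_O$, on which the mechanism welfare is actually $(m-1)k+\min\{i_1,k\}$ rather than $(m-\tfrac12)k+\tfrac12$. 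Your endpoint-waste family coincides with what the paper's case-two instance degenerates to at the corner $\delta_1=1,\delta_2=0$, so the two proofs agree there. On the upper-bound (extremality) side you are no weaker than the paper: your flagged exchange argument plays exactly the role of the paper's unproved appeal to ``the same argument as Theorem \ref{thm:i_1+i_2}.'' One caveat you inherit from the statement itself: the comparison $k-\min\{i_1,n-i_m\}>\tfrac{k-1}{2}$ holds iff $\min\{i_1,n-i_m\}<\tfrac{k+1}{2}$, which for even $k$ means $\min\{i_1,n-i_m\}\le\floor{\tfrac{k+1}{2}}$, not $\min\{i_1,n-i_m\}<\floor{\tfrac{k+1}{2}}$ as you wrote to match the theorem's case split; e.g.\ for $k=4$, $i_1=2$ the theorem's first case nominally applies, yet your own endpoint-waste instance already forces welfare $(m-\tfrac12)k<(m-\tfrac12)k+\tfrac12$, so the stated split (and both proofs) are off by one in that boundary situation.
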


\begin{proof}
    The case in which $i_1,n-i_m\ge\floor{\frac{k+1}{2}}$ follows by the same argument adopted in the proof of Theorem \ref{thm:i_1+i_2}.
    Indeed, by definition of the mechanism, the SW of the mechanism is minimized when each facility $y_j=x_{\floor{v_j(n-1)}+1}$ is such that $|y_j-x_{\floor{v_j(n-1)}}|=|y_j-x_{\floor{v_j(n-1)}+2}|$.
    Hence the mechanism achieves the minimal SW when $x_{\floor{v_j(n-1)}+1}=\frac{2j-1}{2m}$ for every $j\in[m]$ and $x_i=\frac{l}{m}$ if $\floor{v_l(n-1)}+1 < i < \floor{v_{l+1}(n-1)}+1$ where $l=0,1,\dots,m$, $v_0=0$ and $v_{m+1}=1$.
    On such instance the SW of the mechanism is $(m-\frac{1}{2})k+\frac{1}{2}$.
    Notice the mechanism achieves the same SW on the instance $\vec x_O$ defined as $(x_O)_i=0$ for every $i\le \floor{v_{1}(n-1)}+1$, and $(x_O)_i=1$ otherwise.
    To conclude, we observe that the optimal SW on instance $\vec x_O$ is $mk$.
    To conclude the proof, we need to consider the case in which either $i_1$ or $n-i_m$ are lower than $\floor{\frac{k+1}{2}}$.
    Since the other case is symmetric, we restrict our analysis to the case in which $i_1\le n-i_2$.
    Again, since $i_1, n-i_m\le\floor{\frac{k+1}{2}}$, we have that the worst-case instance places the first $i_1$ agents at $0$ and the last $n-i_m+1$ at $1$.
    Since every facility has the same capacity, we have that the worst-case instance has the following form
    \[
    x_i=\begin{cases}
        0\quad\quad\quad\quad\quad\quad\;\text{if}\quad i=1,\dots,i_1,\\
        \delta_1\quad\quad\quad\quad\;\;\;\quad\text{if}\quad i=i_1+1,\dots,i_2-1,\\
        \delta_1+\frac{1-\delta_1-\delta_2}{2(m-2)}\quad\;\text{if}\quad i=i_2,\\
        \delta_1+2\frac{1-\delta_1-\delta_2}{2(m-2)}\quad\text{if}\quad i=i_2+1,\dots,i_3-1,\\
        \delta_1+3\frac{1-\delta_1-\delta_2}{2(m-2)}\quad\text{if}\quad i=i_3,\\
        \delta_1+4\frac{1-\delta_1-\delta_2}{2(m-2)}\quad\text{if}\quad i=i_3+1,\dots,i_4-1,\\
        \dots\\
        1-\delta_2\quad\quad\quad\quad\text{if}\quad i=i_{m-1}+1,\dots,i_m-1,\\
        1\quad\quad\quad\quad\quad\quad\text{otherwise}
    \end{cases}
    \]
    where $\delta_1,\delta_2\ge 0$ and such that $\delta_1+\delta_2\le 1$.
    The SW of the mechanism on this instance is
    \begin{align*}
        SW(\vec x)&=i_1+(n-i_2)+m-2+(k-i_1)(1-\delta_1)+\sum_{i=2}^{m-2}\bigg((k-1)\Big(\frac{m-3+\delta_1+\delta_2}{m-2}\Big)\bigg)\\
        &\quad+(k-(n-i_m))(1-\delta_2).
    \end{align*}
    Again, this quantity is linear in $\delta_1$ and $\delta_2$, thus it is minimized when $\delta_1,\delta_2\in\{0,1\}$
    By plugging in the possible combinations, we infer that the minimum is achieved when $\delta_1=1$ and $\delta_2=0$ since $i_1\le n-i_m$.
\end{proof}

In particular, for every given the capacity $k$ and number of facility $m$, it is possible to detect the best possible ES and absolutely truthful percentile mechanism.

\begin{theorem}
\label{thm:conditionm>2}
    Given $k$, $m$, and $n$, let us define $\alpha=\floor{\frac{(n-2k(m-1)+1)}{2}}$.
    The vector $\vec v$ where $v_j=\frac{\alpha+(2k-1)(j-1)}{n}$ for $j\in[m]$ induces the ES percentile mechanism with the lowest approximation ratio.
    In particular, if $n\ge 2km$, the approximation ratio of $\PMp$ is less than $1+\frac{1}{2m-1}$.
\end{theorem}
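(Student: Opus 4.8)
The plan is to combine the ES characterization of Theorem~\ref{thm:ESPMmmore2} with the approximation-ratio formula of Theorem~\ref{thm:approximationratiom>2}, and then to verify that the proposed vector lands in the more favourable of the two branches of that formula. I would first observe that the first branch $\frac{mk}{(m-\frac12)k+\frac12}$ is a constant that does not depend on the precise placement of the two extreme facilities, whereas the second branch $\frac{mk}{(m-1)k+\min\{i_1,n-i_m\}}$ is strictly larger whenever $\min\{i_1,n-i_m\}<\floor{\frac{k+1}{2}}$, since then its denominator is smaller than $(m-1)k+\frac{k+1}{2}=(m-\frac12)k+\frac12$. Consequently the smallest approximation ratio attainable by any ES percentile mechanism equals the first-branch value, and it is attained exactly by those ES mechanisms for which $i_1\ge\floor{\frac{k+1}{2}}$ and $n-i_m\ge\floor{\frac{k+1}{2}}$. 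Proving the optimality claim therefore reduces to showing that the proposed $\vec v$ is ES and clears these two threshold inequalities.

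Next I would compute the indices induced by $\vec v$. Writing $a_j=\alpha+(2k-1)(j-1)$, which is an integer, we have $v_j(n-1)=a_j\frac{n-1}{n}=a_j-\frac{a_j}{n}$; since $0<a_j<n$ in the relevant range, a fractional-part argument gives $\floor{v_j(n-1)}=a_j-1$, hence $i_j=\floor{v_j(n-1)}+1=a_j=\alpha+(2k-1)(j-1)$. In particular the consecutive gaps $i_{j+1}-i_j$ are all exactly $2k-1$, so the system \eqref{eq:Esconditionmmore2} holds with equality and $\PMp$ is ES by Theorem~\ref{thm:ESPMmmore2}. Moreover $i_1=\alpha$ and $i_m=\alpha+(2k-1)(m-1)$, so that $n-i_m=\ceil{\frac{n-2k(m-1)+1}{2}}+m-2$. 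I would then invoke the hypothesis $n\ge 2km$ to show $\alpha\ge k\ge\floor{\frac{k+1}{2}}$ and $n-i_m\ge k+m-1\ge\floor{\frac{k+1}{2}}$, which places $\PMp$ in the first branch and yields $ar(\PMp)=\frac{mk}{(m-\frac12)k+\frac12}$; by the previous paragraph this is the minimum over all ES percentile mechanisms, which establishes optimality.

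Finally, the stated numerical bound follows by a direct comparison. Rewriting the optimal ratio as $\frac{mk}{mk-\frac{k-1}{2}}$ and cross-multiplying against $1+\frac{1}{2m-1}=\frac{2m}{2m-1}$ reduces the desired inequality $\frac{mk}{(m-\frac12)k+\frac12}<\frac{2m}{2m-1}$ to $2m^2k-mk<2m^2k-mk+m$, i.e. to $0<m$, which holds trivially. The main obstacle I anticipate is the index/threshold bookkeeping: pinning down $\floor{v_j(n-1)}$ exactly so that the gaps come out to be precisely $2k-1$ (and hence the ES system is satisfied with no slack), and then checking that the particular choice of $\alpha$ pushes \emph{both} $i_1$ and $n-i_m$ above $\floor{\frac{k+1}{2}}$. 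This last step is where the hypothesis $n\ge 2km$ is consumed, and where the asymmetry of $\alpha$ between the two endpoints must be verified to still clear the threshold on the right-hand side.
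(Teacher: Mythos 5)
Your proposal is correct and follows essentially the same route as the paper: both combine the ES characterization of Theorem~\ref{thm:ESPMmmore2} with the two-branch ratio formula of Theorem~\ref{thm:approximationratiom>2}, argue that any ES mechanism with $\min\{i_1,n-i_m\}<\floor{\frac{k+1}{2}}$ has a strictly worse ratio so the first-branch value is the minimum, verify that the proposed $\vec v$ is ES (gaps exactly $2k-1$) and clears both thresholds using $n\ge 2km$, and finish with the same algebraic comparison against $1+\frac{1}{2m-1}$. If anything, your explicit index computation $i_j=\alpha+(2k-1)(j-1)$ with $n-i_m=\ceil*{\frac{n-2k(m-1)+1}{2}}+m-2$ is more careful than the paper's own bookkeeping (which asserts $i_1=n-i_m$ and $i_1+n-i_m=n-2k(m-1)+1$, identities that only hold for $m=2$), so your threshold verification actually patches a small arithmetic slip in the published argument.
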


\begin{proof}
    Owing to Theorem \ref{thm:approximationratiom>2}, the approximation ratio is lower when $\min\{i_1,n-i_m\}$ is maximized, thus when $i_1=n-i_m$.
    Thus the best mechanism places the first and last facility at $x_\ell$ and $x_{n-\ell}$, where $\ell$ is a suitable integer.
    Since $i_1+n-i_m=n-2k(m-1)+1$, we complete the first half of the proof.
    Notice that, if $i_1$ or $i_m$ is less than $\floor{\frac{k+1}{2}}$, then we have that $\min\{i_1,i_m\}\le \floor{\frac{k+1}{2}}$.
    By comparing the Social Welfare computed in the Theorem \ref{thm:approximationratiom>2}, we have that $\Big(m-\frac{1}{2}\Big)k+\frac{1}{2}\ge (m-1)k + \min\{i_1,i_m\}$, indeed
    \[
        \Big(m-\frac{1}{2}\Big)k+\frac{1}{2}-(m-1)k- \min\{i_1,i_m\}\ge \frac{k}{2}+\frac{1}{2}- \floor{\frac{k+1}{2}}\ge 0,
    \]
    thus the approximation ratio of the mechanism is smaller when $i_1,i_m\ge \floor{\frac{k+1}{2}}$.
    Moreover, in this case, the approximation ratio does not depend on the specific $\vec v$, thus any ES percentile mechanism whose $\vec v$ is such that $i_1,i_m\ge \floor{\frac{k+1}{2}}$ achieves the minimum approximation ratio.
    Notice that, by definition, the vector $\vec v$ where $v_j=\frac{\alpha+(2k-1)(j-1)}{n}$ for $j\in[m]$ where $\alpha=\floor{\frac{(n-2k(m-1)+1)}{2}}$ is such that $i_1,i_m\ge \floor{\frac{k+1}{2}}$.
    Moreover, owing to Theorem \ref{eq:v_feasible}, it is also ES, hence it achieves the minimal approximation ratio.
    Lastly, notice that
    \[
        \frac{mk}{(m-\frac{1}{2})k+\frac{1}{2}}\le \frac{mk}{(m-\frac{1}{2})k}=\frac{(m-\frac{1}{2})k+\frac{k}{2}}{(m-\frac{1}{2})k}=1+\frac{1}{2m-1},
    \]
    which concludes the proof.
\end{proof}

Notice that, if $n\ge 2km$, the approximation ratio of the best percentile mechanism decreases as the number of facilities increases.
Noticeably, when $m$ goes to infinity, the approximation ratio goes to $1$.

\subsection{Case \texorpdfstring{$n < (2k-1)m$}{Lg}.}

We now consider the case in which the number of agent is too small and thus Theorem \ref{thm:ESPMmmore2} does not hold.
In this case, it is possible to circumvent Theorem \ref{thm:ESPMmmore2}, by considering an percentile mechanism that places all the facilities at either one or two locations, that is the percentile mechanisms whose associated vector $\vec v$ has at most two different entries.
When more than one facility is placed at the same location, we considered them as a unique facility whose capacity is the sum of all the facilities placed at the common location.
Notice that it is sufficient to consider the case in which facilities are divided between at most two points.
Indeed, splitting the facilities between three or more positions leads to another unfeasible system of the form \eqref{eq:Esconditionmmore2}.
Owing to the results of Section \ref{sec:twofacilities}, we have that there are only three ways in which the facilities can be grouped.

\subsubsection{The All-aside mechanisms}
When the mechanism places the facilities at two different locations, we can use the results proposed in Section \ref{sec:twofacilities}.
Indeed, owing to Theorem \ref{thm:bestPMPmechanism}, we know that the approximation ratio becomes lower as the difference in capacity between facilities is smaller.
For this reason, we consider a mechanism that splits the facilities as evenly as possible.

\begin{mechanism}[\texttt{All-aside} mechanism]
    Let $k$ be the capacity of $m$ facilities and let $a,b\in \mathbb{N}$ be such that $a+2mk\le b\le n$. 
    Given in input a vector $\vec x\in[0,1]^n$, the \texttt{All-aside} mechanism associated with $a$ and $b$, namely $AS_{a,b}$, places $\ceil{\frac{m}{2}}$ facilities at $x_a$ and $\floor{\frac{m}{2}}$ facilities at $x_b$.
\end{mechanism}

Owing to Theorem \ref{thm:characterization2facilities}, the \texttt{All-aside} mechanism is absolutely truthful and ES.
Moreover, we can extend Theorem \ref{thm:i_1+i_2} to this case.

\begin{theorem}
\label{thm:allasidear}
    The approximation ratio of every $AS_{a,b}$ is determined by Theorem \ref{thm:i_1+i_2} by setting $k_1=\ceil{\frac{m}{2}}k$, $k_2=\floor{\frac{m}{2}}k$, $i_1=a$, and $i_2=b$.
\end{theorem}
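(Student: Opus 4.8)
The plan is to reduce the analysis of the \texttt{All-aside} mechanism to the two-facility setting of Section~\ref{sec:twofacilities}. The crucial observation is that $AS_{a,b}$ places $\ceil{\frac{m}{2}}$ facilities at the single point $x_a$ and $\floor{\frac{m}{2}}$ facilities at the single point $x_b$, so all facilities sharing a location are indistinguishable from the agents' viewpoint: any agent selecting one of the facilities at $x_a$ incurs the same distance $|x_i - x_a|$, and likewise at $x_b$. First I would argue that, as far as the Nash Equilibria and the Social Welfare are concerned, the cluster of $\ceil{\frac{m}{2}}$ co-located facilities of capacity $k$ behaves exactly as a single facility of capacity $k_1 = \ceil{\frac{m}{2}}k$, and symmetrically the cluster at $x_b$ behaves as a single facility of capacity $k_2 = \floor{\frac{m}{2}}k$. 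Since $k_1 + k_2 = mk$, this collapses the $m$-facility FCFS game induced by $AS_{a,b}$ to the two-facility game with capacities $k_1, k_2$ placed at $x_a, x_b$.

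To make this precise, I would show that in any pure Nash Equilibrium the agents accommodated at $x_a$ are, up to ties, exactly the $k_1$ agents closest to $x_a$ among those not served at $x_b$, and symmetrically for $x_b$. The argument mirrors the deviation analysis of Theorem~\ref{thm:characterization2facilities}: if some agent closer to $x_a$ than a currently served agent $x_j$ were left unserved, it could switch to whichever facility in the $x_a$ cluster is serving $x_j$ and displace it, strictly increasing its own utility and contradicting the equilibrium property. This rules out the only way the cluster could differ from a single facility, namely an ``unevenly loaded'' configuration in which a globally closer agent is bumped while a farther one is kept. Consequently the partition of agents between the two locations, and the total utility produced at each location, coincide with those of the two-facility game, and in particular the Social Welfare is identical in every equilibrium.

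Once this equivalence is in place, I would verify that the hypotheses of Theorem~\ref{thm:i_1+i_2} hold under the identifications $i_1 = a$, $i_2 = b$, $k_1 = \ceil{\frac{m}{2}}k$, and $k_2 = \floor{\frac{m}{2}}k$. The constraint $a + 2mk \le b$ yields $b - a \ge 2mk > mk - 1 = k_1 + k_2 - 1$, so condition~\eqref{eq:v_feasible} of Theorem~\ref{thm:characterization2facilities} is satisfied and the induced two-facility game is ES, as already noted. The worst-case instances described in Theorem~\ref{thm:i_1+i_2} then transfer verbatim, since they depend only on the capacities and on the indices of the two occupied positions; evaluating the corresponding formulae with the stated substitutions gives the approximation ratio of $AS_{a,b}$.

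I expect the main obstacle to be the equivalence argument of the second paragraph: one must check carefully that splitting a single facility of capacity $k_1$ into $\ceil{\frac{m}{2}}$ facilities of capacity $k$ introduces no new pure Nash Equilibrium with a different Social Welfare. The deviation argument closes this gap, but it requires handling the tie-breaking priority consistently across the whole cluster, so that the set of served agents at each location is determined solely by distance up to the combined capacity, exactly as in the single-facility case.
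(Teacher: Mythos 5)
Your collapse argument (co-located facilities of capacity $k$ behave as a single facility of summed capacity) is fine, and it is the part the paper treats as immediate; but it only handles the \emph{denominator} of the approximation ratio, i.e.\ it shows that $SW_{AS_{a,b}}(\vec x)$ equals the Social Welfare of the corresponding two-facility percentile mechanism, and that the mechanism is ES. The genuine gap is in the \emph{numerator}: for the $m$-facility problem the benchmark $SW_{opt}(\vec x)$ is a supremum over placements of $m$ facilities at up to $m$ \emph{distinct} locations, which is in general larger than the two-facility optimum with capacities $\ceil{\frac{m}{2}}k$ and $\floor{\frac{m}{2}}k$. Hence your claim that the worst-case instances of Theorem \ref{thm:i_1+i_2} ``transfer verbatim'' is unjustified as stated: both the value of the ratio on those instances and the identity of the worst-case instance could in principle change once the optimum is allowed to split the facilities. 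This is not a hypothetical worry --- the paper itself shows (Theorem \ref{thm:allinone} and the remark preceding it, comparing with \cite{aziz2020capacity}) that exactly this extra degree of freedom of the optimum strictly worsens the approximation ratio of the AIO mechanisms, so it cannot be waved away.

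The paper's proof consists precisely of closing this gap: one observes that $SW_{opt}(\vec x)\le mk=k_1+k_2$ always (each served agent has utility at most $1$ and total capacity is $mk$), and that on the worst-case instances of Theorem \ref{thm:i_1+i_2} --- where the agents are massed at $0$ and at $1$, and, thanks to the constraint $a+2mk\le b\le n$, each mass is large enough to saturate the relevant capacities --- the value $mk$ is still attained even when the $m$ facilities may be placed separately. Therefore the numerator is unchanged, and together with your (correct) identification of the denominator the formulae of Theorem \ref{thm:i_1+i_2} apply with the stated substitutions. Adding this one observation would complete your proof; as written, the step you flag as the ``main obstacle'' (the cluster-equivalence argument) is the easy half, and the half you skip is the actual content of the paper's proof.
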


\begin{proof}
    It follows directly from Theorem \ref{thm:i_1+i_2}.
    Indeed, it suffices to prove that even if we have $m$ facilities to locate, the optimal SW we can obtain by locating $m$ facilities with capacity $l$ is the same as locating two facilities with capacity $\ceil{\frac{m}{2}}k$ and $\floor{\frac{m}{2}}k$.
    Since the worst-case instance of any $\PMp$ with $\vec v\in[0,1]^{2}$ places $i_1$ agents $0$ and the others at $1$, the optimal SW remains  $mk$ even though we locate $m$ facilities separately.
\end{proof}

\subsubsection{The AIO and SBS mechanisms}
Lastly, we consider the case in which the mechanism places all the facilities at one place, hence the percentile vector $\vec v=(v,v,\dots,v)$ for a $v\in[0,1]$.
In this case, every $\vec v=(v,v,\dots,v)$ induces an absolutely truthful and ES mechanism.
By the same argument used to prove Theorem \ref{thm:ar_med}, the best percentile mechanism places all the facilities at the median agent, that is $\vec m=(0.5,0.5,\dots,0.5)$.
In our case, however, the approximation ratio guarantees are worse than the one presented in \cite{aziz2020capacity}.
Indeed, since in our case the capacity can be split at $m$ different locations, the optimal solution has a further degree of freedom that heightens the approximation ratio of the mechanism.

\begin{theorem}
\label{thm:allinone}
    Let $k>1$ be the capacity of the facilities and fix $\vec v=(0.5,\dots,0.5)$.
    If $n\le (m+1)k$, we have that
    \[
        ar(\PMp)=\frac{2(m-1)k+(n-(m-1)k)+1}{mk+1}.
    \]
    Otherwise, $ar(\PMp)=\frac{2mk}{mk+1}= 2-\frac{2}{mk+1}$.
\end{theorem}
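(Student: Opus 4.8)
The plan is to mirror the strategy already used for Theorem \ref{thm:ar_med}. Since $\vec v=(0.5,\dots,0.5)$ sends all $m$ facilities to the median agent $x_r$ with $r=\floor{(n-1)/2}+1$, they act as a single facility of capacity $mk$ sitting on $x_r$; because every \texttt{AIO} configuration is ES (all facilities share one location), $SW_{\PMp}(\vec x)$ is well defined and equals the total utility of the $mk$ agents closest to $x_r$. The whole argument therefore reduces to a worst-case analysis of one capacitated point located at the median, and the only genuinely new ingredient relative to Theorem \ref{thm:ar_med} is the optimum's freedom to spread its facilities over up to $m$ distinct points.

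First I would reduce to three-point instances. Exactly as in Theorem \ref{thm:ar_med}, pushing every non-median agent to $0$ or to $1$ simultaneously lowers $SW_{\PMp}$ (it lengthens the distances of the served agents) and raises $SW_{opt}$ (it lets the optimum co-locate facilities with the clusters), so it suffices to consider $x_1=\dots=x_{r-1}=0$, $x_r=\lambda$, $x_{r+1}=\dots=x_n=1$ with $\lambda\in[0,1]$, and by left/right symmetry I may take $\lambda\le\tfrac12$. On such an instance the served set is the median (utility $1$) together with the $mk-1$ remaining nearest agents, giving $SW_{\PMp}(\lambda)=1+(1-\lambda)\min\{r-1,mk-1\}+\lambda\max\{0,mk-r\}$. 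This is piecewise linear in $\lambda$, and since $n>mk$ forces $r\ge\tfrac{mk+1}{2}$, its slope $\max\{0,mk-r\}-\min\{r-1,mk-1\}$ is nonpositive, so $SW_{\PMp}$ is minimized at $\lambda=\tfrac12$, where every non-median agent lies at distance $\tfrac12$ and $SW_{\PMp}=\tfrac{mk+1}{2}$.

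The crux is computing $SW_{opt}(\lambda)$, and this is where the extra degree of freedom highlighted before the statement enters. On the three-point instance the only worthwhile optimal locations are $0$, $\lambda$, and $1$, so I would write $SW_{opt}$ as a maximum over splits $a_0+a_\lambda+a_1=m$ of the full-utility coverage $\min\{a_0k,\,r-1\}+\min\{a_1k,\,n-r\}$, augmented by the at most $\tfrac12$-valued contribution obtained by spending any leftover capacity on the median. The decisive quantity is thus the largest number of extreme agents that $m$ capacity-$k$ facilities can serve at full utility, namely $\max_a\big(\min\{ak,\,r-1\}+\min\{(m-a)k,\,n-r\}\big)$. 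Whether this reaches the full $mk$ (so $SW_{opt}=mk$) or falls short by the rounding forced by the integrality of the split $a$ is precisely what separates the two regimes: the packing into the two clusters of sizes $r-1$ and $n-r$ is perfect exactly when $n$ is large enough relative to $(m+1)k$, and otherwise the optimum forfeits part of a facility's capacity and must finish on the median. Feeding the two resulting expressions for $SW_{opt}$ over $\tfrac{mk+1}{2}$, and using that both $SW_{\PMp}$ and $SW_{opt}$ are piecewise linear in $\lambda$ so the ratio is extremized at $\lambda\in\{0,\tfrac12\}$, produces the two branches of the stated formula.

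The main obstacle is exactly this optimal-side covering computation: one must track, through the floor in $r=\floor{(n-1)/2}+1$ and in the balancing index $a$, how $m$ facilities of capacity $k$ pack into the clusters of sizes $r-1$ and $n-r$, and verify that the transition between perfect packing and deficient packing is governed by the threshold $n=(m+1)k$. Everything else — the reduction to three points, the $\lambda$-minimization of $SW_{\PMp}$ giving $\tfrac{mk+1}{2}$, and the endpoint optimization of the ratio — follows the template already established for Theorem \ref{thm:ar_med}, so the bulk of the effort is concentrated in pinning down $SW_{opt}$ and its dependence on the divisibility of the cluster sizes by $k$.
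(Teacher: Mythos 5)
Your first two steps are sound and coincide with the paper's proof: the reduction to three-point instances $\{0,\lambda,1\}$ and the computation $SW_{\PMp}=\tfrac{mk+1}{2}$ at $\lambda=\tfrac12$ are exactly the paper's opening moves. The genuine gap is in the third step, which you yourself flag as ``the bulk of the effort'' and then never carry out: you assert that the packing maximum $\max_a\big(\min\{ak,\,r-1\}+\min\{(m-a)k,\,n-r\}\big)$ reaches $mk$ precisely when $n>(m+1)k$, and that the deficit otherwise reproduces the first branch. That assertion is false, so the deferred verification cannot succeed. Concretely, take $m=2$, $k=3$, $n=7$ (all hypotheses hold: $mk=6<7=n\le (m+1)k=9$). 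The clusters have sizes $r-1=3$ and $n-r=3$, and the split $a=1$ packs perfectly, $\min\{3,3\}+\min\{3,3\}=6=mk$; hence on the worst-case instance $SW_{opt}=mk=6$ while the mechanism gets $\tfrac72$, giving ratio $\tfrac{12}{7}=\tfrac{2mk}{mk+1}$ (the second-branch value), whereas your claimed threshold assigns $n=7$ to the first branch, which gives only $\tfrac{11}{7}$. Perfect packing is governed by the divisibility of the cluster sizes $\floor{(n-1)/2}$ and $\ceil{(n-1)/2}$ by $k$ (one needs an integer $a$ with $ak\le r-1$ and $(m-a)k\le n-r$), not merely by the size of $n$ relative to $(m+1)k$: the condition $n\ge(m+1)k$ is sufficient for perfect packing but far from necessary.

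It is worth seeing why your (more careful) setup diverges from the paper here. The paper obtains the first branch by asserting that $m-1$ facilities can be filled at the extremes and that the leftover agents are then ``divided evenly among $0$ and $1$,'' so the last facility collects $\tfrac{n-(m-1)k}{2}+\tfrac12$. Your max-over-integer-splits formulation does not license that evenness assumption, and rightly so: the full facilities may deplete one side entirely (as in the example above), and conversely the last facility may be unable to reach the claimed value (e.g.\ $m=3$, $k=3$, $n=11$: every split yields at most $8$ extreme agents at full utility, so $SW_{opt}=8+\tfrac12<9=(m-1)k+\tfrac{n-(m-1)k}{2}+\tfrac12$). So the obstruction is not a technicality you can route around; carrying out your own step 3 honestly produces a case analysis on the residues of the cluster sizes modulo $k$ that does not collapse to the threshold $n=(m+1)k$, and hence does not yield the two stated branches. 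As written, the proposal does not prove the statement.
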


\begin{proof}
    By definition, we have that for every input $\vec x\in[0,1]^n$ the facility is placed at $x_{\floor{\frac{n+1}{2}}}$.
    The number of agents on the left of $y_1$ and the number of agents on the right of $y_1$ is the same, hence the SW of the mechanism is minimized when $x_i=0$ when $i<\floor{\frac{n+1}{2}}$, $x_{\floor{\frac{n+1}{2}}}=\frac{1}{2}$, and $x_i=1$ otherwise.
    The SW of the mechanism is $\frac{mk+1}{2}$.
    If $n\le (m+1)k$, the optimal SW on the instance is $(m-1)k+\frac{n-(m-1)k}{2}+\frac{1}{2}$.
    Indeed, we can locate $m-1$ facilities at either $0$ or $1$ that only accommodate the agents at $0$ and $1$.
    The total combined utility of the agents accommodated by these $m-1$ facilities is $(m-1)k$.
    Since the agents are divided evenly among $0$ and $1$, the maximum utility attainable by the last facility is at most $\frac{n-(m-1)k}{2}+\frac{1}{2}$.
    Therefore the total utility of the optimal SW is $(m-1)k+\frac{n-(m-1)k}{2}+\frac{1}{2}$.
    If $n> (m+1)k$, the optimal SW on this instance is $mk$, and it is attained when $\floor{\frac{m}{2}}$ facilities are placed at $0$ and the others at $1$.
    To conclude the thesis it suffices to take the ratio of the optimal SW and the SW of the mechanism.
\end{proof}

\begin{remark}
    Notice that the highest approximation ratio occurs when $n=km+1$, in which case $ar(\PMp)=2-\frac{k}{mk+1}$.
Thus, as the number of facilities increases, we attain an approximation ratio that converges to $2$
\end{remark}.
%
%

Lastly, we observe that Side-By-Side mechanisms can be generalized to this case. 
It suffices to distribute all facilities between the positions of two consecutive agents.
As for the case where $m = 2$, the worst-case analysis of these mechanisms is conducted in a similar manner to the AIO mechanisms, yielding comparable results.

\section{The Higher Dimensional Problem}

In this section, we extend our study to higher dimensional settings.
We recall that an $m$-dimensional percentile mechanism is defined by a $d$ times $m$ matrix $V = (v^{(1)}, v^{(2)}, \dots, v^{(d)})$, where each row vector $v^{(j)}= (v^{(j)}_{1}, v^{(j)}_{2}, \dots, v^{(j)}_{m}) \in [0, 1]^m$ lies within the unit cube.
Given a the agents' reports $\vec x$, the percentile mechanism associated with $V$ determines the $k$-th facility's location by selecting, for each dimension $j \le d$, the $v^{(j)}_{k}$-th percentile of the ordered projection of $\vec x$ on the $j$-th dimension to be the coordinate of $k$-th facility in that dimension.
More formally, we have that
\[
    y^{(j)}=(z^{(1)}_{\floor{v^{(1)}_{j}(n-1)}+1},z^{(2)}_{\floor{v^{(2)}_{j}(n-1)}+1},\dots, z^{(d)}_{\floor{v^{(d)}_{j}(n-1)}+1}),
\]
where $z^{(l)}$ is the vector containing the projections of the agents positions on the $l$-coordinates re-ordered increasingly.
To summarize, a $d$-dimensional percentile mechanism determines the coordinates of the facilities by running $d$ percentile mechanism along each coordinate.
We call the percentile mechanism determining the $j$-th coordinates of the $m$ facilities the $j$-th coordinate percentile mechanism.
In what follows, we extend our framework to higher-dimensional problems and demonstrate that most of the previously presented results do not generalize to this setting.
This remains true even when we need to locate only two facilities among agents supported in a bidimensional space. 
For this reason and for the sake of simplicity, we limit our discussion to the case where two facilities need to be located, and each agent position is a point in the unit cube $[0,1]^2$.
%
%

\subsection{The Setting}

Let $\vec x\in ([0,1]^2)^n$ be the bi-dimensional vector containing the position of $n$ agents in the interval $[0,1]^2$, so that $\vec x=(x^{(1)},\dots,x^{(n)})$, where $x^{(i)}=(x_1^{(i)},x_i^{(i)})\in [0,1]^2$ is the vector containing the coordinates of the $i$-th agent.
We denote with $\vec k=(k_1,k_2)$ the $2$-dimensional vector containing all the capacities of the $2$ facilities, and assume that $k_1+k_2<n$.
In this case, a mechanism is a function $M:([0,1]^2)^n\to (\erre^2)^2$ that maps a vector containing the agents' reports to a facility location $\vec y=(y^{(1)},y^{(2)})$, where $y^{(j)}\in\erre^2$ is the position of the facility with capacity $k_j$.
As for the one dimensional case, once the facilities are located, agents compete in a First-Come-First-Served (FCFS) to determine their utilities.
The FCFS game is defined as in Section \ref{sec:settingstatement}, with the only difference being that the distance between an agent located at $\vec x^{(i)}$ and a facility located at $\vec y^{(j)}$ is 
\[
d(\vec x^{(i)},\vec y^{(j)})=\sqrt{\big(x^{(i)}_1-y^{(j)}_1\big)^2+\big(x^{(i)}_2-y^{(j)}_2\big)^2},
\]
so that the utility of every agent is either $0$ or $\sqrt{2}-d(\vec x^{(i)},\vec y^{(j)})$ depending on the outcome of the FCFS game.
It is then easy to see that the construction used to prove Theorem \ref{thm:NEexistance} trivially extends to this case, so that a pure Nash Equilibrium to the FCFS game always exists.
Lastly, given a Nash Equilibrium, namely $\gamma$, we define the Social Welfare associated with $\gamma$ as the sum of all the agent's utilities according to $\gamma$.

\subsection{The Equilibrium Stable Mechanisms}

We now discuss under which conditions a percentile mechanism is ES in the higher dimensional framework.
First of all, we notice that every ES percentile mechanism $M$ must be such that each Coordinate Percentile Mechanism (CPM) is ES.
Indeed, if the first CPM was not ES, it means that there exists a one dimensional instance $\vec x=(x_1,\dots,x_n)$ whose Nash Equilibria induce different Social Welfare values.
It is then easy to see that the percentile mechanism $M$ is not ES by considering the instance $\{x^{(i)}=(x_i,0)\}_{i\in [n]}$.
Unfortunately, assuming that each coordinate percentile mechanism is ES does not guarantee that the percentile mechanism is ES in higher dimensions.
In particular, enforcing Theorem \ref{thm:approximationratiom>2} to each coordinate percentile mechanism is not enough to characterize all the ES percentile mechanisms when $d>2$.
As the next example shows, percentile mechanisms whose coordinate percentile mechanisms are SBS or WG mechanisms no longer ensure that the mechanism is ES.

\begin{example}
\label{example2high}
    Let us consider a problem in which we have $10$ agents, that is $n=10$, and two facilities with capacity $2$, that is $k_1=k_2=2$.
    Let us consider the percentile mechanism induced by $v^{(1)}=v^{(2)}=(0.4,0.5)$, so that both the coordinate percentile mechanisms are SBS mechanisms.
    In what follows, we denote this mechanism with PM.
    Let us consider the following instance $x_1=x_2=x_3=(0,0)$, $x_4=(0.4,0.4)$, $x_5=(0.4,0.5)$, $x_6=(0.5,0.5)$, and $x_7=\dots=x_{10}=(0.8,0.8)$, so that PM places the facilities at $(0.4,0.4)$ and $(0.5,0.5)$.
    Notice that, depending on what strategy the agent at $(0.4,0.5)$ plays, the Social Welfare induced by the percentile mechanism changes.
\end{example}

\begin{figure}[t]
    \centering
    \begin{subfigure}[b]{0.45\textwidth}
        \centering
        \includegraphics[width=\textwidth]{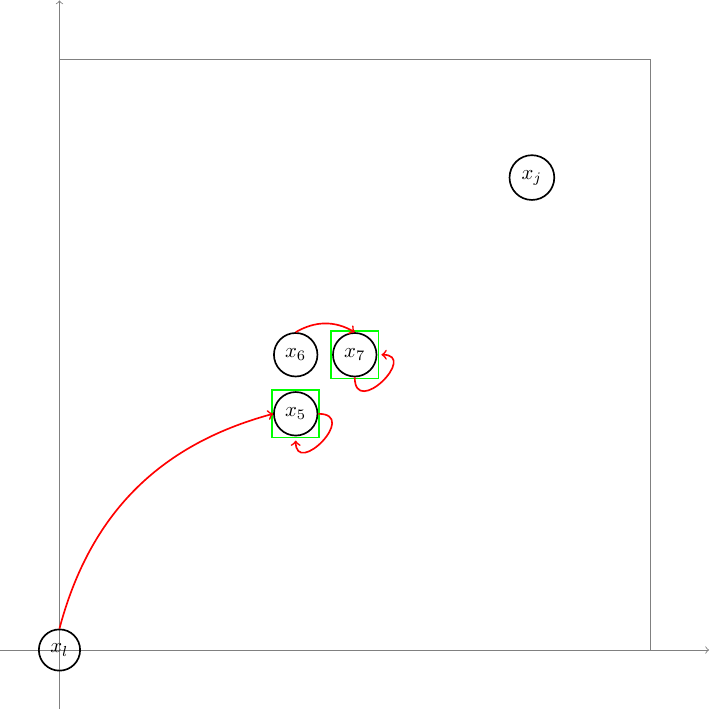}
    \end{subfigure}
    \hspace{0.02\textwidth} 
    \begin{subfigure}[b]{0.45\textwidth}
        \centering
        \includegraphics[width=\textwidth]{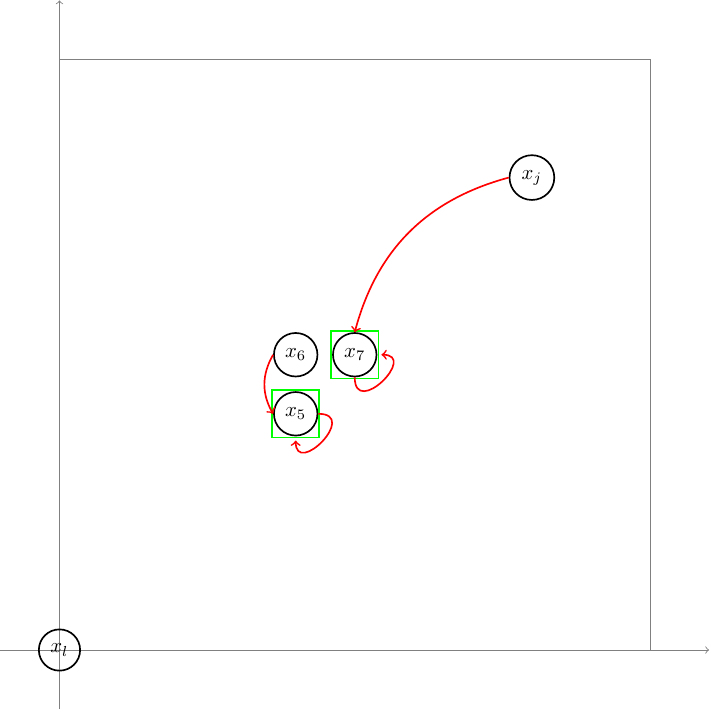}
    \end{subfigure}
    
    \caption{The two Nash Equilibria described in Example \ref{example2high}. The circles represents the agents (agents from $1$ to $4$ are represented by $x_l$ while agents from $8$ to $10$ are represented by $x_j$), the green squares the facilities and the red arrows the strategies played by the agents getting a non-null utility in the two different Nash Equilibria.}
    \label{fig:overall2}
\end{figure}

Example \ref{example2high}, can be extended to handle any percentile mechanism whose coordinates are either SBS or WG.
Indeed, the only percentile mechanisms that are ES for higher dimensional problem must determine one of the two coordinates using an AIO mechanism.
In particular, we now show that any percentile mechanism whose CPMs are two AIO mechanisms or AIO mechanism an SBS mechanism are ES.

\begin{theorem}
    A percentile mechanisms is ES if and only if all its CPMs are AIO mechanisms or all the CPMs but one are AIO mechanisms.
    In the latter case, the CPM that is not an AIO mechanism must be a SBS mechanism.
\end{theorem}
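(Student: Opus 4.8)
The plan is to reduce the bidimensional two--facility problem to the one--dimensional analysis of Section~\ref{sec:twofacilities} through a single structural observation: if one coordinate percentile mechanism (CPM) is an AIO mechanism, then the two facilities share that coordinate, say $y^{(1)}_2=y^{(2)}_2=c$, and the term $(x^{(i)}_2-c)^2$ of the squared Euclidean distance is common to both facilities. Hence deciding which facility is nearer to a given agent depends \emph{only} on the non--shared coordinate. I will use this to dispatch every configuration in which at least one CPM is AIO, and treat the remaining configurations (two non--AIO CPMs) by a direct construction. The guiding principle is that the Social Welfare can differ across Nash Equilibria only when some agent is equidistant from \emph{both} facilities and its choice of facility changes which other agents are accommodated; the whole argument amounts to deciding when such a ``swing'' agent can be forced to exist.

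\textbf{Sufficiency.} If both CPMs are AIO then $y^{(1)}=y^{(2)}$, every agent sees a single facility point, and the $k_1+k_2$ nearest agents are accommodated with identical utilities in every NE (boundary ties contribute equally and serving agent at either facility yields the same value), so the mechanism is ES. Suppose instead that exactly one CPM, say the second coordinate, is AIO, and the first CPM is SBS, so $y^{(1)}_1=z^{(1)}_i$ and $y^{(2)}_1=z^{(1)}_{i+1}$ for consecutive order statistics. By the cancellation above, an agent is strictly closer to $y^{(1)}$ (resp.\ $y^{(2)}$) if and only if its first coordinate lies below (resp.\ above) the midpoint $\tfrac{1}{2}(z^{(1)}_i+z^{(1)}_{i+1})$. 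Since $z^{(1)}_i$ and $z^{(1)}_{i+1}$ are consecutive order statistics, no agent has a first coordinate strictly between them, hence none sits at the midpoint: every agent strictly prefers one facility and no swing agent exists. The instance then splits into the groups $A_1,A_2$ of Theorem~\ref{thm:SBS_ES}, and the same case analysis (now with utilities $\sqrt{2}-d$) shows the accommodated set, and thus the Social Welfare, is constant across NE. (If $z^{(1)}_i=z^{(1)}_{i+1}$ the facilities coincide and we fall back to the AIO case.)

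\textbf{Necessity.} By the above, the configurations still to be excluded are exactly (i) one AIO CPM together with a WG CPM, and (ii) two non--AIO CPMs. In case (i), fix the AIO coordinate at a common value $c$; the WG rank gap exceeds one in the other coordinate, so I can build an instance with an agent lying strictly between the two facilities in that coordinate, placed exactly at their midpoint so as to be equidistant from both. I then make this agent \emph{pivotal}: the intermediate ranks are filled with ``filler'' agents whose shared coordinate is pushed far from $c$, so they are remote from both facilities and never accommodated, while each facility is crowded with one accommodated agent together with a single contested agent, the two contested agents having different utilities. Routing the swing agent left displaces the right contested agent and vice versa, yielding two NE of different Social Welfare; hence the mechanism is not ES. Case (ii) is the genuine extension of Example~\ref{example2high}: when both CPMs are non--AIO the two facilities differ in both coordinates, their perpendicular bisector is a proper line, and a swing agent can be placed on it and made pivotal by the same crowding device.

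\textbf{Main obstacle.} The conceptual core is the cancellation observation together with the no--midpoint--agent argument for SBS; the delicate part is purely constructive. In the necessity direction one must exhibit instances on which the percentile mechanism actually returns the intended facility positions, i.e.\ all prescribed order statistics in \emph{both} coordinates must land on the chosen points while the swing and crowding agents occupy the remaining rank slots. Coordinating these rank constraints across the two coordinates---so that the filler agents pad the WG gap (or the diagonal configuration) without perturbing the facility--defining ranks, and so that the swing agent is simultaneously equidistant and pivotal---is the principal bookkeeping difficulty, and the place where ties in the reported coordinates must be handled with care.
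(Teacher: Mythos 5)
Your proposal matches the paper's proof in both structure and substance: your sufficiency argument (the shared AIO coordinate cancels in the distance comparison, so the agents split into the two groups $A_1$, $A_2$ of Theorem~\ref{thm:SBS_ES}, whose case analysis then applies verbatim) is exactly the paper's, and your necessity argument is precisely the swing-agent scheme the paper instantiates in Examples~\ref{example2high} and~\ref{ex_high2} and then asserts extends to every SBS/WG combination. The only difference is presentational --- the paper exhibits two concrete numerical instances where you describe the construction generically (an equidistant pivotal agent whose choice of facility displaces accommodated agents of different utilities) --- so the two arguments are essentially the same.
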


\begin{proof}
    If a percentile mechanism uses two AIO mechanisms to determine the coordinates of the facilities, we have that the mechanisms places all the facilities at the same spot, thus the mechanism is trivially ES.
    Let us consider the other case, in which one coordinate is determined via an AIO mechanism and the other one by an SBS mechanism.
    Without loss of generality, let us assume that the first coordinate is determined by the SBS, so that both the facilities have the same second coordinate, i.e. $y^{(1)}_2=y^{(2)}_2$.
    Let us define the following sets
    \begin{equation}
        A_1=\big\{x^{(i)}\;\text{s.t.}\; x^{(i)}_1\le y^{(1)}_1\big\}\quad\quad\text{and}\quad\quad A_2=\big\{x^{(i)}\;\text{s.t.}\; x^{(i)}_1\ge y^{(2)}_1\big\}.
    \end{equation}
    From our assumptions, we have that $A_1\cup A_2=\{x^{(i)}\}_{i\in [n]}$ and $A_1\cap A_2=\emptyset$.
    By definition of $A_1$ and $A_2$, we have that every agent in $A_1$ prefers $y^{(1)}$ to $y^{(2)}$.
    Vice-versa, every agent in $A_2$ prefers $y^{(2)}$ to $y^{(1)}$.
    We can then conclude the proof by using the same argument used to prove Theorem \ref{thm:SBS_ES}.
\end{proof}

Lastly, we show that a mechanism determining the facilities using an AIO and a WG mechanism is not ES.

\begin{example}
\label{ex_high2}
    Let us consider an instance for $n=10$, $k_1=k_2=2$, and the percentile mechanism induced by the vectors $v^{(1)}=(0,0.3)$ and $v^{(2)}=(0,0)$.
    Owing to Theorem \ref{thm:characterization2facilities}, we have that each coordinate percentile mechanism is ES, however the global percentile mechanism is not.
    Indeed, let us consider the following instance $x_1=(0,0)$, $x_2=(0.2,0)$, $x_3=(0.2,1)$, $x_4=(0.4,0)$, and $x_5=\dots=x_{10}=(0.7,0)$.
    The percentile mechanism places the facilities at $(0,0)$ and at $(0.4,0)$.
    This facility location induces two Nash Equilibria, one in which agent $x_3$ plays $1$ and one in which agent $x_3$ plays $2$.
    When agent $x_3$ plays $1$, we have that one of the agents located at $(0.5,0)$ is accommodated by $y_2$, thus the Social Welfare is $\sqrt{2}(3.5)$.
    If agent $x_3$ plays $2$, one of the agents located at $(0.7,0)$ is accommodate by the facility $y_1$, thus the Social Welfare is $\sqrt{2}(3.1)$.
    We then conclude that the percentile mechanism is not ES.
\end{example}

\begin{figure}[t]
    \centering
    \begin{subfigure}[b]{0.45\textwidth}
        \centering
        \includegraphics[width=\textwidth]{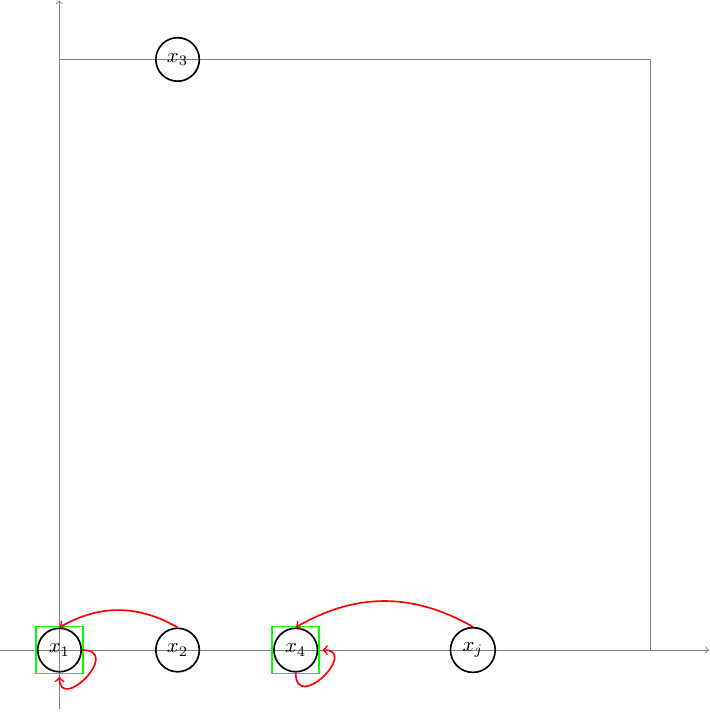}
    \end{subfigure}
    \hspace{0.02\textwidth} 
    \begin{subfigure}[b]{0.45\textwidth}
        \centering
        \includegraphics[width=\textwidth]{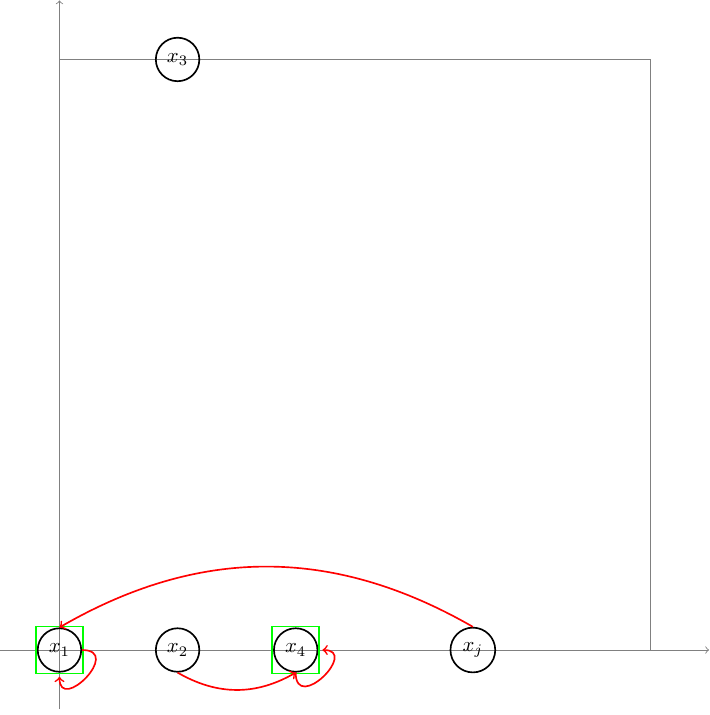}
    \end{subfigure}
    
    \caption{The two Nash Equilibria described in Example \ref{ex_high2}. The circles represents the agents (agents from $5$ to $10$ are represented by $x_j$), the green squares the facilities and the red arrows the strategies played by the agents getting a non-null utility in the two different Nash Equilibria.}
    \label{fig:overall3}
\end{figure}

Lastly, we characterize the approximation ratio of the ES percentile mechanism that determines the coordinates of both facilities using the median mechanism.
Indeed, by a similar argument used in Theorem \ref{thm:ar_med}, we have that the percentile mechanism induced by $V=((0.5,0.5),(0.5,0.5))$ achieves the lowest approximation ratio amongst all the possible ES higher dimensional percentile mechanisms.
%
%

\begin{theorem}
\label{thm:ar_highdim}
    Let $M$ be the percentile mechanism whose coordinate percentile mechanisms are median mechanisms, then we have that
    \begin{equation}
    \label{eq:thm14:1}
        ar(M)=\frac{\sqrt{2}}{\sqrt{2}-1}-\frac{\sqrt{2}}{\sqrt{2}-1}\frac{1}{(\sqrt{2}-1)(k_1+k_2)+1}\le \frac{\sqrt{2}}{\sqrt{2}-1},
    \end{equation}
    if $k_1,k_2\le\floor{\frac{n}{2}}$ and
    \begin{equation}
    \label{eq:thm14:2}
        ar(M)=\frac{\sqrt{2}(k_2+\floor{\frac{n}{2}})+\sqrt{2}-1}{(\sqrt{2}-1)(k_1+k_2)+1},
    \end{equation}
    otherwise.
\end{theorem}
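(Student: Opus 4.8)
The plan is to mirror the blueprint of Theorem~\ref{thm:ar_med}, adapting it to the Euclidean geometry of $[0,1]^2$. First I would observe that, since $V=((0.5,0.5),(0.5,0.5))$, both facilities are assigned the same coordinate-wise median, so $M$ is an AIO mechanism that effectively places a single facility of capacity $K:=k_1+k_2$ at the median point $\vec y$. By the AIO argument it is automatically ES, and in every Nash Equilibrium the accommodated agents are, up to ties, the $K$ agents nearest to $\vec y$; hence $SW_M(\vec x)=\sum_{i\in T}\big(\sqrt 2-d(\vec x^{(i)},\vec y)\big)$, where $T$ is this set of $K$ nearest agents.

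Next I would reduce to structured worst-case instances exactly as in Theorem~\ref{thm:ar_med}: once the median $\vec y$ is fixed, the distance to each agent is monotone along each coordinate, so pushing every non-median agent toward an extreme point can only lower $SW_M$ and raise $SW_{opt}$, and the worst instances are supported on the four corners of $[0,1]^2$ together with one median agent. The decisive new ingredient, absent in one dimension, is that the coordinate-wise median \emph{decouples the two axes}: a corner such as $(0,0)$ can be made the median while almost no agent sits there, because agents at $(1,0)$ and $(0,1)$ each contribute to the "coordinate equal to $0$" majority of the \emph{other} axis. Exploiting this, I would place one agent at $(0,0)$ (utility $\sqrt 2$), roughly $\floor{\frac{n}{2}}$ agents at each of $(1,0)$ and $(0,1)$ (distance exactly $1$ from $\vec y$), and argue the mechanism accommodates the coincident agent plus $K-1$ agents at distance $1$, giving $SW_M(\vec x)=\sqrt 2+(K-1)(\sqrt 2-1)=(\sqrt 2-1)K+1$, the common denominator of both formulas.

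The two regimes then arise from computing $SW_{opt}$ on this instance. When $k_1,k_2\le\floor{\frac{n}{2}}$, each of the two corner clusters contains at least $k_1\ge k_2$ agents, so the optimum coincides its two facilities with these clusters and attains the maximal $SW_{opt}=\sqrt 2\,K$, yielding $ar(M)=\frac{\sqrt 2 K}{(\sqrt 2-1)K+1}$; rewriting this as $\frac{\sqrt 2}{\sqrt 2-1}\big(1-\frac{1}{(\sqrt 2-1)K+1}\big)$ and dropping the $+1$ in the denominator gives \eqref{eq:thm14:1} together with the bound $\le\frac{\sqrt 2}{\sqrt 2-1}$. When instead $k_1>\floor{\frac{n}{2}}$ (note $k_2<\frac{n}{2}$ is forced by $k_1+k_2<n$), no single corner can hold $k_1$ agents, so the large facility covers only a cluster of size $\floor{\frac{n}{2}}$ at utility $\sqrt 2$ and, with its residual capacity, the coincident agent at distance $1$ (utility $\sqrt 2-1$), while the second facility serves $k_2$ agents at $\sqrt 2$; this gives $SW_{opt}=\sqrt 2\big(k_2+\floor{\frac{n}{2}}\big)+\sqrt 2-1$ and hence \eqref{eq:thm14:2}. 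The two expressions coincide at the crossover $k_1=\floor{\frac{n}{2}}$.

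The main obstacle I anticipate is precisely the reduction in Paragraph~2. Unlike the one-dimensional case, where the majority constraint forces the worst median to the \emph{center} (leaving accommodated agents only at distance $\tfrac12$), here I must rigorously verify two coupled facts: that corner-supported instances are genuinely worst, and that a corner can simultaneously be the coordinate-wise median and lie at distance exactly $1$ from the accommodated cluster. Ruling out interior medians is the crux, since from the center the farthest corner is only at distance $\frac{\sqrt 2}{2}$, and it is exactly the gain from $\frac{\sqrt 2}{2}$ to $1$ — available only at a corner and only because the axes decouple — that produces the sharper constant $\frac{\sqrt 2}{\sqrt 2-1}$ rather than a one-dimensional-style ratio.
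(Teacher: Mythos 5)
Your proposal is correct and takes essentially the same route as the paper: your worst-case instance (one agent at a corner that becomes the coordinate-wise median, with clusters of $\floor{\frac{n}{2}}$ agents at the two adjacent corners, each at distance $1$) is, up to a reflection of the square, exactly the paper's instance ($\floor{\frac{n}{2}}$ agents at $(0,0)$ and $(1,1)$, one agent and the median at $(0,1)$), and your two-case computation of $SW_{opt}$ reproduces the paper's formulas \eqref{eq:thm14:1} and \eqref{eq:thm14:2}. The reduction you flag as the main outstanding obstacle is handled in the paper at a comparable level of informality (a four-quadrant counting argument showing the median balances the diagonal regions, then sliding the facility onto the line $x_1+x_2=1$ and to a corner), so no idea is missing relative to the paper's own proof; only your side remark that the two expressions coincide at $k_1=\floor{\frac{n}{2}}$ is inaccurate (their numerators differ by $\sqrt{2}-1$), but nothing in the argument rests on it.
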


\begin{proof}
    Let us consider an instance $\vec x=(x^{(1)},\dots, x^{(n)})\in ([0,1]^2)^n$. 
    For the sake of simplicity, we assume $n$ to be odd, so that the median along each coordinate is uniquely determined.
    Since all the coordinate percentile mechanisms of mechanism $M$ are AIO, the output of the mechanisms places both the facilities at the same location, which we denote with $y=(y_1,y_2)\in[0,2]^2$.
    Without loss of generality, let us assume that $0<y_1,y_2<1$ and that at most one agent is located at $y$.
    We then have that the set $[0,1]^2$ is divided into four regions $A_1=\{x\in[0,1]^2\;\;\text{s.t.}\; x_1\le y_1,\, x_2\le y_2\}$,  $A_2=\{x\in[0,1]^2\;\;\text{s.t.}\; x_1\le y_1,\, x_2> y_2\}$,  $A_3=\{x\in[0,1]^2\;\;\text{s.t.}\; x_1 > y_1,\, x_2\le y_2\}$, and  $A_4=\{x\in[0,1]^2\;\;\text{s.t.}\; x_1 > y_1,\, x_2 > y_2\}$ (see Figure \ref{fig:As}).
    We denote with $\mathcal{A}_i$ the set containing the agents positions in $A_i$, that is $\mathcal{A}_i=\{x^{(j)}\;\text{s.t.}\; x^{(j)}\in A_i\}$.
    By definition of $M$, we have that $|\mathcal{A}_1|+|\mathcal{A}_3|=|\mathcal{A}_2|+|\mathcal{A}_4|$ and $|\mathcal{A}_1|+|\mathcal{A}_2|=|\mathcal{A}_3|+|\mathcal{A}_4|$, which implies $|\mathcal{A}_2|=|\mathcal{A}_3|$ and $|\mathcal{A}_1|=|\mathcal{A}_4|$.

    \begin{figure}[t]
        \centering
        \includegraphics[width=0.6\textwidth]{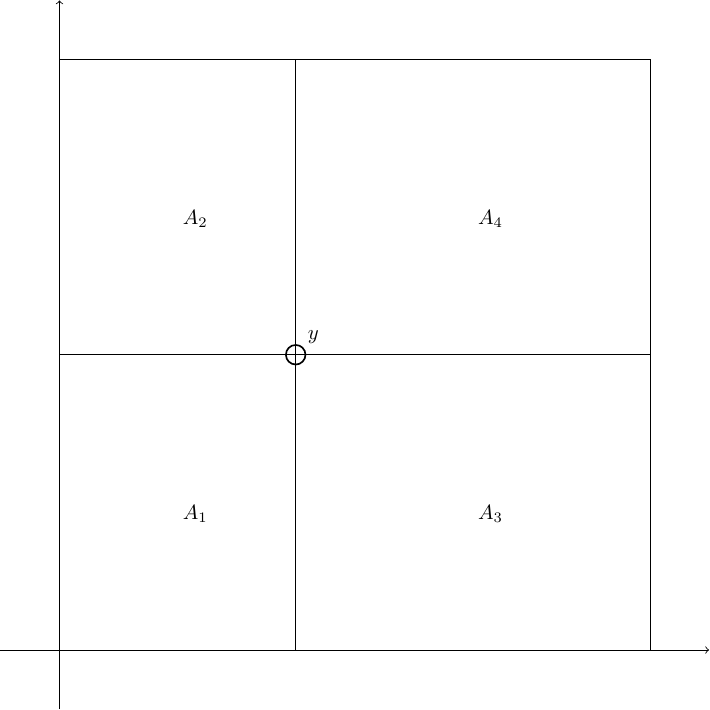}
        \caption{The way $A_1$, $A_2$, $A_3$, and $A_4$ partition of $[0,1]^2$ in Theorem \ref{thm:ar_highdim}.}
        \label{fig:As}
    \end{figure}

    Let us now assume that $|\mathcal{A}_2|=|\mathcal{A}_3|=0$, so that all the agents are located in $A_1$ and $A_4$.
    %
    %
    Since the number of agents is odd, we then have that the worst-case instance is such that $\floor{\frac{n}{2}}$ agents are located at $(0,0)$, $\floor{\frac{n}{2}}$ agents are located at $(1,1)$ and one agent is located at $y=(y_1,y_2)$.
    Notice that if $y_1+y_2\neq 1$, we can increase the approximation ratio by moving the facility (and the agent sharing the position with it) to a point of the line $x_1+x_2=1$.
    Finally, we have that the approximation ratio is maximized when either $y=(1,0)$ or $y=(0,1)$.
    Notice that if either $|\mathcal{A}_2|,|\mathcal{A}_3|\neq 0$, we can find an instance such that the Social Welfare of the mechanism is the same while the optimal Social Welfare is equal or larger than the SW of the original instance.
    Therefore, the worst-case instance for mechanism $M$ is such that $\floor{\frac{n}{2}}$ agents are located at $(0,0)$, $\floor{\frac{n}{2}}$ agents are located at $(1,1)$, and one agent is located at $(0,1)$.
    Notice that the SW induced by the mechanism on this instance is $(\sqrt{2}-1)(k_1+k_2)+1$.
    To conclude the proof we compute the approximation ratio of the mechanism, to do that we consider two cases.

    \textbf{Case $1$: $k_2,k_1\le \floor{\frac{n}{2}}$.} In this case, the optimal SW is equal to $k_1+k_2$, thus we infer \eqref{eq:thm14:1}.

    \textbf{Case $2$: $k_1 > \floor{\frac{n}{2}}$.} In this case, the optimal SW is equal to $k_1+\floor{\frac{n}{2}}+(\sqrt{2}-1)$, thus we infer \eqref{eq:thm14:2} and conclude the thesis.
\end{proof}

\subsection{Beyond \texorpdfstring{$m=2$}{Lg} and \texorpdfstring{$d=2$}{Lg}}

To conclude, we briefly comment on the cases in which either we have more than two facilities to locate, \textit{i.e.} $m>2$, or the agents inhabit a space whose dimension is higher than $2$, \textit{i.e.} $d>2$.
For higher dimensional problems, it is possible to adapt Example \ref{example2high} and Example \ref{ex_high2} to show that the only percentile mechanisms that are ES must be such that $d-1$ coordinates of the mechanisms are AIO mechanisms and the last one is either an AIO mechanism or an SBS mechanism.
When $m>2$, we have that the results presented in this section still hold: the only percentile mechanisms that are ES either place all the facilities at the same spot or they place the two facilities at two positions whose coordinate differ only in one entry.
In the first case, we have that the coordinates of the percentile mechanism are all AIO mechanisms, in the second case, the coordinates of the mechanism are all AIO except for one coordinate which is an SBS.

\section{Experimental Results}

In this section, we complement our theoretical study of the $m$-CFLP with scarce resources by running several numerical experiments.
In particular, we want to evaluate the performances of the percentile mechanisms identified by Theorem \ref{thm:bestPMPmechanism} when the inputs of the mechanisms are generated by a probability distribution.
To this extent, we consider two different metrics to assess the efficiency of a mechanism.
The first metric we consider is the Bayesian approximation ratio, which measures how close the expected SW induced by the mechanism and the expected optimal SW are when the agents' positions are samples drawn from a probability distribution $\mu$, \cite{hartline2013bayesian}.
More formally, the Bayesian approximation of $\PMp$ is 
\begin{equation}
    B_{ar}(\PMp):=\frac{\EE_{\vec X\sim \mu}[SW_{opt}(\vec X)]}{\EE_{\vec X\sim \mu}[SW_{\PMp}(\vec X)]},
\end{equation}
where $\vec X$ is a $n$ dimensional random vector distributed according to $\mu$.
The second metric we consider is the Average-Case approximation ratio, which measures the average ratio between the optimal Social Welfare and the Social Welfare induced by the mechanism when the agents' positions are samples drawn from a probability distribution $\mu$, \cite{DBLP:journals/iandc/DengGZ22}.
More formally, the Average-Case approximation ratio is defined as
\begin{equation}
    AVG_{ar}(\PMp)=\EE_{X\sim\mu}\Big[\frac{SW_{opt}(\vec x)}{SW_{\PMp}(\vec x)}\Big].
\end{equation}

It is worthy of notice that the Bayesian approximation ratio $B_{ar}$ and the Average-Case approximation ratio $AVG_{ar}$ are two different measures: $B_{ar}$ measures the percentage-loss of the Social Welfare, while the $AVG_{ar}$ measures the average Social Welfare percentage-loss.
Our aim is to show that percentile mechanisms that are optimal according to the worst-case analysis, namely $\mathcal{PM}_{best}$ (see Theorem \ref{thm:bestPMPmechanism}), are optimal or quasi-optimal with respect to both the Bayesian approximation ratio and the Average-Case ratio.
For this reason, we run two tests:
\begin{itemize}
    \item first, we assess to what extent the Bayesian approximation ratio and the Average-Case ratio depend on the percentile vector inducing the percentile mechanism when all the agents are independent and identically distributed (i.i.d.).
    In particular, we compare $\mathcal{PM}_{best}$ with $\mathcal{PM}_{(0,1)}$, that is the percentile mechanism induced by $\vec v=(0,1)$.
    \item Secondly, we assess the Bayesian approximation ratio and Average-Case ratio of $\mathcal{PM}_{best}$ when diverse agents within the populations follow distinct distributions. 
    This examination helps determine the suitability of $\mathcal{PM}_{best}$ for addressing problems involving non-identically distributed agents. 
\end{itemize}

We run our experiments for different distributions $\mu$ and different capacity vectors $\vec k$ in order to provide a comprehensive view.
Moreover, since the highest approximation ratio is attained when $m=2$, we only consider cases in which we need to place two facilities.
Throughout our experiments, we sample the agents' positions from three different probability distributions supported over $[0,1]$: 
\begin{enumerate*}[label=(\roman*)]
    \item the uniform distribution, namely $\mathcal{U}$ whose density is equal to $1$ over a $[0,1]$,

    \item the triangular distributions, namely $\mathcal{T}$, whose density is equal to $2(1-x)$ over a $[0,1]$, and
    
    \item the Beta distributions of parameters $\alpha,\beta>0$, namely $\mathcal{B}(\alpha,\beta)$ whose density is equal to $Cx^{\alpha-1}(1-x)^{\beta-1}$ over a $[0,1]$, where $C$ is a normalizing constant.
\end{enumerate*} 
We consider different capacity vectors $\vec k$. 
Specifically, we consider balanced capacities $\vec{k}=(k,k)$ and unbalanced capacities $\vec{k}=(k_1, k_2), k_1 > k_2$. For the case of balanced capacities, we consider $k=\alpha n$, where $\alpha=0.1,0.2,0.3$, and $0.4$. 
For the case of unbalanced capacities, we consider the slightly unbalanced capacities i.e. $\vec{k}=(0.4n,0.3n)$, and highly unbalanced capacities i.e. $\vec{k}=(0.6n,0.2n)$, $(0.7n,0.1n)$.
Lastly, for every instance $\vec x$, we do not compute the optimal SW, but rather an upper bound to that quantity.
Indeed, the optimal position of the facilities can be any couple of points in $[0,1]$.
Furthermore, to select the optimal facility location we must compute all the NE of every possible facility location and select the one achieving the highest SW.
For these reasons, we consider an easier to compute upper bound that is obtained by considering the maximum SW achievable when the mechanism forces agents to use a specific facility, that is
\begin{equation}
    \label{eq:futurework}
    SW_{UB}(\vec x):=\sup_{y_1,y_2\in[0,1]}\sup_{\pi\in\Pi}\sum_{i\in[n_u]}\sum_{j\in[2]}(1-|x_i-y_j|)\pi_{i,j}
\end{equation}
where $\Pi$ is the set containing all the matching $\pi\subset [n]\times [2]$ such that \begin{enumerate*}[label=(\roman*)]
    \item every $i\in [n]$ has degree that is equal or lower than $1$  and 
    \item every $j\in[2]$ has degree equal to $k_j$.
\end{enumerate*}
This quantity is easy to compute, as the set of optimal positions for the facilities coincides with the positions of the agents.

\subsection{Experiment results -- Comparing different percentile mechanisms.} 
In this experiment we want to assess to what degree the percentile vector affects the performances of the percentile mechanisms it induces.
For this reason, we compare the empiric Bayesian approximation ratio and the Average-Case approximation ratio of the best percentile mechanism $\mathcal{PM}_{best}$ (identified via Theorem \ref{thm:bestPMPmechanism}), with the respective ratios of a mechanism that places the facilities at the extreme agents' positions. That is, the percentile mechanism induced by the vector $\vec w=(0,1)$, namely $\mathcal{PM}_{(0,1)}$.
We first consider the case of balanced capacities $\vec k=(k,k)$.
Figure~\ref{fig:battleofvectors} shows the average and the $95\%$ confidence interval (CI) of Bayesian approximation ratio for $n=10,20,30,40,50$ when the agents are distributed according to $\mathcal{U}$ and $\mathcal{T}$.
In Figure \ref{fig:battleofvectors_avg} we plot the results for the Average-Case Ratio under the same settings.
Each average is computed over $500$ instances.
We observe that the percentile mechanism identified in Theorem \ref{thm:bestPMPmechanism} achieves the a Bayesian approximation ratio that is lower than the one obtained by $\mathcal{PM}_{(0,1)}$ for every value of $n$. 
Moreover, the Bayesian approximation ratio of $\mathcal{PM}_{(0,1)}$ consistently increases as the number of agents increases, while the Bayesian approximation ratio of $\mathcal{PM}_{best}$ remains constant regardless of $n$.
Remarkably, these comments hold true even for th Average-Case approximation ratio, showing that the best percentile mechanism is quasi-optimal for both the metrics under study.

\begin{figure}[t]
  \centering

    \begin{subfigure}{0.32\linewidth}
    \includegraphics[width=\linewidth]{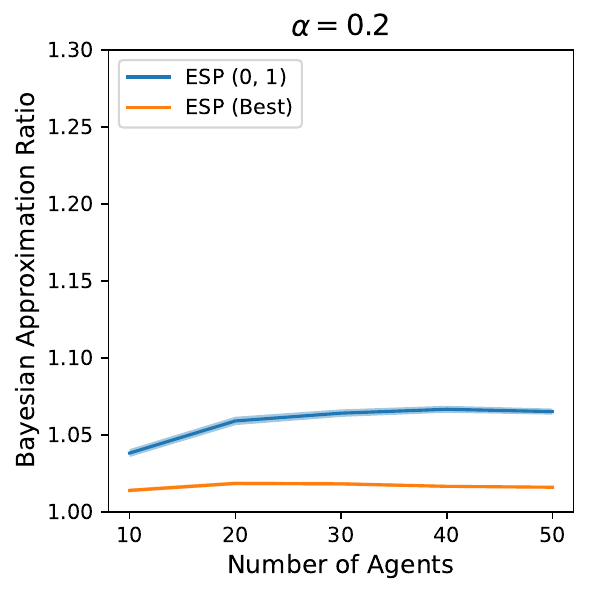}
  \end{subfigure}
  \begin{subfigure}{0.32\linewidth}
    \includegraphics[width=\linewidth]{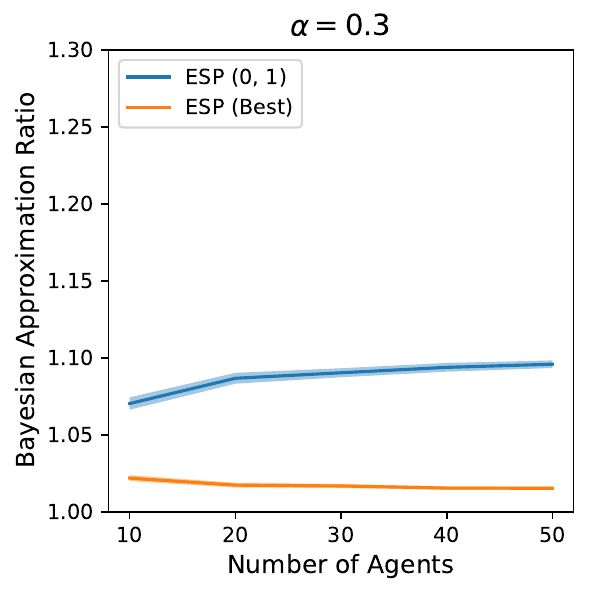}
  \end{subfigure}
  \begin{subfigure}{0.32\linewidth}
    \includegraphics[width=\linewidth]{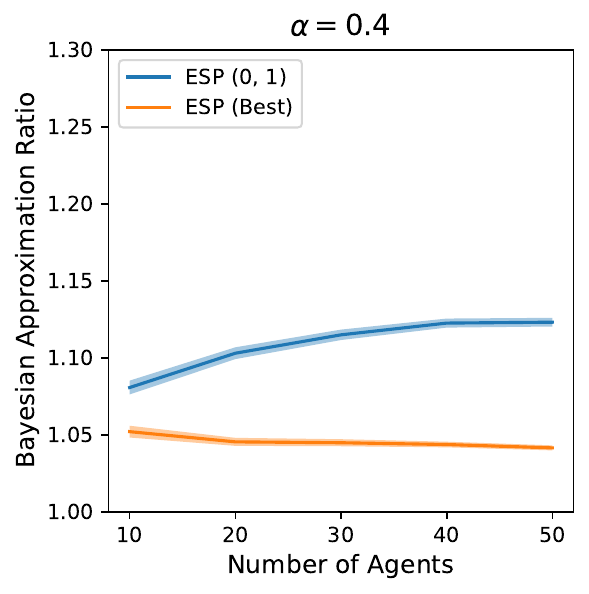}
  \end{subfigure}

    \medskip 

  \begin{subfigure}{0.32\linewidth}
    \includegraphics[width=\linewidth]{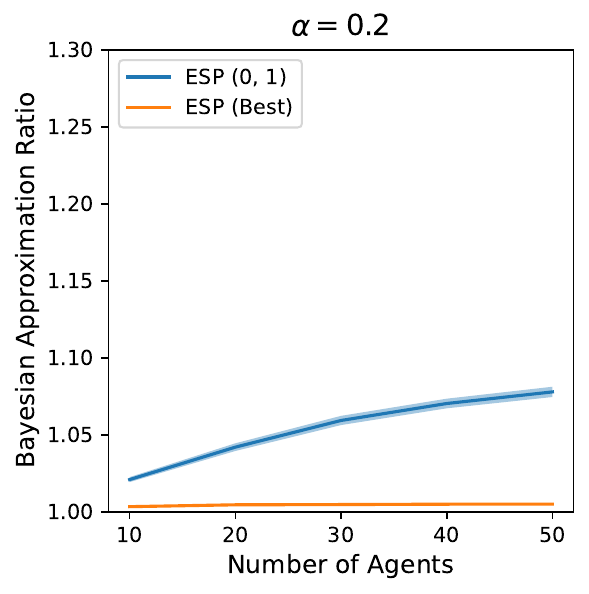}
  \end{subfigure}
  \begin{subfigure}{0.32\linewidth}
    \includegraphics[width=\linewidth]{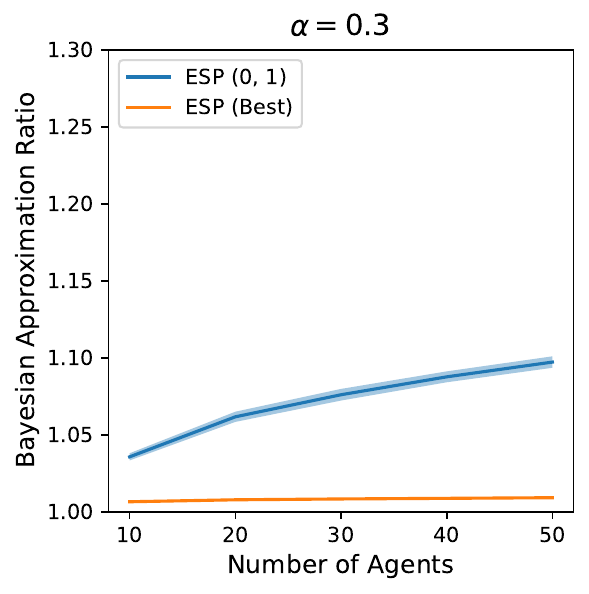}
  \end{subfigure}
  \begin{subfigure}{0.32\linewidth}
    \includegraphics[width=\linewidth]{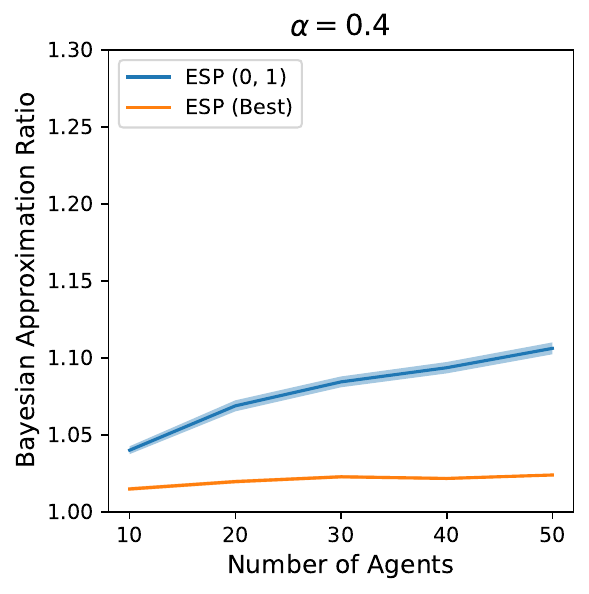}
  \end{subfigure}

  \medskip 

  \begin{subfigure}{0.32\linewidth}
    \includegraphics[width=\linewidth]{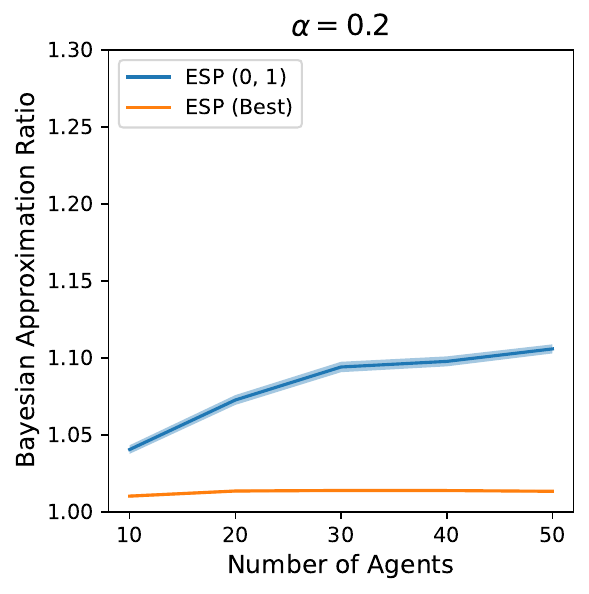}
  \end{subfigure}
  \begin{subfigure}{0.32\linewidth}
    \includegraphics[width=\linewidth]{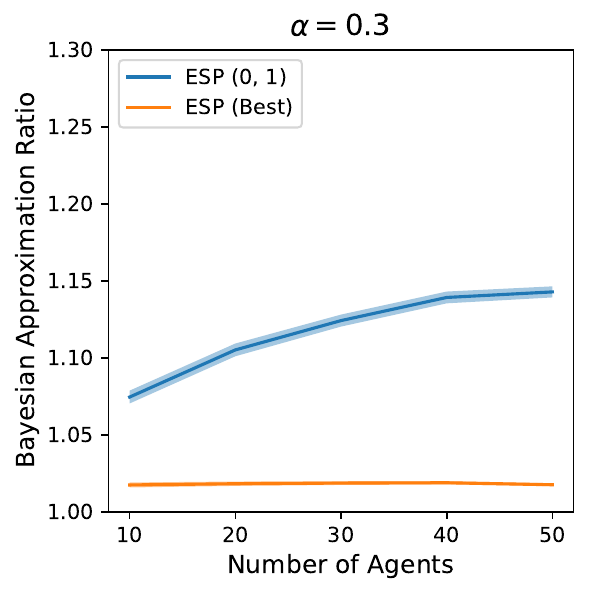}
  \end{subfigure}
  \begin{subfigure}{0.32\linewidth}
    \includegraphics[width=\linewidth]{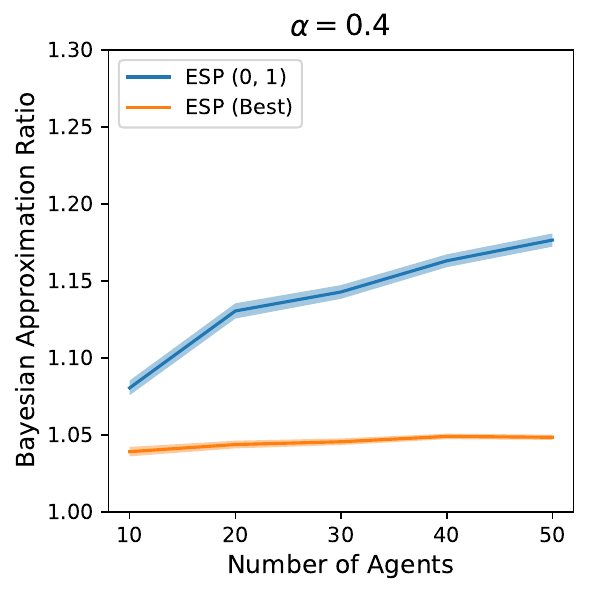}
  \end{subfigure}

  \caption{The Bayesian approximation ratio of $\mathcal{PM}_{best}$ and $\mathcal{PM}_{\vec w}$ in the balanced case, i.e. $k_1=k_2$ for $n=10,20,\dots,50$.
    Every column contains the results for different vector $\vec k$.
    %
    %
    In the first row, we consider a uniform distribution.
    In the second row a symmetric Beta, that is $\mathcal{B}(5,5)$.
    The third row contains the results for the triangular distribution $\mathcal{T}$. 
    %
    \label{fig:battleofvectors}}
\end{figure}

We now consider the case in which the facilities have difference capacities.
Figure~\ref{fig:battleofvectors_asym} shows the average and the $95\%$ confidence interval (CI) of Bayesian approximation ratio for $n=10,20,30,40,50$ when the capacities of the two facilities are not balanced, and the agents are distributed according to $\mathcal{T}$.
In Figure \ref{fig:battleofvectors_asym_avg} we plot the results for the Average-Case Ratio under the same settings.
More specifically, we consider $\vec k=(0.4n,0.3n)$ and $(0.7n,0.1n)$.
We observe that the percentile mechanism identified by Theorem \ref{thm:bestPMPmechanism} has a lower and more stable Bayesian approximation ratio, highlighting how choosing a percentile vector affects the average performances of the mechanism. 
The same holds for the Average-Case approxiamtion ratio.

\begin{figure}[t]
  \centering

    \begin{subfigure}{0.32\linewidth}
    \includegraphics[width=\linewidth]{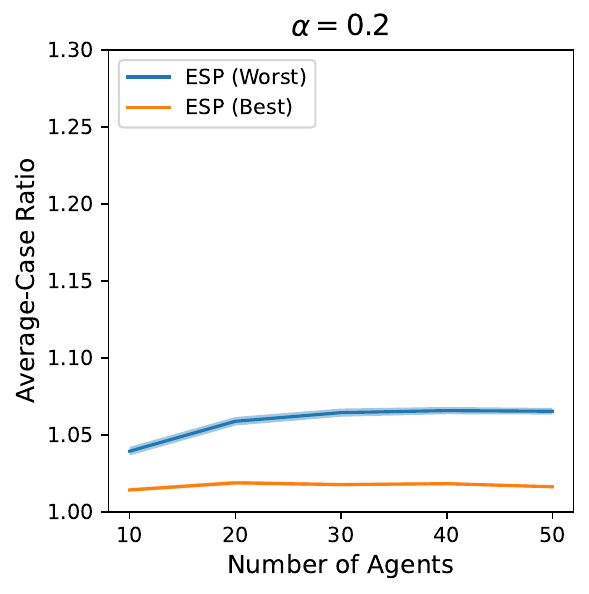}
  \end{subfigure}
  \begin{subfigure}{0.32\linewidth}
    \includegraphics[width=\linewidth]{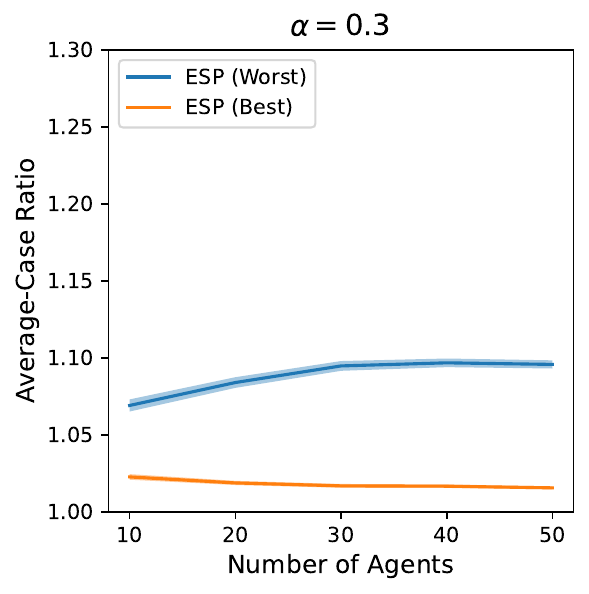}
  \end{subfigure}
  \begin{subfigure}{0.32\linewidth}
    \includegraphics[width=\linewidth]{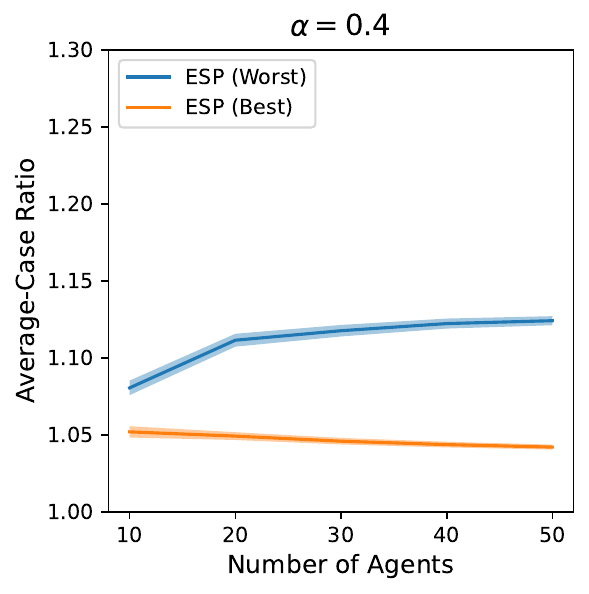}
  \end{subfigure}

    \medskip 

  \begin{subfigure}{0.32\linewidth}
    \includegraphics[width=\linewidth]{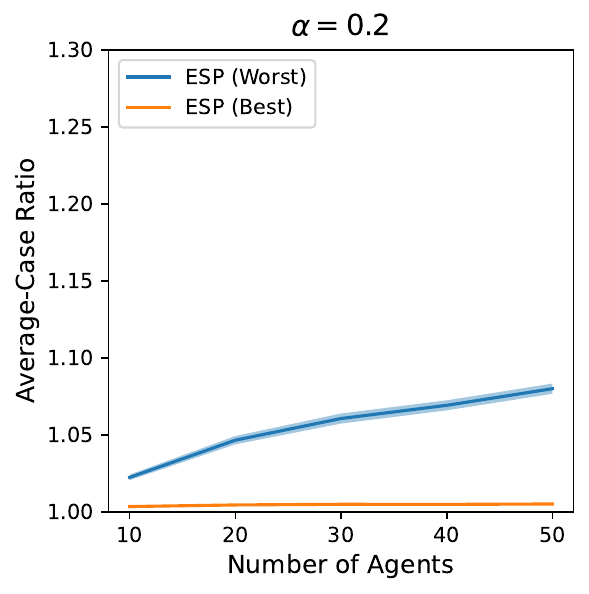}
  \end{subfigure}
  \begin{subfigure}{0.32\linewidth}
    \includegraphics[width=\linewidth]{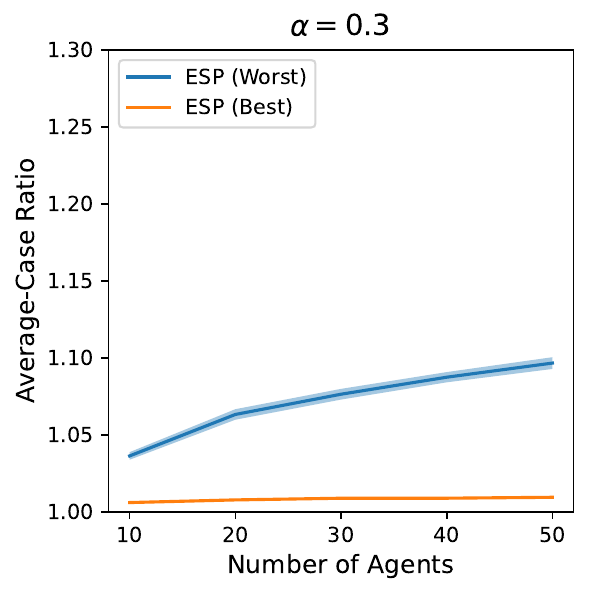}
  \end{subfigure}
  \begin{subfigure}{0.32\linewidth}
    \includegraphics[width=\linewidth]{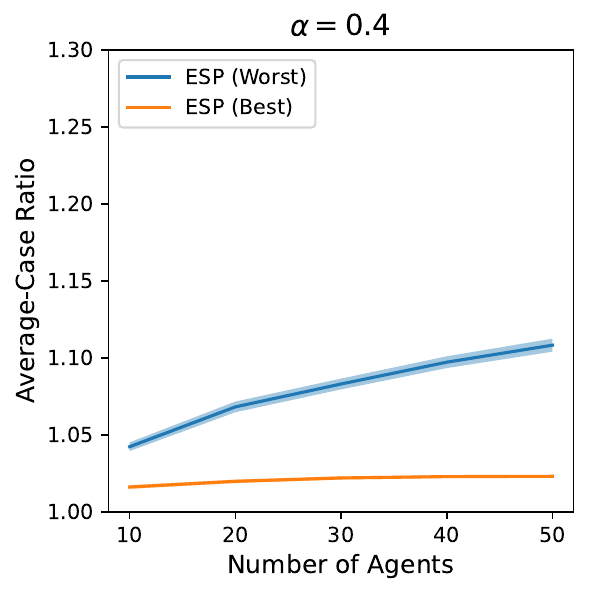}
  \end{subfigure}

  \medskip 

  \begin{subfigure}{0.32\linewidth}
    \includegraphics[width=\linewidth]{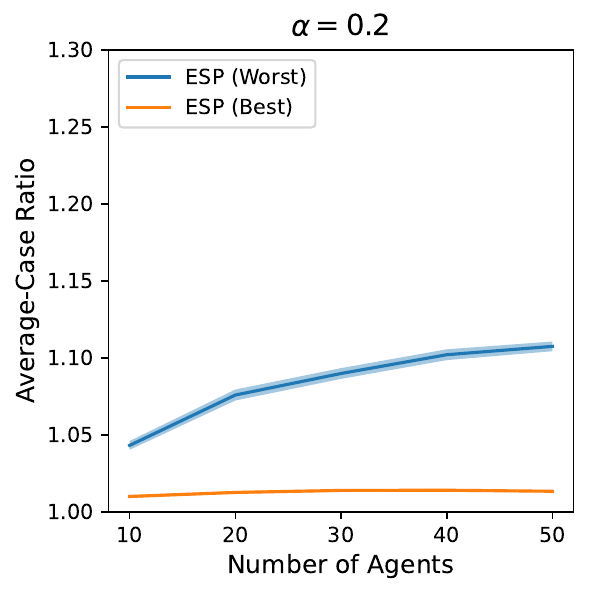}
  \end{subfigure}
  \begin{subfigure}{0.32\linewidth}
    \includegraphics[width=\linewidth]{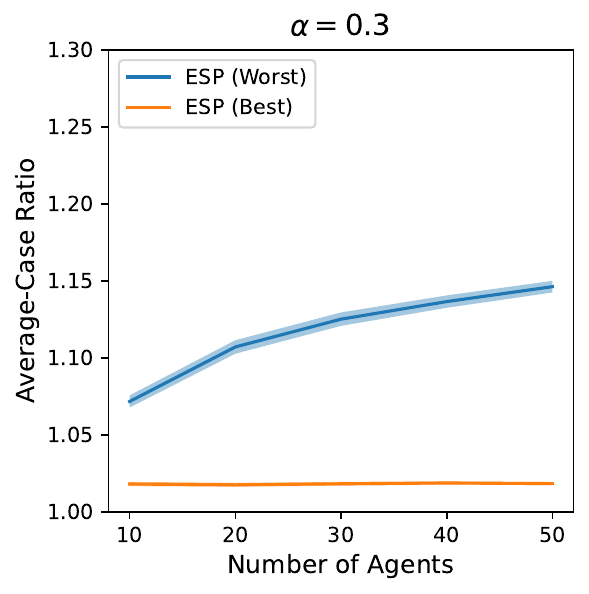}
  \end{subfigure}
  \begin{subfigure}{0.32\linewidth}
    \includegraphics[width=\linewidth]{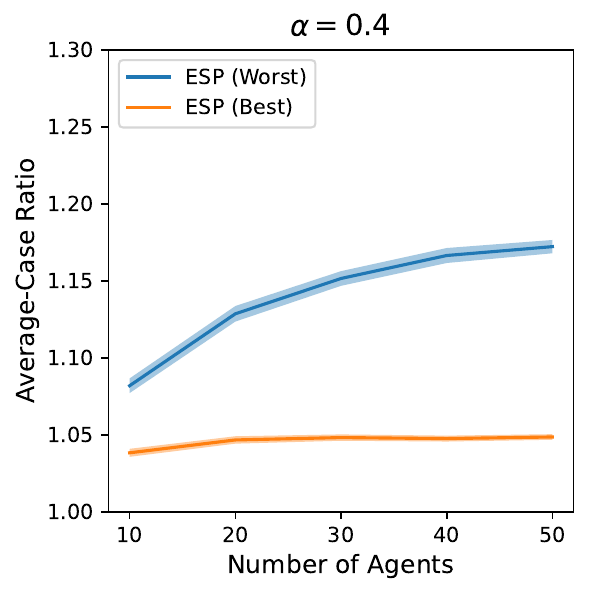}
  \end{subfigure}

  \caption{The Average-Case approximation ratio of $\mathcal{PM}_{best}$ and $\mathcal{PM}_{\vec w}$ in the balanced case, i.e. $k_1=k_2$ for $n=10,20,\dots,50$.
    Every column contains the results for different vector $\vec k$.
    In the first row, we consider a uniform distribution.
    In the second row a symmetric Beta, that is $\mathcal{B}(5,5)$.
    The third row contains the results for the triangular distribution $\mathcal{T}$. 
    \label{fig:battleofvectors_avg}}
\end{figure}



  

\begin{figure}
  \centering

  \begin{subfigure}{0.31\linewidth}
    \includegraphics[width=\linewidth]{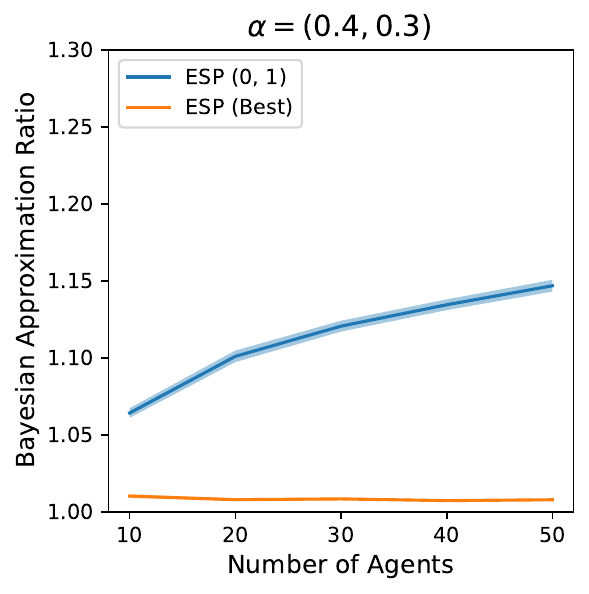}
  \end{subfigure}
  \begin{subfigure}{0.31\linewidth}
    \includegraphics[width=\linewidth]{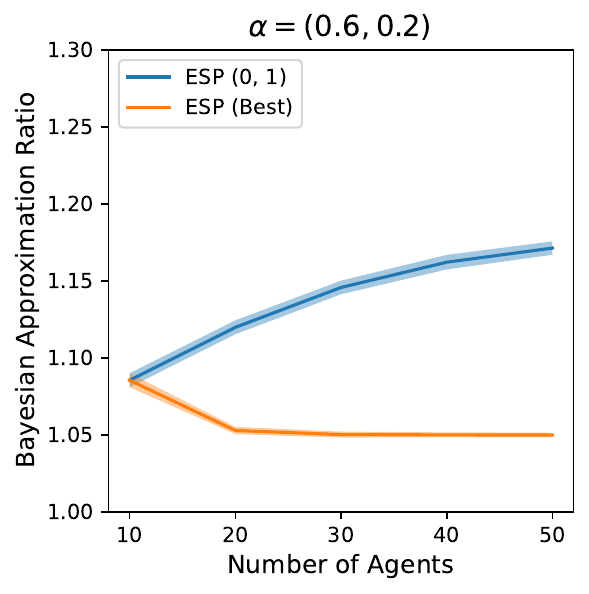}
  \end{subfigure}
  \begin{subfigure}{0.31\linewidth}
    \includegraphics[width=\linewidth]{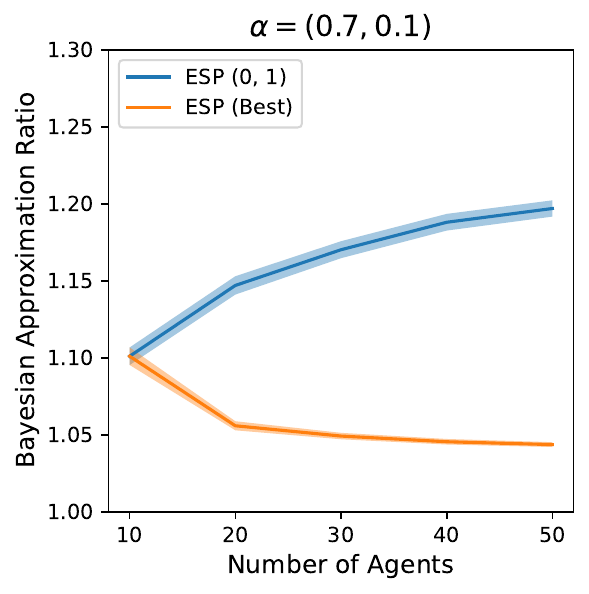}
  \end{subfigure}

  \medskip 

    \begin{subfigure}{0.31\linewidth}
    \includegraphics[width=\linewidth]{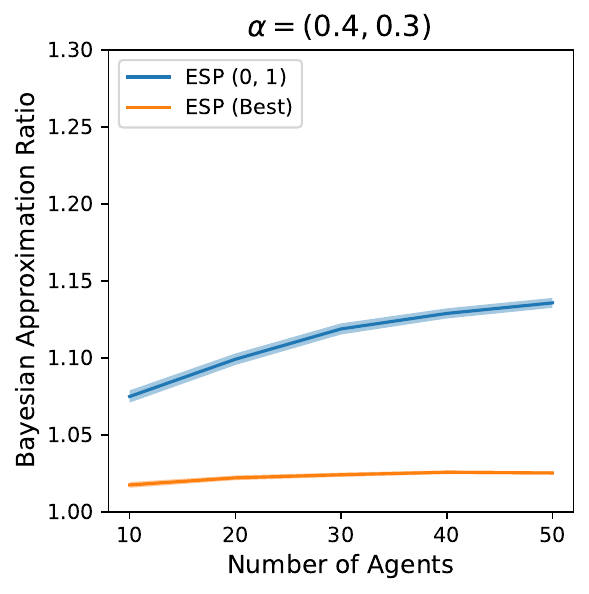}
  \end{subfigure}
  \begin{subfigure}{0.31\linewidth}
    \includegraphics[width=\linewidth]{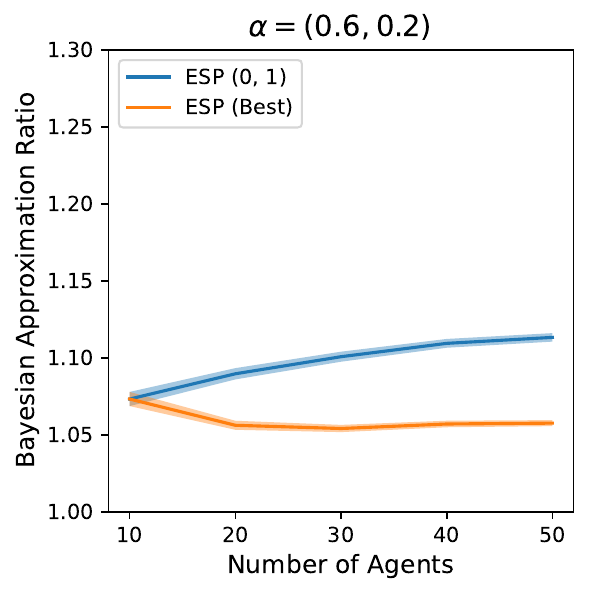}
  \end{subfigure}
  \begin{subfigure}{0.31\linewidth}
    \includegraphics[width=\linewidth]{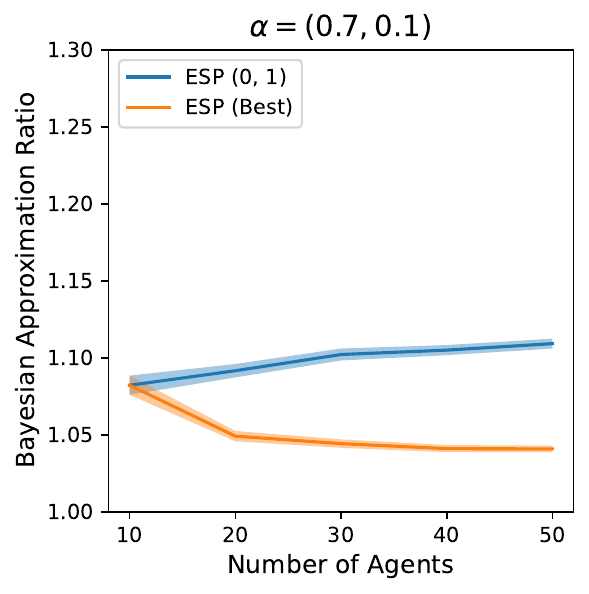}
  \end{subfigure}

  \medskip


  \begin{subfigure}{0.31\linewidth}
    \includegraphics[width=\linewidth]{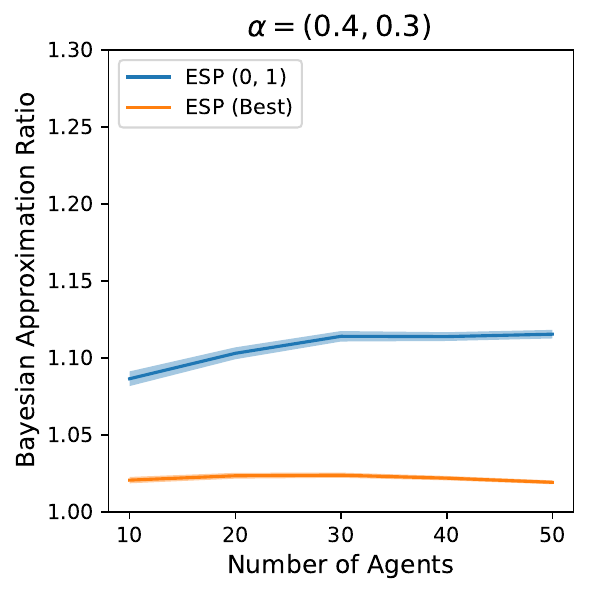}
  \end{subfigure}
  \begin{subfigure}{0.31\linewidth}
    \includegraphics[width=\linewidth]{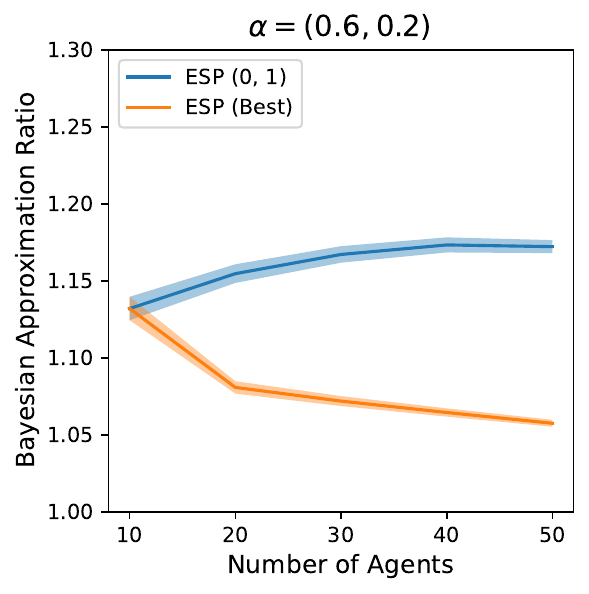}
  \end{subfigure}
  \begin{subfigure}{0.31\linewidth}
    \includegraphics[width=\linewidth]{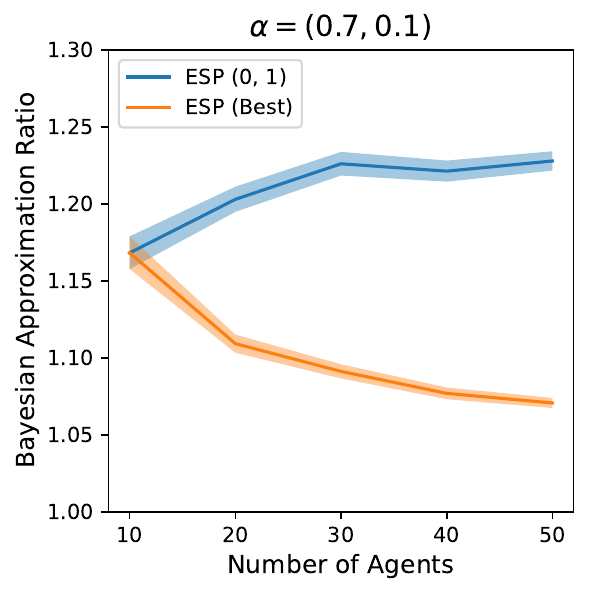}
  \end{subfigure}
  
\caption{The Bayesian approximation ratio of $\mathcal{PM}_{best}$ and $\mathcal{PM}_{\vec w}$ when the agents are distributed according to $\mathcal{T}$ and the facilities are unbalanced, i.e. $k_1=\alpha_1 n\neq k_2=\alpha_2 n$ for $n=10,20,\dots,50$.
Every column contains the results for different vector $\vec k$.
The first row contains the results for a symmetric Beta distribution, that is $\mathcal{B}(5,5)$.
The second row contains the results for the triangular distribution $\mathcal{T}$. 
The last row contains the results for the Uniform distribution $\mathcal{U}$.\label{fig:battleofvectors_asym}}
\end{figure}

\begin{figure}
  \centering

  \begin{subfigure}{0.31\linewidth}
    \includegraphics[width=\linewidth]{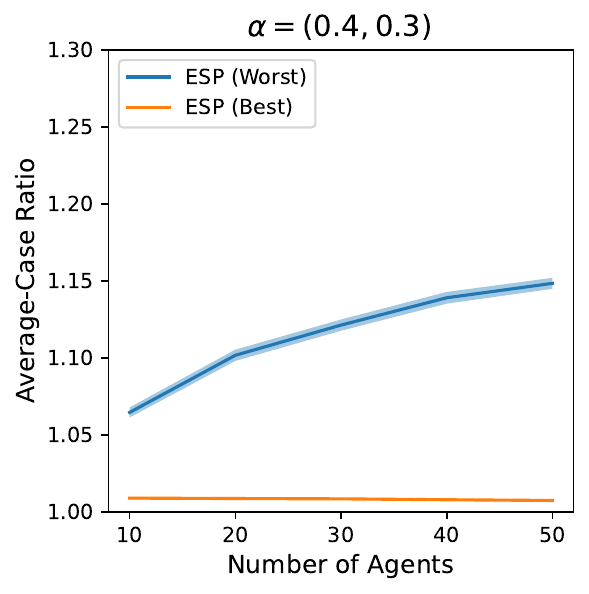}
  \end{subfigure}
  \begin{subfigure}{0.31\linewidth}
    \includegraphics[width=\linewidth]{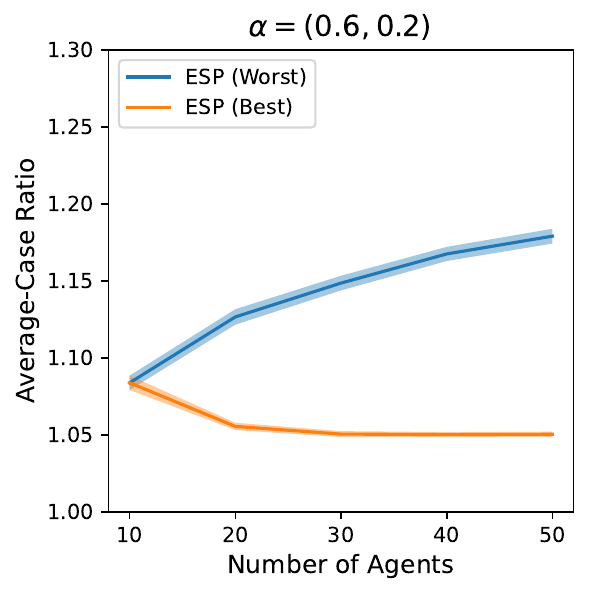}
  \end{subfigure}
  \begin{subfigure}{0.31\linewidth}
    \includegraphics[width=\linewidth]{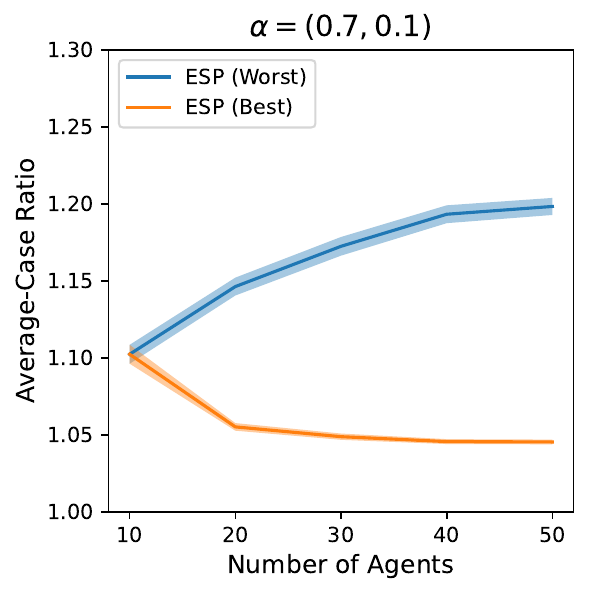}
  \end{subfigure}

  \medskip 

    \begin{subfigure}{0.31\linewidth}
    \includegraphics[width=\linewidth]{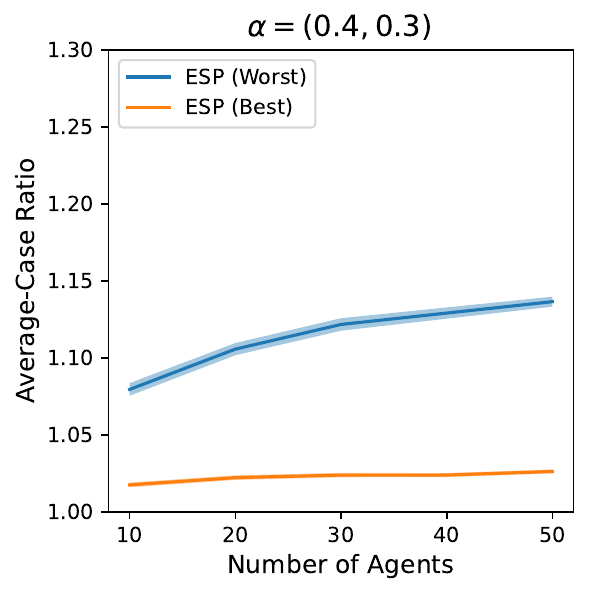}
  \end{subfigure}
  \begin{subfigure}{0.31\linewidth}
    \includegraphics[width=\linewidth]{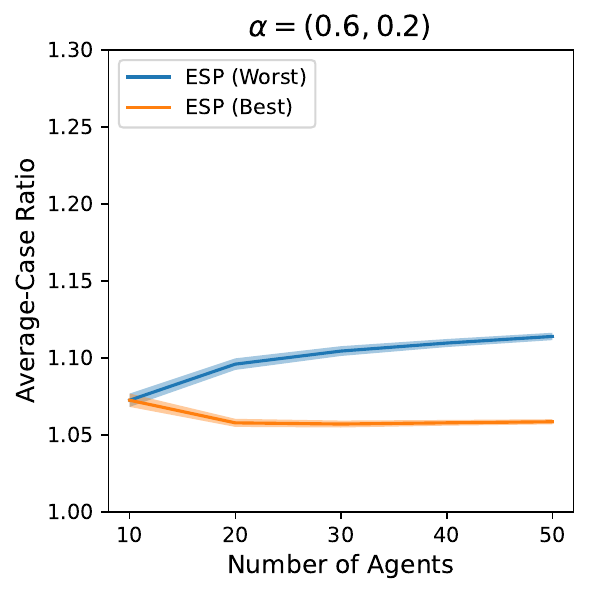}
  \end{subfigure}
  \begin{subfigure}{0.31\linewidth}
    \includegraphics[width=\linewidth]{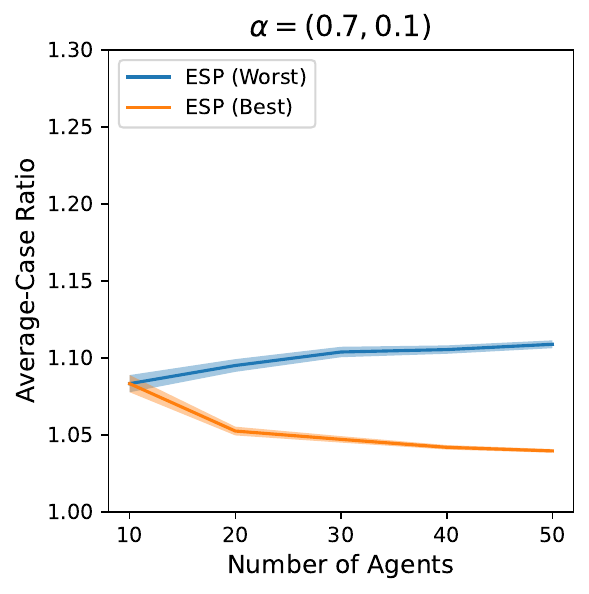}
  \end{subfigure}

  \medskip


  \begin{subfigure}{0.31\linewidth}
    \includegraphics[width=\linewidth]{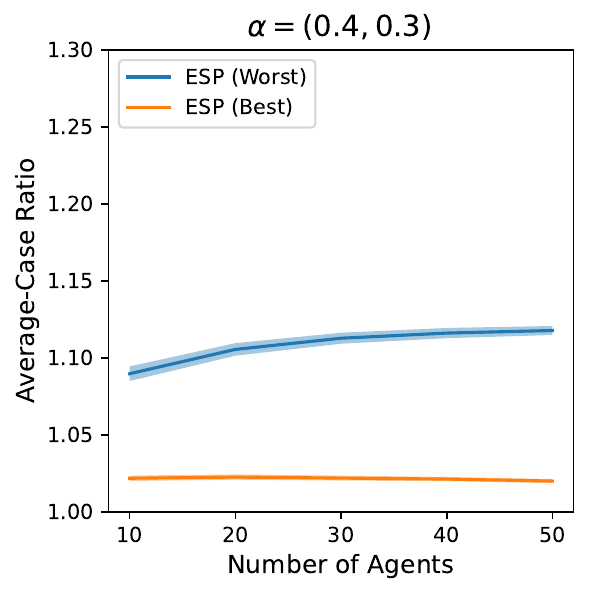}
  \end{subfigure}
  \begin{subfigure}{0.31\linewidth}
    \includegraphics[width=\linewidth]{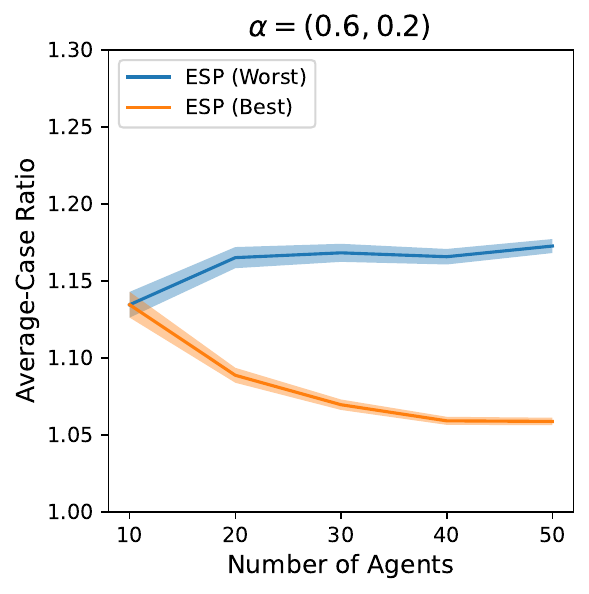}
  \end{subfigure}
  \begin{subfigure}{0.31\linewidth}
    \includegraphics[width=\linewidth]{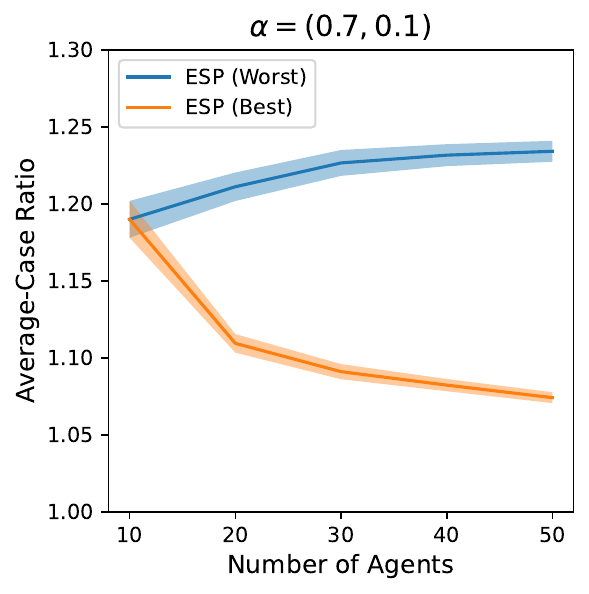}
  \end{subfigure}
  
  \caption{The Average-Case approximation ratio of $\mathcal{PM}_{best}$ and $\mathcal{PM}_{\vec w}$ when the agents are distributed according to $\mathcal{T}$ and the facilities are unbalanced, i.e. $k_1=\alpha_1 n\neq k_2=\alpha_2 n$ for $n=10,20,\dots,50$.
Every column contains the results for different vector $\vec k$.
The first row contains the results for a symmetric Beta distribution, that is $\mathcal{B}(5,5)$.
The second row contains the results for the triangular distribution $\mathcal{T}$. 
The last row contains the results for the Uniform distribution $\mathcal{U}$.\label{fig:battleofvectors_asym_avg}}
\end{figure}



\begin{figure}
  \centering
  
  \begin{subfigure}{0.31\linewidth}
    \includegraphics[width=\linewidth]{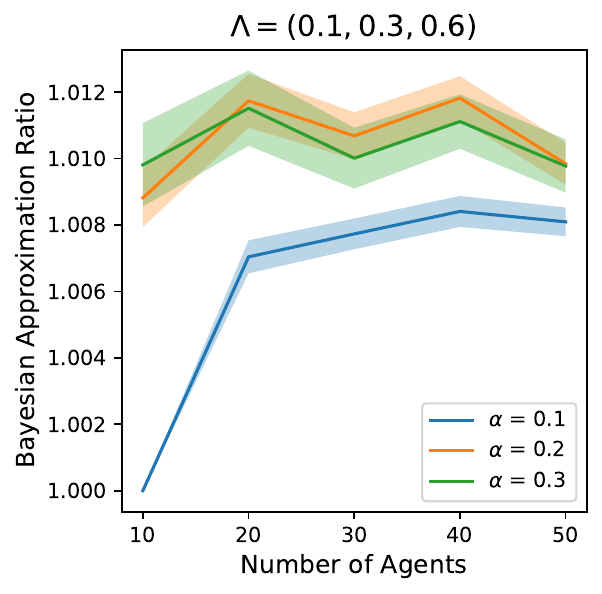}
  \end{subfigure}
  \begin{subfigure}{0.31\linewidth}
    \includegraphics[width=\linewidth]{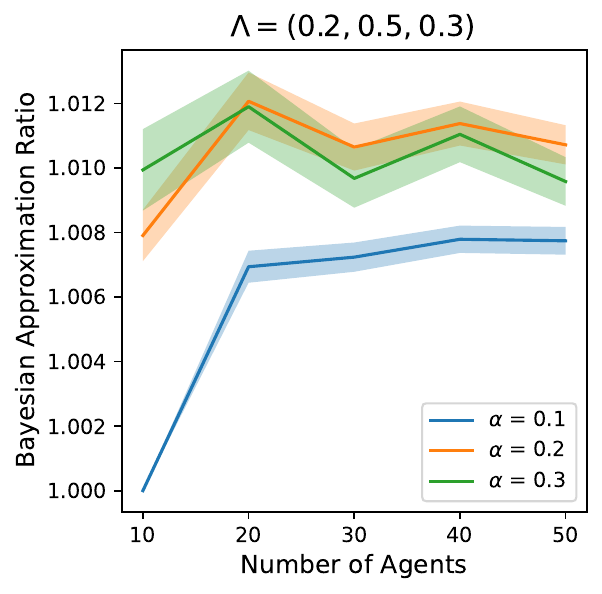}
  \end{subfigure}
  \begin{subfigure}{0.31\linewidth}
    \includegraphics[width=\linewidth]{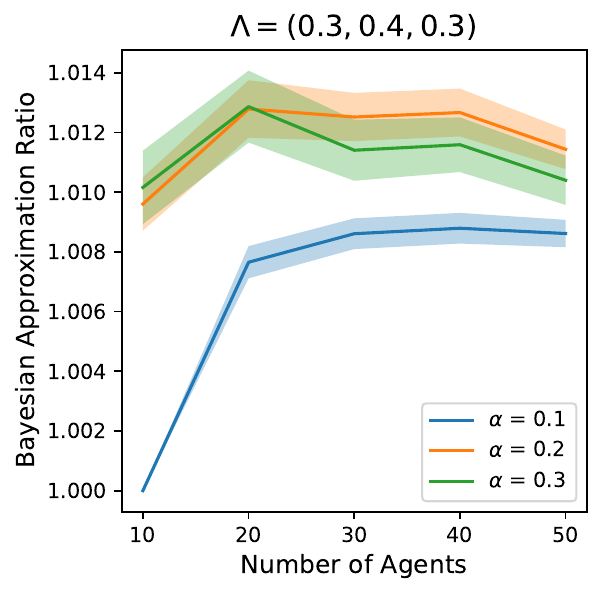}
  \end{subfigure}
  \medskip

    \begin{subfigure}{0.31\linewidth}
    \includegraphics[width=\linewidth]{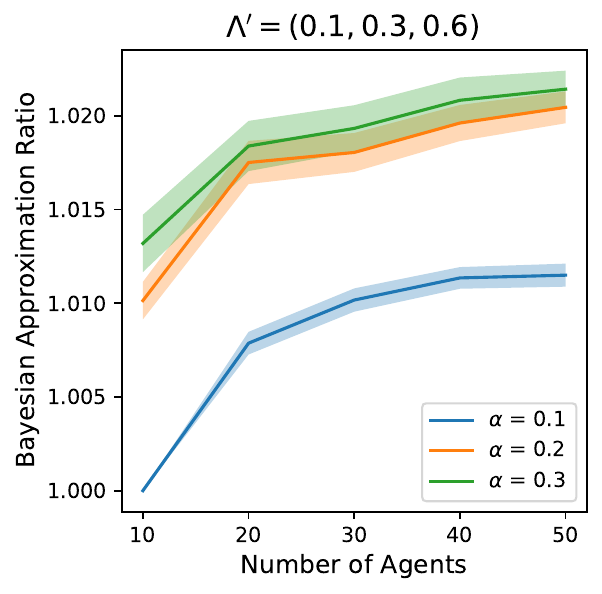}
  \end{subfigure}
  \begin{subfigure}{0.31\linewidth}
    \includegraphics[width=\linewidth]{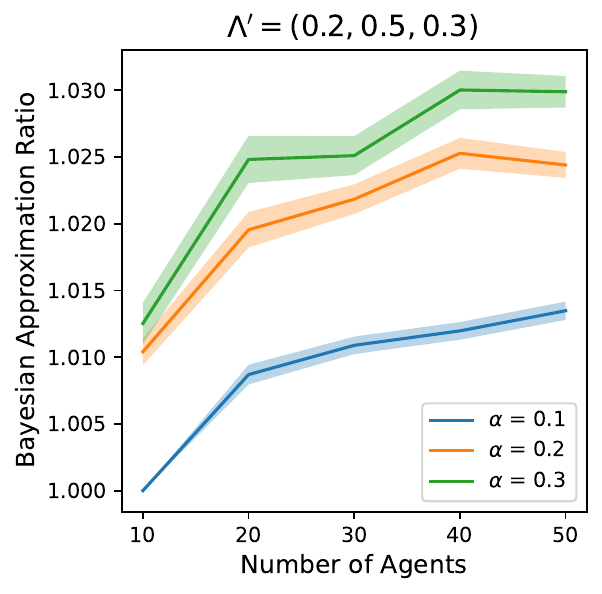}
  \end{subfigure}
  \begin{subfigure}{0.31\linewidth}
    \includegraphics[width=\linewidth]{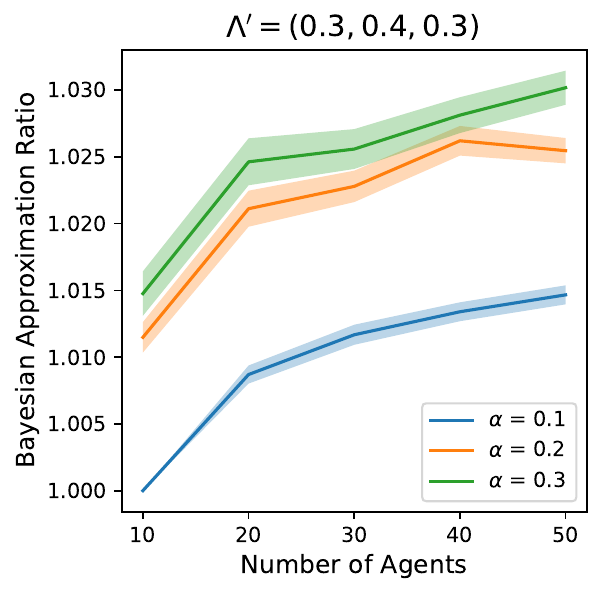}
  \end{subfigure}
\caption{The Bayesian approximation ratio of $\mathcal{PM}_{best}$ for a population non i.d.. The capacities of the facilities are balanced, i.e. $k_1=k_2=\alpha n$ with $\alpha=0.1,0.2,0.3$, and for $n=10,20,\dots,50$.
In the first raw, the Beta distribution is symmetric, in particular $\mathcal{B}(5,5)$, in the second raw the Beta distribution is asymmetric, in particular $\mathcal{B}(1,9)$.
Every column contains the results for different $\Lambda$.\label{fig:battleofvectors_firsttest}}
\end{figure}

  \begin{figure}
  \centering
  
  \begin{subfigure}{0.31\linewidth}
    \includegraphics[width=\linewidth]{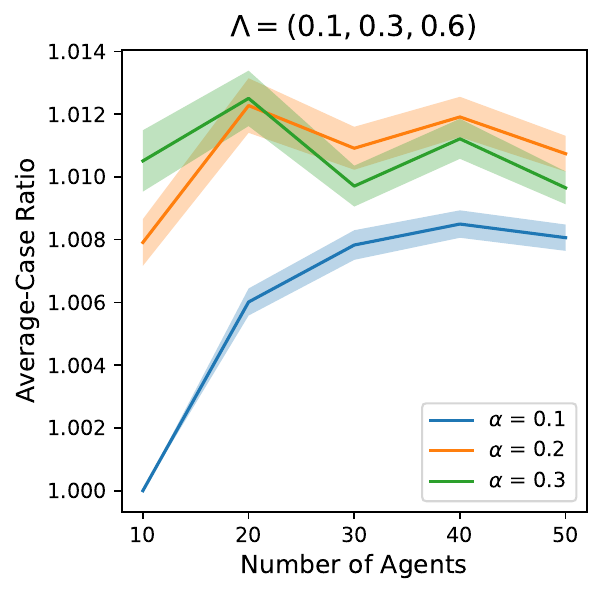}
  \end{subfigure}
  \begin{subfigure}{0.31\linewidth}
    \includegraphics[width=\linewidth]{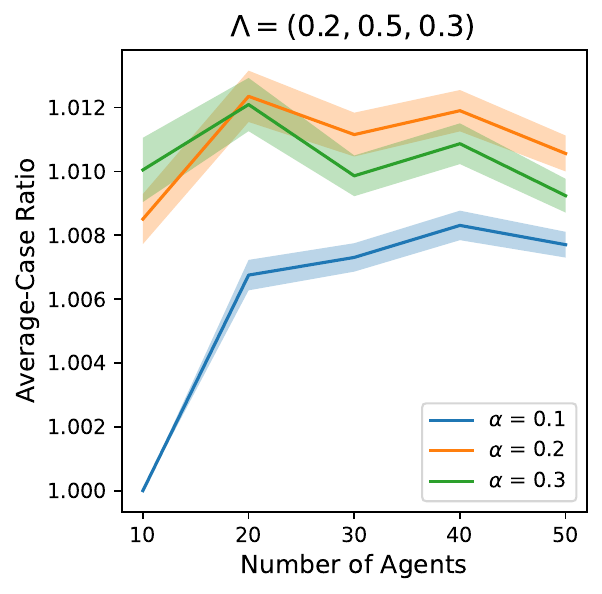}
  \end{subfigure}
  \begin{subfigure}{0.31\linewidth}
    \includegraphics[width=\linewidth]{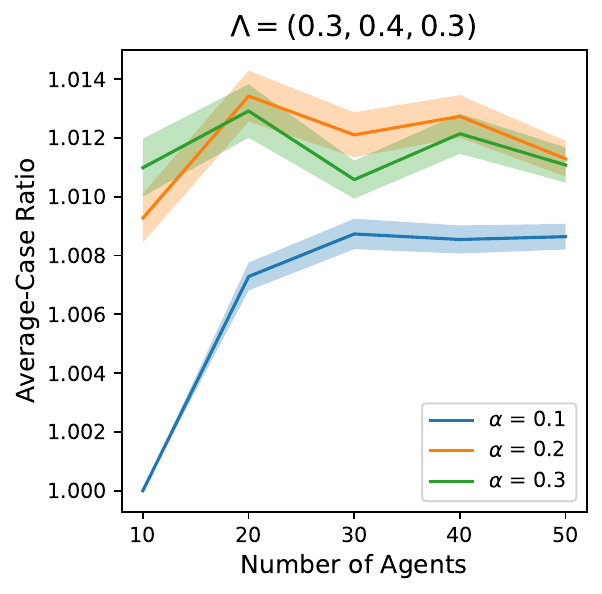}
  \end{subfigure}
  \medskip

    \begin{subfigure}{0.31\linewidth}
    \includegraphics[width=\linewidth]{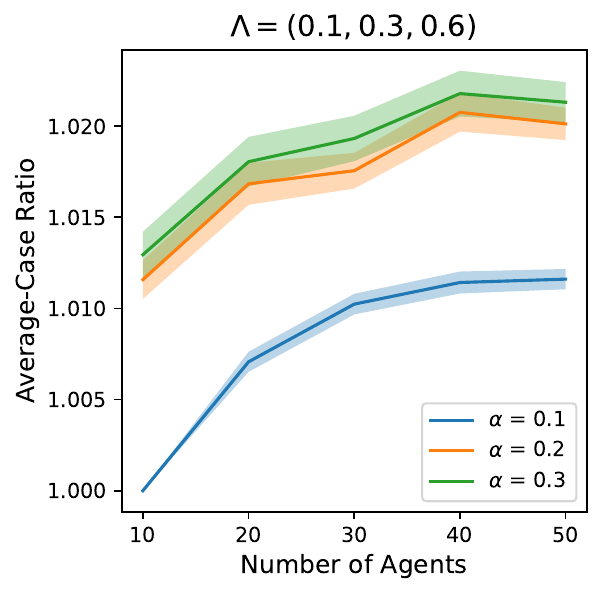}
  \end{subfigure}
  \begin{subfigure}{0.31\linewidth}
    \includegraphics[width=\linewidth]{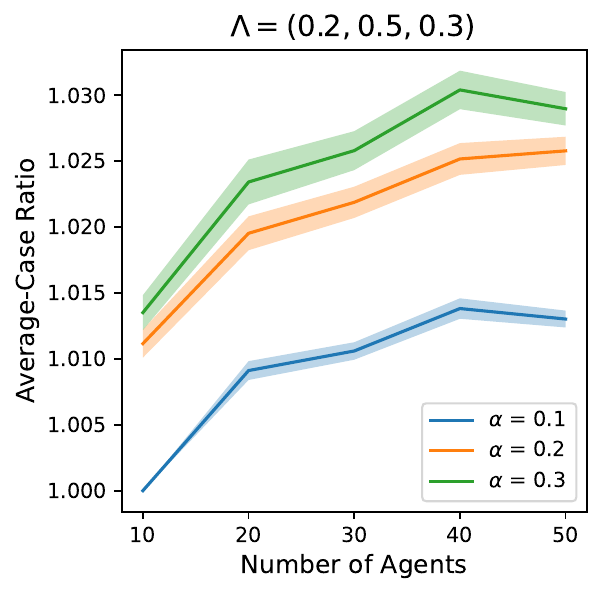}
  \end{subfigure}
  \begin{subfigure}{0.31\linewidth}
    \includegraphics[width=\linewidth]{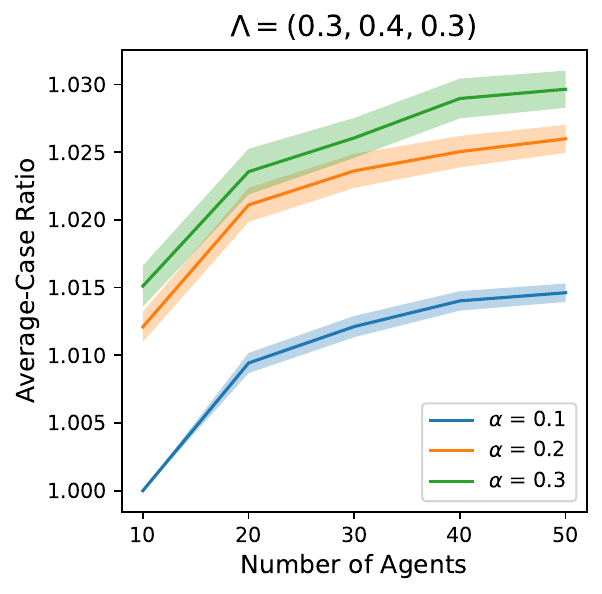}
  \end{subfigure}
\caption{The Average-Case approximation ratio of $\mathcal{PM}_{best}$ for a population non i.d.. The capacities of the facilities are balanced, i.e. $k_1=k_2=\alpha n$ with $\alpha=0.1,0.2,0.3$, and for $n=10,20,\dots,50$.
In the first raw, the Beta distribution is symmetric, in particular $\mathcal{B}(5,5)$, in the second raw the Beta distribution is asymmetric, in particular $\mathcal{B}(1,9)$.
Every column contains the results for different $\Lambda$.\label{fig:battleofvectors_firsttest_avg}}
  \end{figure}



\subsection{Experiment results -- Bayesian approximation ratio for non i.d. populations.} 
In this test, we empirically evaluate the Bayesian approximation ratio of the percentile mechanisms identified by Theorem \ref{thm:bestPMPmechanism} when agents are not identically distributed.
In particular, we consider the case in which each agent of the population is distributed according to $\mathcal{U}$, $\mathcal{T}$, and $\mathcal{B}(5,5)$.
Every instance is identified by the percentages of agents following each distributions, hence we set $\Lambda=(\lambda_U,\lambda_B,\lambda_T)$, where $\lambda_U=\frac{n_U}{n}$, $\lambda_B=\frac{n_B}{n}$, and $\lambda_T=\frac{n_T}{n}$, $n$ is the total number of agents, and $n_U$, $n_B$, and $n_T$ are the number of agents following the uniform, Beta, and Triangular distribution, respectively.
In Figure \ref{fig:battleofvectors_firsttest}, we report our results for different vectors $\Lambda$.
We consider the case in which the capacities of the facilities are balanced, that $k_1=k_2=\floor{\alpha n}$, where $\alpha=0.1$, $0.2$, and $0.3$.
From our experiments we observe that the percentile mechanism achieves an almost optimal Bayesian approximation ratio (peaking at $1.01$), that it is constant regardless of $n$, and that the CI is small (around $0.003$).
Our experiments confirm that the best percentile mechanism according to the worst-case analysis behave almost optimally in a Bayesian framework.
In Figure \ref{fig:battleofvectors_firsttest_avg}, we report our results for the Average-Case approximation ratio.
We observe that, while the expected approximation ratio of the mechanism is still low and close to $1$, the confidence interval in this case is tighter, which further confirms the suitability of the percentile mechanism to handle instances in which the population is non identically distributed.

\section{Conclusion and Future Works}

In this paper, we studied the mechanism design aspects of the $m$-CFLP under the assumption that the total capacity of the facility is smaller than the number of agents to accommodate.
We assume that, after the position of the facility is fixed, the agents compete in a First-Come-First-Served (FCFS) game to gain access to the facilities.
Our main contribution consist in studying the case in which $m\ge 2$, which was left as an open questions in the paper introducing the problem \cite{aziz2020capacity}.
Our approach emphasizes the significance of absolutely truthful mechanisms, which prevent agents from benefiting regardless of their strategy in the FCFS game, and ES mechanisms, whose SW remains independent of the FCFS game equilibrium.
We show that the percentile mechanisms \cite{sui2013analysis} are absolutely truthful and characterize under which conditions they are ES.
We show that ES percentile mechanisms achieve bounded approximation ratio for every $m>1$ and characterize the best percentile vector as a function of $n$, $k_1$, and $k_2$.
Interestingly, if $n>(2k-1)m$, the approximation ratio of the best percentile mechanism $1+\frac{1}{2m-1}$, i.e. is asymptotically optimal with respect to the number of facilities.
Lastly, we run extensive numerical results to study the performances of the percentile mechanism from a Bayesian perspective.
In our future works, we aim at extending this problem to the case in which the agents are located on a generic graph.
Another interesting research avenue is to study how changing the preferences of the agents affects the performances of the mechanisms.
Finally, it would be interesting to study the asymptotic Bayesian approximation ratio of ES percentile mechanisms.
In that regard, it is worth to notice that the upper bound we presented in \eqref{eq:futurework} can be connected to the truncated Wasserstein Distance presented in \cite{Pele2009,auricchio2019maximum}, which enables to apply the same techniques developed in \cite{auricchio2023extended}.

\subsubsection*{Acknowledgements}
Jie Zhang was partially supported by a Leverhulme Trust Research Project Grant (2021 -- 2024) and the EPSRC grant (EP/W014912/1).
\bibliography{sn-bibliography}

\end{document}